\theoremstyle{plain}
\newtheorem{thm}{\protect\theoremname}
\theoremstyle{definition}
\newtheorem{defn}[thm]{\protect\definitionname}
\theoremstyle{plain}
\newtheorem{prop}[thm]{\protect\propositionname}
\theoremstyle{plain}
\newtheorem{cor}[thm]{\protect\corollaryname}
\theoremstyle{plain}
\newtheorem{lem}[thm]{\protect\lemmaname}
\theoremstyle{remark}
\newtheorem{rem}[thm]{\protect\remarkname}
\def\mP{\mathbbm{P}}
\def\mQ{\mathbbm{Q}}
\def\mR{\mathbbm{R}}
\def\mE{\mathbbm{E}}
\def\mF{\mathbbm{F}}
\def\mG{\mathbbm{G}}
\def\mT{\mathbbm{T}}
\def\cE{\mathcal{E}}
\def\cF{\mathcal{F}}
\def\cG{\mathcal{G}}
\def\cB{\mathcal{B}}
\def\a{\alpha}
\def\b{\beta}
\def\g{\gamma}
\def\G{\Gamma}
\def\w{\omega}
\def\W{\Omega}
\def\s{\sigma}
\def\l{\lambda}
\def\p{\pi}
\def\P{\Pi}
\def\d{\delta}
\def\D{\Delta}
\def\th{\theta}
\def\e{\varepsilon}
\def\|{\lVert}
\def\inf{\text{inf }}
\def\1{\mathbbm{1}}
\def\fM{\mathfrak{M}}
\newcommand{\ov}[1]{\overline{#1}}
\newcommand{\wt}[1]{\widetilde{#1}}
\newcommand{\wh}[1]{\widehat{#1}}
\providecommand{\corollaryname}{Corollary}
\providecommand{\definitionname}{Definition}
\providecommand{\examplename}{Example}
\providecommand{\lemmaname}{Lemma}
\providecommand{\remarkname}{Remark}
\providecommand{\theoremname}{Theorem}
\providecommand{\propositionname}{Proposition}
\providecommand{\corollaryname}{Corollary}
\providecommand{\definitionname}{Definition}
\providecommand{\lemmaname}{Lemma}
\providecommand{\propositionname}{Proposition}
\providecommand{\remarkname}{Remark}
\providecommand{\theoremname}{Theorem}
\begin{document}

\title{Filtration Reduction and Completeness in Jump-Diffusion Models}

\author{\textcolor{black}{Karen Grigorian}\thanks{\textcolor{black}{Operations Research and Information Engineering,
Cornell University, Ithaca, N.Y. 14853. Currently: Department of Statistics and Applied Probability, University of California, Santa Barbara, CA, 93101. Email: grigorian@ucsb.edu}}\textcolor{black}{{} \ and Robert A. Jarrow}\thanks{\textcolor{black}{Samuel Curtis Johnson Graduate School of Management,
Cornell University, Ithaca, N.Y. 14853. email: raj15@cornell.edu.}}}

\date{}

\maketitle

\begin{abstract}
This paper studies the pricing and hedging of derivatives
in frictionless and competitive, but incomplete jump-diffusion markets.
A unique equivalent martingale measure (EMM) is obtained using filtration
reduction to a fictitious complete market. This unique EMM in the
fictitious market is uplifted to the original economy using the notion
of consistency. For pedagogical purposes, we begin with simple setups
and progressively extend to models of increasing generality.
\end{abstract}

\textbf{Key words}: Filtration reduction; Market completeness; Jump-diffusion models\par
\textbf{MSC2020:} 91G15; 91G20

\section{Introduction}

\textcolor{black}{This paper studies the pricing of derivatives in
frictionless, competitive, and arbitrage-free but incomplete jump-diffusion
markets. A unique price for a derivative in this setting is obtained
using the new uplifted equivalent martingale measure (EMM) methodology
developed by Grigorian and Jarrow 2023 \cite{key-12,key-13,key-14, key-17}
in a sequence of }\textcolor{black}{papers. This}\textcolor{black}{{}
methodology uses filtration reduction to construct a fictitious, but
complete market, where a unique EMM is determined. The reduced filtration
is selected by the trader so that the risks the filtration generates
can be completely hedged. Then, the EMM is uplifted to a unique EMM
in the original incomplete market satisfying a consistency condition,
which implies that the remaining non-hedged risks have no risk premium.}



\textcolor{black}{Our paper is related to a large literature on the
pricing of derivatives in incomplete, but arbitrage-}\textcolor{black}{free
markets. See}\textcolor{black}{{} Bingham and Kiesel 2000 \cite{Bingham Kiesel 2000},
Chapter 7, for an excellent summary of this literature. As is well-known
in this literature, in an incomplete market, there is no unique price
for a derivative whose payoff is non-linear in the traded assets.
This is because the set of }\textcolor{black}{EMMs }\textcolor{black}{contains
a continuum of elements. Various methods have been employed to select
a unique element from this set.}

\textcolor{black}{One approach is to choose a price implied by a given
objective }\textcolor{black}{function, which}\textcolor{black}{{} includes
variance and risk-minimizing hedging and indifference pricing. Schweizer
1994 \cite{Schweizer} constructs risk-minimizing hedging strategies
under restrictions on the available information. }A second approach
is to choose a unique\textcolor{black}{{} EMM} determined by some economic
justification or an analytic condition. Examples of choosing a unique
martingale measure based on the economic justification that certain
risks are diversifiable are the papers \textcolor{black}{by Merton 1976
\cite{Merton 1976} who} assumes jump risk and Hull and White 1987
\cite{Hull White 1987} \textcolor{black}{who assumes} volatility risk
are diversifiable, which implies a unique EMM. The minimal martingale
measure is the prime example of choosing a unique probability that
satisfies an analytic condition. The minimal martingale measure also
has the property that all non-traded risks (those in the larger filtration
$\mathbbm{G}$ but not in $\mathbbm{F}$) are not priced (see Bingham
and Kiesel 2000 \cite{Bingham Kiesel 2000}, p. 239). The problem
with both of these approaches in practice is that the diversifiable
risk assumption does not always apply, and choosing an appropriate
objective function is often difficult.

A somewhat different strand of literature focuses on the mathematical
nuances of optional projections, such as the loss of the local martingale
property in a reduced filtration, see F\"ollmer and Protter 2011 \cite{Follmer Protter}
and Larsson 2014 \cite{Larsson}. More recently, Bielecki et al 2021
\cite{Bielecki} investigated the semimartingale characteristics of
special semiamrtingales under filtration shrinkage. Biagini et all
2023 \cite{Biagini} study optional projections of strict $\mG$-local
martingales onto smaller filtrations $\mF$ under changes of equivalent
martingale measures and the absence of arbitrage opportunities.

Closer to this paper is Cuchiero et al 2020 \cite{Cuchiero} who provide
a version of the fundamental theorem of asset pricing for continuous-time
large financial markets with two filtrations, extending the results
previously obtained by Kabanov and Stricker 2006 \cite{Kabanov Stricker}
in a discrete-time setting. A slightly different problem is analyzed
by Kardaras and Ruf 2019 \cite{key-3}, who study the relation between
filtration shrinkage and local martingale deflators.

Our method differs from the previous literature in that we select
a unique \textcolor{black}{EMM} determined by the collection of sub-filtrations
$\mathbbm{F}$ for which the implied fictitious market is complete.
This equivalent probability, which we call the uplifted \textcolor{black}{EMM},
also has the property that all non-hedged risks (those in $\mathbbm{G}$
but not in $\mathbbm{F}$) are non-priced. In contrast to the aforementioned
literature, our unique pricing is explicitly determined by the risks
(filtration) that a trader desires to hedge, knowing in advance that
not all the risks in market prices can be hedged. Our method invokes
no explicit assumptions on preferences nor analytically convenient
properties of martingale processes. As such, it can be readily applied
in practice \textcolor{black}{to price } traded derivatives.

\textcolor{black}{Given }\textcolor{black}{that}\textcolor{black}{{} the
economic motivation underlying the uplifted EMM methodology is detailed
in Grigorian and Jarrow 2023 \cite{key-12,key-13,key-14, key-17}, }\textcolor{black}{we
concentrate here}\textcolor{black}{{} on the mathematics }\textcolor{black}{underlying}\textcolor{black}{{}
the application. An outline of this paper is as follows. Section 2
presents the preliminaries, key theorems used throughout the paper.
Section 3 studies filtration reduction with complete neglect, and
section 4 studies filtration reduction with partial neglect. Section
5 discusses }\textcolor{black}{the }\textcolor{black}{pricing and hedging
of derivatives, while section 6 concludes.}

\section{Preliminaries}

\textcolor{black}{This section collects some basic facts from the
theory of stochastic processes with }\textcolor{black}{jumps utilized
}\textcolor{black}{in the subsequent sections. }\textcolor{black}{Although
some of this material is basic,
these results allow us to}\textcolor{black}{{} present the key ideas
in }\textcolor{black}{a form}\textcolor{black}{{} that facilitates the
understanding of the }\textcolor{black}{more complex}\textcolor{black}{{}
cases that follow. We assume familiarity with the notions of counting processes, the Poisson process, the compound Poisson process and the standard Brownian motion.}

The following decomposition \textcolor{black}{proposition} and its corollary are important in the subsequent discussion. The proofs can be found in Shreve, 2004 \cite{key-16}, Ch. 11, the notation of which we follow closely \textcolor{black}{(see also }Chin et al, 2014 \cite{key-10}).
\begin{prop}
Let $y_{1},...,y_{M}$ be a finite set of nonzero numbers, and let
$p(y_{1}),...,p(y_{M})$ be positive numbers \textcolor{black}{summing}
to 1. \textcolor{black}{Given} \textcolor{black}{the constant }$\l>0$,
let $\ov{N}_{1},...,\ov{N}_{M}$ be independent Poisson processes,
each $\ov{N}_{m}$ having intensity $\l p(y_{m})$. Define $\ov{Q}(t)=\sum_{m=1}^{M}y_{m}\ov{N}_{m}(t),\quad t\geq0.$ Then, $\ov{Q}$ is a compound Poisson process. \textcolor{black}{In particular}, if $\ov{Y}_{1}$ is the size of the first jump of $\ov{Q}$, $\ov{Y}_{2}$ is the size of the second jump\textcolor{black}{,...,
etc.}, and $\ov{N}(t)=\sum_{m=1}^{M}\ov{N}_{m}(t),\quad t\geq0,$ is the \textcolor{black}{total }number of jumps on the time interval
$(0,t]$, then $\ov{N}$ is a Poisson process with intensity $\l$,
the random variables $\ov{Y}_{1},\ov{Y}_{2},...$ are independent
with $\mP(\ov{Y}_{i}=y_{m})=p(y_{m})$ for $m=1,...,M$, the random
variables $\ov{Y}_{1},\ov{Y}_{2},...$ are also independent of $\ov{N}$,
and $\ov{Q}(t)=\sum_{i=1}^{\ov{N}(t)}\ov{Y}_{i},\quad t\geq0.$
\end{prop}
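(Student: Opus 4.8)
The plan is to build the counting process $\ov N$ and the mark sequence $\ov Y_1,\ov Y_2,\dots$ explicitly from the data $(\ov N_m)_{m=1}^{M}$ and then verify that the pair $(\ov N,(\ov Y_i))$ satisfies all the requirements in the definition of a compound Poisson process; since that definition is structural (it demands a Poisson process $N$ and i.i.d.\ marks $Y_i$ independent of $N$ with $\ov Q(t)=\sum_{i=1}^{N(t)}Y_i$), establishing the ``in particular'' part is really the whole job. First I would set $\ov N(t):=\sum_{m=1}^{M}\ov N_m(t)$. Right-continuity, piecewise-constancy and $\ov N(0)=0$ are inherited from the $\ov N_m$; the only nontrivial point for $\ov N$ to be a counting process is that no two of $\ov N_1,\dots,\ov N_M$ jump at the same instant. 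I would prove this by a discretization estimate: for $j\neq k$ partition $[0,t]$ into $n$ equal subintervals and use independence of $\ov N_j,\ov N_k$ together with $\mP(\ov N_m\text{ jumps in a subinterval of length }t/n)=O(1/n)$ to bound the probability that $\ov N_j$ and $\ov N_k$ both jump in some common subinterval by $O(1/n)$; letting $n\to\infty$ and taking a union over the pairs $j<k$ and over $t\in\mN$ shows the jump times of the $\ov N_m$ are a.s.\ pairwise disjoint, so $\D\ov N\in\{0,1\}$.

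Next I would establish the superposition property: $\ov N$ is an $\mF$-Poisson process with intensity $\l$. Independence of increments over disjoint intervals follows from the mutual independence of the $\ov N_m$ and the independence of increments of each; and $\ov N(t)-\ov N(s)=\sum_m(\ov N_m(t)-\ov N_m(s))$ is a sum of independent Poisson variables with parameters $\l p(y_m)(t-s)$, hence Poisson with parameter $\sum_m\l p(y_m)(t-s)=\l(t-s)$ since $\sum_m p(y_m)=1$. This is exactly properties (i)--(iii) of a Poisson process.

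Now define $\ov Y_i$ to be the size of the $i$-th jump of $\ov Q$. By the no-common-jumps fact, at the $i$-th jump time of $\ov N$ exactly one index $m$ has $\D\ov N_m>0$, and then $\D\ov Q=y_m$, so $\ov Y_i$ is well defined with values in $\{y_1,\dots,y_M\}$. The heart of the proof is to show that $\ov Y_1,\ov Y_2,\dots$ are i.i.d.\ with $\mP(\ov Y_i=y_m)=p(y_m)$ and independent of $\ov N$. I would do this by computing a finite-dimensional law: for $t>0$, $k\geq1$, indices $m_1,\dots,m_k$,
\[
\mP\big(\ov N(t)=k,\ \ov Y_1=y_{m_1},\dots,\ov Y_k=y_{m_k}\big)=e^{-\l t}\frac{(\l t)^k}{k!}\prod_{j=1}^{k}p(y_{m_j}).
\]
A clean route is to condition on the set of jump times: given $\ov N(t)=k$ the jump times are exchangeable, and a ``competing Poisson clocks''/colouring computation on a fine partition of $[0,t]$ (again using the independent-increments structure of the $\ov N_m$) shows each jump is independently assigned to process $m$ with probability $\l p(y_m)/\l=p(y_m)$. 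Summing the display over $m_1,\dots,m_k$ recovers $\mP(\ov N(t)=k)$ from Step~2, while factoring it yields the i.i.d.\ and independence claims; running the same computation jointly over a partition $0<t_1<\dots<t_n$ upgrades this to independence of the whole sequence $(\ov Y_i)$ from the whole path of $\ov N$. Finally, since $\ov Q$ moves only by jumps and its $i$-th jump equals $\ov Y_i$, we get $\ov Q(t)=\sum_{i=1}^{\ov N(t)}\ov Y_i$; as $\ov Y_i$ takes finitely many values it has finite mean $\b=\sum_m y_m p(y_m)$, so every clause of the definition of a compound Poisson process holds, and in particular $\ov Q$ is one.

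I expect the colouring step -- rigorously deducing that the marks are i.i.d.\ with law $p$ and independent of $\ov N$ -- to be the main obstacle; the superposition and the shape of $\ov Q$ are bookkeeping once the no-simultaneous-jumps lemma is available. An alternative that sidesteps the marks is to check directly that $\ov Q$ has stationary independent increments (inherited from the $\ov N_m$) and that $\mE[e^{u\ov Q(t)}]=\prod_m\exp\!\big(\l p(y_m)t(e^{uy_m}-1)\big)=\exp\!\big(\l t\sum_m p(y_m)(e^{uy_m}-1)\big)$, the moment generating function of a compound Poisson law with jump distribution $p$; but because the definition in force is the structural one, one still needs the colouring argument to exhibit $N$ and the $Y_i$, so I would present the proof in the form above.
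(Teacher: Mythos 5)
The paper does not include its own proof of this proposition; it is stated as a standard fact with the proof deferred to Shreve 2004, Ch.\ 11 (Theorem 11.3.3). Your outline is correct and takes essentially the same route as that reference and as most textbook presentations: merge the $\ov N_m$, verify that $\ov N$ is Poisson of rate $\l$ (sums of independent Poisson increments are Poisson and increments over disjoint intervals stay independent), establish a.s.\ absence of simultaneous jumps, and then argue that conditionally on the path of $\ov N$ the colours $m$ are chosen i.i.d.\ with law $p$. Your discretization bound for no common jumps is fine. The part you correctly flag as the crux --- the colouring step --- is left as a plan; a cleaner way to discharge it than a fine-partition limit is the competing-exponentials computation combined with the strong Markov property: after any jump of $\ov N$ the next inter-arrival is $\min_m \tau^m$ with $\tau^m$ independent $\mathrm{Exp}(\l\,p(y_m))$, so it is $\mathrm{Exp}(\l)$ and the argmin is independent of the minimum with $\mP(\mathrm{argmin}=m)=p(y_m)$; iterating at each jump time gives at once that $\ov Y_1,\ov Y_2,\dots$ are i.i.d.\ with law $p$ and independent of all the arrival times of $\ov N$. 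Up to how the colouring step is made rigorous, your plan matches the proof the paper is pointing to.
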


\begin{cor}
Let $y_{1},...,y_{M}$ be a finite set of nonzero numbers, and let
$p(y_{1}),...,p(y_{M})$ be positive numbers \textcolor{black}{summing}
to 1. Let $Y_{1},Y_{2},...$ be a sequence of \textcolor{black}{i.i.d.}
random variables with $\mP(Y_{i}=y_{m})=p(y_{m})$ for $m=1,...,M$.
Let $N$ be a Poisson process and define the compound Poisson process $Q(t)=\sum_{i=1}^{N(t)}Y_{i}.$ For $m=1,...,M,$ let $N_{m}(t)$ denote the number of jumps in $Q$ of size $y_{m}$ up to and including time $t$. Then, $N(t)=\sum_{m=1}^{M}N_{m}(t)\quad\text{and}\quad Q(t)=\sum_{m=1}^{M}y_{m}N_{m}(t),$ \textcolor{black}{the} processes $N_{1},...,N_{M}$ are independent
Poisson processes, and each $N_{m}$ has intensity $\l p(y_{m})$. 
\end{cor}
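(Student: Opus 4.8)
The plan is to derive the Corollary directly from the preceding decomposition Proposition by identifying the two setups. First I would observe that the Corollary is essentially the converse packaging of the Proposition: the Proposition starts with independent Poisson processes $\ov N_1,\dots,\ov N_M$ and builds a compound Poisson process, whereas the Corollary starts with a compound Poisson process $Q$ and extracts the $N_m$. So the key step is to show that the $N_m$ defined by counting jumps of size $y_m$ are precisely (in distribution, jointly) the independent Poisson processes appearing in the Proposition.

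The main steps, in order, would be: (1) Note the pathwise identities $N(t)=\sum_{m=1}^M N_m(t)$ and $Q(t)=\sum_{m=1}^M y_m N_m(t)$; these are immediate since every jump of $Q$ has some size $y_m$ (as the $Y_i$ take values in $\{y_1,\dots,y_M\}$) and the $y_m$ are distinct and nonzero, so each jump is unambiguously classified. (2) Each $N_m$ is a counting process in the sense of the first Definition, since it inherits right-continuity, piecewise-constancy, $N_m(0)=0$, and unit jumps from $N$ (two distinct jump sizes cannot occur simultaneously because $N$ has unit jumps). (3) To get the Poisson/independence conclusion, apply the decomposition Proposition: construct, on a possibly enlarged space, independent Poisson processes $\ov N_1,\dots,\ov N_M$ with intensities $\l p(y_m)$ and set $\ov Q(t)=\sum_m y_m \ov N_m(t)$; by the Proposition, $\ov Q$ is a compound Poisson process with the same Lévy triplet (intensity $\l$, jump law $p$) as $Q$, hence $\ov Q \stackrel{d}{=} Q$ as processes. (4) The map $\omega \mapsto (N_1,\dots,N_M)$ is a measurable functional of the path of $Q$ (namely, $N_m(t)$ counts jumps of size exactly $y_m$ up to $t$), and the same functional applied to $\ov Q$ recovers $(\ov N_1,\dots,\ov N_M)$ pathwise; therefore $(N_1,\dots,N_M)\stackrel{d}{=}(\ov N_1,\dots,\ov N_M)$, which gives that $N_1,\dots,N_M$ are independent Poisson processes with $N_m$ of intensity $\l p(y_m)$.

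I expect the main obstacle to be step (4): making rigorous that the ``jump-size classification'' is a genuine measurable path functional and that it commutes with the equality in distribution. One has to be a little careful that counting jumps of a prescribed size is well defined for càdlàg step functions (the jump times are a discrete set a.s., and $\D \ov Q(s)=y_m$ is a measurable condition), and that on the set of full measure where $\ov N$ has only unit jumps the reconstruction $\ov N_m$ from $\ov Q$ is exactly the original $\ov N_m$. Once this identification of laws is in place, the independence and the individual intensities are read off directly from the construction in the Proposition, and the pathwise sum identities from step (1) complete the proof. An alternative, more self-contained route would be to prove independence of the $N_m$ directly via a ``coloring'' argument—conditionally on $N$, each jump is independently assigned color $m$ with probability $p(y_m)$, and a Poisson random measure thinning / superposition computation shows the colored counts are independent Poissons—but invoking the already-stated Proposition is the shortest path and is presumably the author's intent.
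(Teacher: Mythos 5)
Your proposal is correct and takes essentially the same route the paper intends: the paper defers the proof of this corollary to Shreve (2004), Ch.~11, and Shreve's proof argues exactly as you do---construct the independent Poissons $\ov N_m$ via the preceding decomposition proposition, observe that $\ov Q$ and $Q$ have the same law as processes, and note that the $N_m$ and $\ov N_m$ are extracted from $Q$ and $\ov Q$ by the same measurable jump-size functional, so their joint laws coincide. Your attention to the measurability of the jump-classification functional in step (4) is a real point that Shreve glosses over, but it is handled the way you suggest and does not change the structure of the argument.
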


\begin{prop}
Let $W$ be a Brownian motion and $N$ a Poisson process with \textcolor{black}{constant}
intensity $\l>0$, both defined on the same probability space $(\W,\cF,\mF,\mP)$
and $\mF$-adapted. Then, the processes $W$ and $N$ are independent. 
\end{prop}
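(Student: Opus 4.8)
The plan is to prove independence by showing that, for every finite collection of times, the joint characteristic function of the increments of $W$ and $N$ factors into the product of the characteristic functions of independent Gaussian and Poisson increments; this forces all the increments to be mutually independent, and hence the two processes to be independent. The engine of the argument is a pair of complex exponential $\mF$-martingales together with the fact that $W$ and $N$ have no common jumps.

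Fix $u,v\in\mR$ and set $G_t=\exp\!\big(iuW(t)+\tfrac12 u^{2}t\big)$ and $H_t=\exp\!\big(ivN(t)-\l t(e^{iv}-1)\big)$. Each is an $\mF$-martingale: for $G$ this follows from the defining property that $W(t)-W(s)$ is $\mathcal{N}(0,t-s)$-distributed and independent of $\cF_s$, and for $H$ from the conditional Poisson law of $N(t)-N(s)$ given $\cF_s$ in the definition of a Poisson process, in both cases after rearranging the conditional characteristic function. Since $W$ has continuous paths while $N$ is piecewise constant and hence of finite variation, the quadratic covariation satisfies $[W,N]\equiv 0$, so $[G,H]\equiv 0$ as well; integration by parts then yields $d(G_tH_t)=G_{t-}\,dH_t+H_{t-}\,dG_t$, so $GH$ is an $\mF$-local martingale. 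On any interval $[0,T]$ one has $|G_t|=e^{u^{2}t/2}\le e^{u^{2}T/2}$ and $|H_t|=e^{-\l t(\cos v-1)}\le e^{2\l T}$, so $GH$ is bounded there and is therefore a genuine $\mF$-martingale. Consequently, for $0\le s\le t$,
\begin{align*}
\mE\!\left[e^{\,iu(W(t)-W(s))+iv(N(t)-N(s))}\,\big|\,\cF_s\right]=e^{-\frac12 u^{2}(t-s)}\,e^{\l(t-s)(e^{iv}-1)},
\end{align*}
which is the product of the characteristic function of an $\mathcal{N}(0,t-s)$ random variable evaluated at $u$ and that of a Poisson variable with parameter $\l(t-s)$ evaluated at $v$; in particular it is non-random.

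Now fix $0=t_0<t_1<\dots<t_n$ and reals $u_1,\dots,u_n,v_1,\dots,v_n$. Conditioning successively on $\cF_{t_{n-1}},\cF_{t_{n-2}},\dots,\cF_{t_1}$ and applying the displayed identity at each step extracts one factor at a time, giving
\begin{align*}
\mE\!\left[\exp\!\Big(\sum_{j=1}^{n} iu_j\big(W(t_j)-W(t_{j-1})\big)+\sum_{j=1}^{n} iv_j\big(N(t_j)-N(t_{j-1})\big)\Big)\right]=\prod_{j=1}^{n}e^{-\frac12 u_j^{2}(t_j-t_{j-1})}\prod_{j=1}^{n}e^{\l(t_j-t_{j-1})(e^{iv_j}-1)}.
\end{align*}
The right-hand side is precisely the product of the characteristic functions of the $2n$ increments, so those increments are mutually independent. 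Since $W(0)=N(0)=0$, the vector $(W(t_1),\dots,W(t_n))$ is a function of the $W$-increments alone and $(N(t_1),\dots,N(t_n))$ a function of the $N$-increments alone, whence the two vectors are independent; as the times were arbitrary, the $\s$-algebras $\s\big(W(t):t\ge 0\big)$ and $\s\big(N(t):t\ge 0\big)$ are independent, i.e. $W$ and $N$ are independent processes. The step requiring the most care is $[W,N]\equiv 0\Rightarrow GH$ is a martingale: one must invoke the path regularity of $W$ and $N$ to kill the covariation and then localize, using the compact-interval bounds above, to upgrade the local martingale $GH$ to a true martingale. A reader wishing to avoid stochastic calculus can instead first establish the single-interval independence $W(t)-W(s)\perp N(t)-N(s)$ by a fine-partition argument, using that the probability of a jump of $N$ on a subinterval of length $h$ is of order $h$, and then iterate as above.
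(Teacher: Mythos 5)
Your proof is correct and follows essentially the same route as Shreve (2004, Ch.\ 11), which is the source the paper defers to for this proposition: you build the complex exponential $\mF$-martingales associated with $W$ and $N$, use $[W,N]\equiv 0$ (continuity of $W$ plus finite variation of $N$) to show their product is a local martingale, upgrade to a true martingale via the compact-interval bounds, and then telescope conditional characteristic functions to obtain independence of all finite-dimensional increments. No gaps.
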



\textcolor{black}{We also need results on changes} of measures
for Poisson processes, compound Poisson processes, and combinations
with Brownian motions. Let $M(t)=N(t)-\l t$ be a compensated Poisson
process on $(\W,\cF,\mF,\mP)$, which is a martingale. Let $\wt{\l}$
be a positive number and define $Z(t)=e^{(\l-\wt{\l})t}\left(\frac{\wt{\l}}{\l}\right)^{N(t)}.$ For a fixed $T>0$, we would like to use $Z(T)>0$ to change to a
new measure $\wt{\mP}$ under which $(N(t),0\leq t\leq T)$ has intensity
$\wt{\l}$ instead of $\l$. Note that $Z$ is a solution to the stochastic differential equation $dZ(t)=\frac{\wt{\l}-\l}{\l}Z(t-)dM(t),$ and is a martingale under $\mP$ with $\mE Z(t)=1$ for all $t.$
Define the new measure via $\wt{\mP}(A)=\int_{A}Z(T)d\mP\quad\text{for all }A\in\cF.$

\begin{prop}
\textbf{\textcolor{black}{(Change of Poisson Intensity)}}

Under the probability measure $\wt{\mP}$, the process $(N(t),0\leq t\leq T)$
is \textcolor{black}{a} Poisson\textcolor{black}{{} process }with intensity
$\wt{\l}$. 
\end{prop}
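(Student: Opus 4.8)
The plan is to verify directly that, under $\wt{\mP}$, the process $(N(t),0\le t\le T)$ meets the three defining properties of an $\mF$-Poisson process with intensity $\wt{\l}$. Property (i), $\mF$-adaptedness, is immediate since it does not involve the measure and $N$ is $\mF$-adapted by hypothesis; likewise $N$ remains a counting process because $\wt{\mP}\sim\mP$ and the defining path properties hold $\mP$-a.s., hence $\wt{\mP}$-a.s. It therefore remains to establish properties (ii) and (iii): for $0\le s\le t\le T$, that $N(t)-N(s)$ is $\wt{\mP}$-independent of $\cF_s$ and that $\wt{\mP}(N(t)-N(s)=n\mid\cF_s)=e^{-\wt{\l}(t-s)}\wt{\l}^{\,n}(t-s)^{n}/n!$. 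Both of these follow at once once we show that the conditional probability generating function of $N(t)-N(s)$ given $\cF_s$ under $\wt{\mP}$ is the deterministic function $z\mapsto\exp\{\wt{\l}(t-s)(z-1)\}$.

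To compute it, recall that $Z$ is a strictly positive $\mP$-martingale with $\mE^{\mP}Z(t)=1$, so $\wt{\mP}$ is a probability measure equivalent to $\mP$ on $\cF_T$, and the abstract Bayes rule gives, for any bounded $\cF_t$-measurable $\xi$ and $s\le t\le T$,
\[
\mE^{\wt{\mP}}[\xi\mid\cF_s]=\frac{1}{Z(s)}\,\mE^{\mP}[\xi\,Z(t)\mid\cF_s].
\]
Applying this with $\xi=z^{N(t)-N(s)}$ for $z\in(0,1]$ (so that $\xi$ is bounded), and using the explicit form of $Z$ to write $Z(t)/Z(s)=e^{(\l-\wt{\l})(t-s)}(\wt{\l}/\l)^{N(t)-N(s)}$, one obtains
\[
\mE^{\wt{\mP}}[z^{N(t)-N(s)}\mid\cF_s]=e^{(\l-\wt{\l})(t-s)}\,\mE^{\mP}\!\Big[\big(\tfrac{z\wt{\l}}{\l}\big)^{N(t)-N(s)}\,\Big|\,\cF_s\Big].
\]
Since under $\mP$ the increment $N(t)-N(s)$ is independent of $\cF_s$ and Poisson$(\l(t-s))$, its $\mP$-generating function evaluated at $z\wt{\l}/\l$ is $\exp\{\l(t-s)(z\wt{\l}/\l-1)\}$; substituting this and simplifying the exponent collapses it to $\exp\{\wt{\l}(t-s)(z-1)\}$, which is deterministic. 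By uniqueness of generating functions this proves that $N(t)-N(s)$ is $\wt{\mP}$-independent of $\cF_s$ with the asserted Poisson$(\wt{\l}(t-s))$ law, i.e. properties (ii) and (iii).

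The only genuinely delicate point is the justification of the conditional change-of-measure formula together with the attendant integrability and measurability bookkeeping: using that $Z$ is a true $\mP$-martingale (not merely a local one), so that $Z(s)=\mE^{\mP}[Z(t)\mid\cF_s]$ and $\wt{\mP}$ is actually a probability measure; and restricting the test function to $z^{N(t)-N(s)}$ with $z\in(0,1]$, which stays bounded, before appealing to uniqueness of the generating function. Granting these, the remaining computation is the routine simplification of the exponent displayed above. An alternative route is to identify $Z$ as a Doléans-Dade exponential via the earlier proposition and run a Girsanov-type argument, but the direct generating-function computation is shorter and self-contained.
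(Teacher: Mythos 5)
The paper gives no proof of this proposition—it is stated among the Section 2 preliminaries, for which the text defers to Shreve (2004, Chapter 11)—and your argument is in fact the standard one from that reference: the abstract Bayes formula $\mE^{\wt{\mP}}[\xi\mid\cF_s]=Z(s)^{-1}\mE^{\mP}[\xi Z(t)\mid\cF_s]$ applied to $z^{N(t)-N(s)}$, reduction via the explicit form $Z(t)/Z(s)=e^{(\l-\wt{\l})(t-s)}(\wt{\l}/\l)^{N(t)-N(s)}$, and the $\mP$-Poisson generating function to obtain the deterministic limit $\exp\{\wt{\l}(t-s)(z-1)\}$, which simultaneously gives the $\wt{\mP}$-law of the increment and its independence from $\cF_s$. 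The computation is correct, including the exponent cancellation $(\l-\wt{\l})(t-s)+\l(t-s)(z\wt{\l}/\l-1)=\wt{\l}(t-s)(z-1)$, and the bookkeeping points you flag (true martingale so $Z(s)=\mE^{\mP}[Z(t)\mid\cF_s]$; boundedness of the test function on $z\in(0,1]$; determination of the conditional law by the conditional generating function) are exactly the ones that need checking. So the proposal is correct and essentially coincides with the cited source's proof.
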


Similarly, for positive numbers $\wt{\l}_{1},...,\wt{\l}_{M}$, the
process $Z$, defined via $Z_{m}(t)=e^{(\l_{m}-\wt{\l}_{m})t}\left(\frac{\wt{\l}_{m}}{\l_{m}}\right)^{N_{m}(t)},\quad Z(t)=\prod_{m=1}^{M}Z_{m}(t),$ is a martingale under $\mP$ with $\mE Z(t)=1$ for all $t$, and
can be used to define $\wt{\mP}$ as above. 
\begin{prop}
\textbf{(Change of Compound Poisson Intensity and Jump Distribution
for a Finite Number of Jump Sizes)}

Under $\wt{\mP}$, $Q$ is a compound Poisson process with intensity
$\l=\sum_{m=1}^{M}\wt{\l}_{m}$, and $Y_{1},Y_{2},...$ are \textcolor{black}{i.i.d.}
random variables with 
\begin{align*}
\wt{\mP}(Y_{i}=y_{m})=\wt{p}(y_{m})=\frac{\wt{\l}_{m}}{\wt{\l}}.
\end{align*}
\end{prop}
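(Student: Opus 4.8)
The plan is to reduce the statement to the single-component change-of-intensity result via the decomposition of $Q$ into independent Poisson processes, and then to identify the joint law of those processes under $\wt\mP$. By the Corollary, write $N(t)=\sum_{m=1}^{M}N_{m}(t)$ and $Q(t)=\sum_{m=1}^{M}y_{m}N_{m}(t)$, where under $\mP$ the processes $N_{1},\dots,N_{M}$ are independent ($\mF$-)Poisson processes with intensities $\l_{m}\coloneqq\l p(y_{m})$. As recorded in the discussion preceding the proposition, $Z=\prod_{m=1}^{M}Z_{m}$ is a strictly positive $\mP$-martingale with $\mE Z(t)=1$: each $Z_{m}$ is the Doleans-Dade exponential of $\frac{\wt\l_{m}-\l_{m}}{\l_{m}}M_{m}$ with $M_{m}(t)=N_{m}(t)-\l_{m}t$, hence a nonnegative $\mP$-local martingale with $\mE Z_{m}(t)=1$ (a direct computation using the Poisson law), and the product factorizes under conditioning because the $N_{m}$ are $\mP$-independent and $Z_{m}$ depends only on $N_{m}$. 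Thus $\wt\mP$ is a probability measure equivalent to $\mP$.

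The crux is the joint law of $(N_{1},\dots,N_{M})$ under $\wt\mP$. Fix $0\le s<t\le T$ and bounded test functions $f_{1},\dots,f_{M}$ on $\mN$. By the change-of-measure (Bayes) formula together with the $\mP$-martingale property of $Z$,
\begin{align*}
&\wt\mE\Big[\prod_{m=1}^{M}f_{m}\big(N_{m}(t)-N_{m}(s)\big)\,\Big|\,\cF_{s}\Big] \\
&\qquad=\mE\Big[\prod_{m=1}^{M}f_{m}\big(N_{m}(t)-N_{m}(s)\big)\,\frac{Z(t)}{Z(s)}\,\Big|\,\cF_{s}\Big].
\end{align*}
Since $\frac{Z(t)}{Z(s)}=\prod_{m=1}^{M}e^{(\l_{m}-\wt\l_{m})(t-s)}\,(\wt\l_{m}/\l_{m})^{N_{m}(t)-N_{m}(s)}$, the $\mP$-independence of the $N_{m}$ makes the right-hand side factor over $m$; and because $N_{m}(t)-N_{m}(s)$ is $\mP$-independent of $\cF_{s}$ with a Poisson$(\l_{m}(t-s))$ law, the $m$-th factor equals $e^{-\wt\l_{m}(t-s)}\sum_{k\ge0}f_{m}(k)\,\frac{(\wt\l_{m}(t-s))^{k}}{k!}$. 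This shows that under $\wt\mP$ the increments $N_{m}(t)-N_{m}(s)$ are mutually independent, independent of $\cF_{s}$, and Poisson$(\wt\l_{m}(t-s))$ distributed; hence $N_{1},\dots,N_{M}$ are independent $\wt\mP$-Poisson processes with intensities $\wt\l_{1},\dots,\wt\l_{M}$.

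It then remains to read off the conclusion from the Decomposition Proposition, now applied under $\wt\mP$ with the roles of $\l p(y_{m})$ played by $\wt\l_{m}$: setting $\wt\l\coloneqq\sum_{m=1}^{M}\wt\l_{m}$, the process $\sum_{m=1}^{M}y_{m}N_{m}$ — which is pathwise equal to $Q$ — is a compound Poisson process whose counting process $N=\sum_{m=1}^{M}N_{m}$ has intensity $\wt\l$, whose jump sizes $Y_{1},Y_{2},\dots$ are i.i.d.\ with $\wt\mP(Y_{i}=y_{m})=\wt\l_{m}/\wt\l$, and whose jump sizes are independent of $N$. I expect the main obstacle to be precisely the factorization in the second paragraph: one cannot simply invoke the single-component Change of Poisson Intensity result $M$ separate times, since a priori the common change of measure could couple the components; one must verify directly, from the product form of $Z$ and the $\mP$-independence of the $N_{m}$, that both the componentwise intensity change and the mutual independence survive the measure change.
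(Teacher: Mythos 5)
The paper does not prove this proposition itself; it cites Shreve (2004, Ch.\ 11, Theorem 11.6.5) for the proof. Your argument is correct and is in fact the same as the standard one given there: after decomposing $Q = \sum_m y_m N_m$ and $Z = \prod_m Z_m$, one shows that under $\wt\mP$ the components $N_1,\dots,N_M$ remain \emph{jointly} independent Poisson processes with the new intensities $\wt\l_m$, and then reassembles $Q$ as a compound Poisson process via the decomposition proposition. Your computation with bounded test functions replaces Shreve's moment-generating-function computation, but it establishes exactly the same joint law of increments; and you correctly identify that the factorization of the conditional $\wt\mP$-expectation is the substantive step, since componentwise application of the single-Poisson change-of-intensity result would leave the $\wt\mP$-independence of $N_1,\dots,N_M$ unjustified. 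The only thing left implicit is the passage from independence of increments over a single interval $(s,t]$ (conditionally on, and independently of, $\cF_s$) to full mutual independence of the processes; this follows by the standard telescoping/iterated conditioning over partitions $0 = t_0 < t_1 < \dots < t_n$, and is worth a sentence to make the "hence" explicit. Also note the statement of the proposition in the paper contains a typo, writing $\l = \sum_m \wt\l_m$ where $\wt\l = \sum_m \wt\l_m$ is meant; you have silently corrected this.
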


Writing the Radon-Nikodym derivative as 
\begin{align*}
Z(t)=\exp\left(\sum_{m=1}^{M}(\l_{m}-\wt{\l}_{m})t\right)\prod_{m=1}^{M}\left(\frac{\wt{\l}\wt{p}(y_{m})}{\l p(y_{m})}\right)^{N_{m}(t)}=e^{(\l-\wt{\l})t}\prod_{i=1}^{N(t)}\frac{\wt{\l}\wt{p}(Y_{i})}{\l p(Y_{i})},
\end{align*}
\textcolor{black}{this approach} can be extended to jump sizes $Y_{i}$\textcolor{black}{{}
with density }$f(y)$ via 
\begin{align*}
Z(t)=e^{(\l-\wt{\l})t}\prod_{i=1}^{N(t)}\frac{\wt{\l}\wt{f}(Y_{i})}{\l f(Y_{i})},
\end{align*}
where we assume that $\wt{f}(y)=0$ whenever $f(y)=0$ to avoid division
by 0 and to ensure that $\wt{\mP}\sim\mP$. This process $Z$ is also
a martingale under $\mP$ with $\mE Z(t)=1$ for all $t$, and hence
we can define $\wt{\mP}$ the usual way. 
\begin{prop}
\textbf{(Change of Compound Poisson Intensity and Jump Distribution
for a Continuum of Jump Sizes)}

Under $\wt{\mP}$, the process $Q$ is a compound Poisson process
with \textcolor{black}{constant} intensity $\wt{\l}$ and the jumps
in $Q$ are \textcolor{black}{i.i.d.} with density $\wt{f}(y)$. 
\end{prop}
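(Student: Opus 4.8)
The plan is to read off the $\wt{\mP}$-law of $(N,Q)$ on $[0,T]$ from a single conditional-expectation computation via Bayes' rule, mimicking the finite-jump-size case treated just above. Since $Z$ is a $\mP$-martingale with $\mE Z(t)=1$ and $\wt{\mP}\sim\mP$ is guaranteed (the support hypothesis on $f,\wt f$ makes $Z(T)>0$ $\mP$-a.s.), for $0\le s\le t\le T$ and bounded $\cF_t$-measurable $X$ one has $\mE_{\wt{\mP}}[X\mid\cF_s]=Z(s)^{-1}\mE[Z(t)X\mid\cF_s]$, where $Z(t)/Z(s)=e^{(\l-\wt{\l})(t-s)}\prod_{N(s)<i\le N(t)}\frac{\wt{\l}\wt f(Y_i)}{\l f(Y_i)}$.

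The one structural input taken from the $\mP$-world is that the increment block $B_{s,t}:=\bigl(N(t)-N(s),\,Y_{N(s)+1},\dots,Y_{N(t)}\bigr)$ is independent of $\cF_s$ with the compound-Poisson-over-$(s,t]$ law under $\mP$: $N(t)-N(s)\sim\mathrm{Poisson}(\l(t-s))$ and, conditionally on $N(t)-N(s)=m$, the $m$ listed jumps are i.i.d.\ with density $f$. This is a consequence of the independent stationary increments of the Poisson process $N$ together with the independence of $(Y_i)_{i\ge1}$ from $N$. The point of writing things this way is that both $Z(t)/Z(s)$ and $Q(t)-Q(s)=\sum_{N(s)<i\le N(t)}Y_i$ --- indeed any functional of the path of $Q$ on $(s,t]$ --- are measurable functions of $B_{s,t}$.

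The computation: for bounded measurable $G(m;y_1,\dots,y_m)$, conditioning on $\{N(t)-N(s)=m\}$ yields
\begin{align*}
\mE_{\wt{\mP}}\!\bigl[G(B_{s,t})\mid\cF_s\bigr]
&=e^{(\l-\wt{\l})(t-s)}\sum_{m\ge0}e^{-\l(t-s)}\frac{(\l(t-s))^{m}}{m!}\int_{\mR^{m}}\Bigl(\prod_{j=1}^{m}\frac{\wt{\l}\wt f(y_{j})}{\l f(y_{j})}\Bigr)G(m;y_1,\dots,y_m)\prod_{j=1}^{m}f(y_{j})\,dy_{j}\\
&=e^{-\wt{\l}(t-s)}\sum_{m\ge0}\frac{(\wt{\l}(t-s))^{m}}{m!}\int_{\mR^{m}}G(m;y_1,\dots,y_m)\prod_{j=1}^{m}\wt f(y_{j})\,dy_{j},
\end{align*}
because $e^{(\l-\wt{\l})\tau}e^{-\l\tau}(\l\tau)^{m}\l^{-m}\wt{\l}^{m}=e^{-\wt{\l}\tau}(\wt{\l}\tau)^{m}$. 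The right-hand side is deterministic, so $B_{s,t}$ is $\wt{\mP}$-independent of $\cF_s$; it depends on $(s,t)$ only through $t-s$; and it is exactly the expectation of $G$ evaluated on a genuine compound Poisson pair with intensity $\wt{\l}$ and jump density $\wt f$. Specializing $G$ to a function of the count alone gives $N(t)-N(s)\sim\mathrm{Poisson}(\wt{\l}(t-s))$ under $\wt{\mP}$, while general $G$ simultaneously identifies the jumps over $(s,t]$ as i.i.d.\ with density $\wt f$ and independent of $N(t)-N(s)$. Iterating over a partition $0=t_0<\dots<t_k=T$ --- along which $Z(T)$ factors into per-interval densities and the blocks $B_{t_{j-1},t_j}$ are $\mP$-independent --- upgrades this to joint independence of the interval blocks under $\wt{\mP}$, which is precisely the statement that $N$ is a Poisson process of intensity $\wt{\l}$, that $(Y_i)_{i\ge1}$ is i.i.d.\ with density $\wt f$, and that $(Y_i)_{i\ge1}\perp N$; hence $Q$ is a compound Poisson process with constant intensity $\wt{\l}$ and jump density $\wt f$ under $\wt{\mP}$.

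The step I expect to require the most care is the measurability/independence bookkeeping flagged in the second paragraph: verifying that $B_{s,t}$ --- whose very list of jump indices $N(s)+1,\dots,N(t)$ is random --- is genuinely $\mP$-independent of $\cF_s$ with the asserted conditional law, and that the blocks along a partition are mutually $\mP$-independent. Once this is secured, the Poisson-series resummation and the recognition of the limiting law as compound Poisson are routine; an equivalent route that absorbs some of the bookkeeping is to compute the conditional joint transform $\mE_{\wt{\mP}}\bigl[z^{N(t)-N(s)}e^{iu(Q(t)-Q(s))}\mid\cF_s\bigr]=\exp\{\wt{\l}(t-s)(z\,\wh{\wt f}(u)-1)\}$, with $\wh{\wt f}(u)=\int_{\mR}e^{iuy}\wt f(y)\,dy$, and appeal to uniqueness of transforms.
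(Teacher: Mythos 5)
The paper does not supply its own proof of this proposition: it belongs to the block of Poisson/compound-Poisson facts that the text defers wholesale to Shreve 2004, Ch.\ 11. Your argument is therefore read as a standalone reconstruction, and it is correct and runs along the standard route for such change-of-measure statements: the abstract Bayes formula $\mE_{\wt{\mP}}[X\mid\cF_s]=Z(s)^{-1}\mE_{\mP}[Z(t)X\mid\cF_s]$ for $\cF_t$-measurable $X$, the multiplicative factorization $Z(t)/Z(s)$ as a function of the increment block $B_{s,t}$, the $\mP$-independence of $B_{s,t}$ from $\cF_s$, and the Poisson-series resummation that converts $(\l,f)$ into $(\wt{\l},\wt f)$. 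The identity $e^{(\l-\wt{\l})\tau}e^{-\l\tau}(\l\tau)^m\l^{-m}\wt{\l}^{m}=e^{-\wt{\l}\tau}(\wt{\l}\tau)^m$ is verified and the change of integrand $\prod_j \frac{\wt{\l}\wt f(y_j)}{\l f(y_j)}f(y_j)=(\wt{\l}/\l)^m\prod_j\wt f(y_j)$ is exactly right, so the display is sound. The bookkeeping point you flag, that $B_{s,t}=\bigl(N(t)-N(s),\,Y_{N(s)+1},\dots,Y_{N(t)}\bigr)$ is $\mP$-independent of $\cF_s$ despite the random indexing, does hold: condition on $\{N(s)=k\}$, note that $A\cap\{N(s)=k\}\in\sigma\bigl((N(u))_{u\le s},Y_1,\dots,Y_k\bigr)$, and use the independence of $N$-increments from the past together with the independence and exchangeability of the i.i.d.\ $(Y_i)$ from $N$ and from $(Y_1,\dots,Y_k)$; summing over $k$ gives the factorization. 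Your closing observation about the MGF/characteristic-function packaging is the form most texts (including Shreve) actually write down; your distributional version proves slightly more in one pass (independence of the jump sizes from the count, not just a product transform), at the cost of some notation. Either way the proof stands.
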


Finally, we provide a change of measure formula for a model driven
by a Brownian motion and Poisson processes. Specifically, let $\th(t)$
be an adapted process and define 
\begin{align*}
Z_{1}(t) & =\exp\left(-\int_{0}^{t}\th(u)dW(u)-\frac{1}{2}\int_{0}^{t}\th^{2}(u)du\right),\\
Z_{2}(t) & =e^{(\l-\wt{\l})t}\prod_{i=1}^{N(t)}\frac{\wt{\l}\wt{f}(Y_{i})}{\l f(Y_{i})},\\
Z(t) & =Z_{1}(t)Z_{2}(t),
\end{align*}
which, under some conditions on $\theta(t)$, can be shown to be a martingale under $\mP$ with $\mE Z(t)=1$
for all $t$. 
\begin{prop}
Under the probability measure $\wt{\mP}$, the process 
\begin{align*}
\wt{W}(t)=W(t)+\int_{0}^{t}\th(s)ds
\end{align*}
is a Brownian motion, $Q$ is a compound Poisson process with intensity
$\wt{\l}$, i.i.d. jump sizes with density $\wt{f}(y)$, and the processes
$\wt{W}$ and $Q$ are independent. 
\end{prop}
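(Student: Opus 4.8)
The plan is to obtain all three assertions at once by computing the joint conditional characteristic function of the increments of $\wt{W}$ and $Q$ under $\wt{\mP}$ and reading off its shape. Fix $T>0$, write $\psi(v)=\int_{\mR}e^{ivy}\wt{f}(y)\,dy$ for the characteristic function of a single jump, and for real numbers $u,v$ introduce the candidate exponentials
\[
\mathcal{E}^{u}(t)=\exp\!\left(iu\wt{W}(t)+\tfrac12 u^{2}t\right),\qquad \mathcal{F}^{v}(t)=\exp\!\left(ivQ(t)-\wt{\l}t(\psi(v)-1)\right),
\]
together with $G=Z\,\mathcal{E}^{u}\,\mathcal{F}^{v}$. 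By the abstract Bayes rule, once $G$ is shown to be a $\mP$-martingale, $\mathcal{E}^{u}\mathcal{F}^{v}$ is a $\wt{\mP}$-martingale, and since $\mathcal{E}^{u}(s)\mathcal{F}^{v}(s)$ is $\cF_{s}$-measurable with modulus bounded away from $0$ and $Z(s)>0$, we obtain for $0\le s\le t\le T$
\[
\mE_{\wt{\mP}}\!\left[e^{iu(\wt{W}(t)-\wt{W}(s))+iv(Q(t)-Q(s))}\mid\cF_{s}\right]=e^{-\frac{u^{2}}{2}(t-s)}\,e^{\wt{\l}(t-s)(\psi(v)-1)}.
\]

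The technical core is proving $G$ is a $\mP$-martingale via the Ito-Doeblin formula for jump processes (the formula underlying the Doleans-Dade exponential recalled above). Write $G=AB$ with $A=Z_{1}\mathcal{E}^{u}$, a continuous process, and $B=Z_{2}\mathcal{F}^{v}$, a pure-jump-plus-drift process; then $[A,B]=0$ and $dG=B_{-}\,dA+A_{-}\,dB$. For $A$, the Girsanov-type cross term $d[Z_{1},\mathcal{E}^{u}]=-iu\theta Z_{1}\mathcal{E}^{u}\,dt$ cancels exactly the drift $iu\theta Z_{1}\mathcal{E}^{u}\,dt$ arising in $d\mathcal{E}^{u}$, leaving $dA=(iu-\theta)A\,dW$, a local martingale. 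For $B$, the deterministic drift $(\l-\wt{\l})Z_{2}\,dt$ built into $Z_{2}$ together with the drift $-\wt{\l}(\psi(v)-1)\mathcal{F}^{v}\,dt$ built into $\mathcal{F}^{v}$ produces a smooth drift $(\l-\wt{\l}\psi(v))B\,dt$, which is cancelled by the $\mP$-compensator $(\wt{\l}\psi(v)-\l)B\,dt$ of the jump integral $B_{-}\big(\tfrac{\wt{\l}\wt{f}(Y_{N(t)})}{\l f(Y_{N(t)})}e^{ivY_{N(t)}}-1\big)dN(t)$, using that $N$ has $\mP$-intensity $\l$ and the marks $Y_{i}$ have $\mP$-density $f$. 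Hence $A$, $B$, and therefore $G$, are $\mP$-local martingales. To upgrade to a true martingale, note $|G(t)|=Z(t)\exp(\tfrac12 u^{2}t-\wt{\l}t\,\Re(\psi(v)-1))\le C_{u,v}Z(t)$ on $[0,T]$; since $Z$ is a nonnegative $\mP$-martingale with $\mE Z(t)=1$, it is uniformly integrable on $[0,T]$, and a local martingale dominated by a uniformly integrable martingale is of class (D), hence a genuine martingale.

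It remains to read the conclusions off the displayed identity. Its right-hand side is deterministic, so the increment pair $(\wt{W}(t)-\wt{W}(s),\,Q(t)-Q(s))$ is independent of $\cF_{s}$, and it depends only on $t-s$, so increments are stationary. Setting $v=0$ gives $\wt{W}(t)-\wt{W}(s)\sim\mathcal{N}(0,t-s)$; since $\wt{W}$ inherits continuous paths from $W$ and $\wt{W}(0)=0$, Levy's characterization identifies $\wt{W}$ as a $\wt{\mP}$-Brownian motion. Setting $u=0$ shows the characteristic function of $Q(t)-Q(s)$ is $\exp(\wt{\l}(t-s)(\psi(v)-1))$, which is precisely that of a compound-Poisson increment with jump-count intensity $\wt{\l}$ and absolutely continuous jump law $\wt{f}$ (the representation $\wt{\l}(\psi(v)-1)$ with $\psi(0)=1$ being unique), so $Q$ is a $\wt{\mP}$-compound Poisson process with intensity $\wt{\l}$ and jump density $\wt{f}$. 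Finally, for every $s\le t$ the conditional joint characteristic function factors as a function of $u$ times a function of $v$, and combined with the independence of each increment from $\cF_{s}$ this yields that the processes $\wt{W}$ and $Q$ are independent under $\wt{\mP}$.

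The step I expect to be the main obstacle is the Ito-Doeblin computation: simultaneously tracking the continuous Girsanov cross-variation cancellation and the pure-jump compensator cancellation without sign errors, and then the local-to-true-martingale upgrade. A more modular route --- change measure first via $Z_{2}$, then apply Girsanov via $Z_{1}$ --- is tempting, but it is complicated by $\theta$ being adapted to the full filtration that also carries the jumps, so the unified characteristic-function argument above is cleaner and self-contained.
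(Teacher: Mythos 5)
Your proof is correct. The paper does not supply its own proof here: it states, above Proposition 5, that the preliminary propositions are taken from Shreve (2004), Ch.~11, and this one corresponds to Shreve's Theorem 11.6.10, whose proof runs along the same lines as yours. Shreve also establishes that the Radon--Nikodym density $Z$ times a suitable pair of ``Wald-type'' exponentials is a $\mP$-martingale, invokes abstract Bayes, and reads off the joint conditional transform of the increments; your $G=Z\,\mathcal{E}^{u}\mathcal{F}^{v}$ computation (the continuous cross-variation cancellation in $A=Z_{1}\mathcal{E}^{u}$, the pure-jump compensator cancellation in $B=Z_{2}\mathcal{F}^{v}$, and the product rule with $[A,B]=0$) reproduces the heart of that argument. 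Two small remarks. First, the main cosmetic difference is that you use characteristic functions where Shreve uses moment-generating functions; yours is marginally tighter since $\psi(v)=\int e^{ivy}\wt f(y)\,dy$ always exists and $|\mathcal{F}^{v}(t)|\le e^{2\wt{\l}T}$, sidestepping any implicit integrability hypothesis on $\wt f$. Second, when you pass from the factored increment law to independence of the two \emph{processes}, the step worth making explicit is that the conditional transform factors \emph{and} is deterministic, so iterating over a time grid $0=t_0<\cdots<t_n$ yields the full joint finite-dimensional factorization; you do note both ingredients, so the argument is complete. The local-to-true upgrade via domination of $|G|$ by $C_{u,v}Z$ and the class~(D) property of the closed martingale $Z$ on $[0,T]$ is exactly the right way to close the gap.
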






\begin{defn}
(\textcolor{black}{\emph{M-dimensional Counting and Point Processes}})
\textcolor{black}{A }\textbf{$M$-dimensional counting process} is a
vector $(N^{1},...,N^{M})$ of counting processes. \textcolor{black}{A}
\textbf{$M$-variate point process} is an $M$-dimensional counting
process without common jumps, i.e. $\D N^{i}(t)\D N^{j}(t)=0$ for
all $t$ a.s. 
\end{defn}

If $(N^{1},...,N^{M})$ is a multivariate point process, then we can
define a process $N$ by $N(t)=\sum_{i=1}^{M}N^{i}(t).$ The absence of common jumps implies that $N$ is a counting process. Denoting by $\{T_{n}\}$ the jump times of $N$, we define a sequence of random variables taking values in $\{1,...,M\}$ by $Y_{n}=\sum_{i=1}^{M}i\D N^{i}(T_{n}),$ i.e. if the process $N^{i}$ has a jump at $T_{n}$, then $Y_{n}=i$. With this setup, the processes $N^{1},...,N^{M}$ can be recovered from $N$ and $\{Y_{n}\}$ via $N^{i}(t)=\sum_{n=1}^{\infty}\1_{\{T_{n}\leq t\}}\1_{\{Y_{n}=i\}}.$

Thus, \textcolor{black}{a} $M$-variate point process can be described
either by the vector process $(N^{1},...,N^{M})$, or by the sequence
$\{(T_{n},Y_{n})\}_{n=1}^{\infty}$, i.e. the paths $t\mapsto(N^{1}(\w),...,N^{M}(\w))$
are uniquely determined by the collection of points $\{(T_{n}(\w),Y_{n}(\w)\}_{n=1}^{\infty}$.

Given a counting process $N$ and a sequence of random variables $\{Y_{n}\}$
taking values in $\{1,...,M\}$, define for every $i=1,...,M,$ the counting process $N^{i}$ by $N^{i}(t)=\sum_{n=1}^{\infty}\1_{\{T_{n}\leq t\}}\1_{\{Y_{n}=i\}}.$ The multivariate process $N^{Y}\coloneqq(N^{1},...,N^{M})$ is the
said to be obtained by \textbf{$Y$-marking}. We have the following
result, a version of which we have encountered before. 
\begin{prop}
Assume $N^{Y}\coloneqq(N^{1},...,N^{M})$ is obtained from a Poisson
process $N$ with intensity function $\l(t)$, where $\{Y_{n}\}$
are\textcolor{black}{{} i.i.d.} and also independent of $N$, where
$\mP(Y_{n}=i)=p_{i}$. \textcolor{black}{Then,}

(i) the process $N^{i}$ is $\mF^{N^{Y}}$-Poisson with intensity
$p_{i}\l(t)$ for all $i=1,...,M$;

(ii) the processes $N^{1},...,N^{M}$ are independent. 
\end{prop}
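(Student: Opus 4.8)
The plan is to reduce this multivariate statement to the one-dimensional results already established in the excerpt, chiefly the Watanabe Characterization Theorem and the $Y$-marking/decomposition propositions, and then to extract independence from the product structure of the joint distribution of increments. Throughout I work on the filtration $\mF^{N^Y}$ generated by the marked process; note that since the $Y_n$ are i.i.d.\ and independent of $N$, this filtration is the one carrying all the relevant information, and each $N^i$ is adapted to it by construction.

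First I would prove (i). Fix $i$ and set $M^i(t)=N^i(t)-\int_0^t p_i\l(s)\,ds$. By the Watanabe Characterization Theorem it suffices to show $M^i$ is an $\mF^{N^Y}$-martingale, and then $N^i$ is automatically Poisson with intensity function $p_i\l(t)$. To check the martingale property I would compute, for $s\le t$, the conditional expectation $\mE[N^i(t)-N^i(s)\mid \cF_s^{N^Y}]$. Writing $N^i(t)-N^i(s)=\sum_{n}\1_{\{s<T_n\le t\}}\1_{\{Y_n=i\}}$, one conditions first on the underlying Poisson process $N$: given $N$, the increment counts $\1_{\{Y_n=i\}}$ over the jumps in $(s,t]$ are i.i.d.\ Bernoulli$(p_i)$ and independent of $\cF_s$ in the relevant sense, so $\mE[N^i(t)-N^i(s)\mid \cF_s^{N^Y}] = p_i\,\mE[N(t)-N(s)\mid \cF_s^{N}] = p_i\int_s^t\l(u)\,du$, using that $N$ itself has the martingale $N(t)-\int_0^t\l(u)du$. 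The one delicate point is that conditioning is with respect to $\cF_s^{N^Y}$, not $\cF_s^{N}$: I would argue that the extra information in $\cF_s^{N^Y}$ consists of the marks $Y_n$ attached to jumps in $[0,s]$, which are independent of everything happening after $s$ (both future jump times of $N$ and future marks), so it does not change the conditional expectation. This independence-of-the-past-marks argument is the main obstacle and the step I would write most carefully; an alternative cleaner route is to verify the conditional-distribution formula for $N^i(t)-N^i(s)$ directly by a Poissonian thinning computation and invoke the defining property of a Poisson process with a time-dependent intensity.

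For part (ii), independence of $N^1,\dots,N^M$, I would show that for any $0=t_0<t_1<\dots<t_k$ the joint law of the increment vectors $\big(N^i(t_{j})-N^i(t_{j-1})\big)_{i,j}$ factorizes appropriately — equivalently, that for fixed $j$ the vector $\big(N^1(t_j)-N^1(t_{j-1}),\dots,N^M(t_j)-N^M(t_{j-1})\big)$ has independent components and that increments over disjoint intervals are independent. Conditioning on $N(t_j)-N(t_{j-1})=n$, the decomposition of these $n$ jumps into the $M$ mark-classes is multinomial$(n;p_1,\dots,p_M)$, and averaging over $n\sim\text{Poisson}(\L_{t_{j-1},t_j})$ is exactly the classical fact that a Poisson number of multinomial trials splits into independent Poisson counts. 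This is precisely the content of Corollary following the decomposition Proposition (and the earlier Proposition on $Y$-marking), applied interval by interval; disjoint-interval independence then follows from the independent-increments property of $N$ together with the i.i.d.\ independent marking. Assembling these factorizations over all $j$ gives that the finite-dimensional distributions of $(N^1,\dots,N^M)$ are those of independent processes, which is independence.

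I expect the routine computations (the thinning/multinomial-split identities) to be mechanical once set up; the part requiring genuine care is the filtration bookkeeping in (i) — ensuring that conditioning on $\cF_s^{N^Y}$ rather than $\cF_s^N$ is harmless because the past marks are independent of the future — and, relatedly, being explicit that the intensity obtained, $p_i\l(t)$, is deterministic so that the Watanabe theorem applies verbatim. Once (i) is in hand, (ii) is essentially a corollary of the decomposition results already proved, so I would keep that part brief.
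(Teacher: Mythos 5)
The paper does not actually prove this proposition; it simply cites Bj\"ork (2021, Proposition~3.8), so there is no internal proof here to match your argument against. Judged on its own, your proof is correct and is the standard one.

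A few remarks on the points you yourself flag as delicate. For (i), reducing to Watanabe's characterization is exactly the right move, and it is genuinely lighter than checking the defining property directly: you only need the martingale identity $\mE[N^i(t)-N^i(s)\mid \cF_s^{N^Y}]=p_i\int_s^t\l(u)\,du$, not the full conditional independence of the increment from $\cF_s^{N^Y}$. The filtration bookkeeping you identify is the crux and it does close: write $\cF_s^{N^Y}=\cF_s^N\vee\sigma(Y_n : n\leq N(s))$; since the $Y_n$ are i.i.d.\ and independent of $N$, conditionally on $\cF_s^N$ the extra $\sigma$-field $\sigma(Y_n:n\leq N(s))$ is independent of $\bigl(N(t)-N(s),\,Y_{N(s)+1},Y_{N(s)+2},\dots\bigr)$, so it contributes nothing; an intermediate conditioning on $N(t)-N(s)$ then yields the $p_i$ factor. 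And, as you note, $p_i\l(t)$ is deterministic so Watanabe applies verbatim. For (ii), the interval-by-interval multinomial-thinning/Poisson-splitting computation (a Poisson number of i.i.d.\ trials splits into independent Poisson counts), combined with independent increments of $N$ and independence of marks across disjoint index sets, gives exactly the factorization of finite-dimensional distributions required for process-level independence; this is the time-inhomogeneous version of the constant-intensity decomposition Corollary stated earlier in the section, applied on each window $(t_{j-1},t_j]$. Nothing is missing.
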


\begin{proof}
See, for example, Bj\"ork, 2021 \cite{key-9}, Proposition 3.8, p.35. 
\end{proof}
Finally, consider \textcolor{black}{a} $M$-variate point process $N^{1},...,N^{M}$
with intensities $\l^{1},...,\l^{M}$. Define $\l(t)=\sum_{i=1}^{M}\l^{i}(t)$
and $N(t)=\sum_{i=1}^{M}N^{i}(t)$. Then, following \textcolor{black}{the
heuristic presentation} in Bj\"ork, 2021 \cite{key-9}, p.41, given
information on $N$ over $[0,t)$ and the fact that there is a jump
of $N$ at time $t$, the conditional probability\textcolor{black}{{}
of the} process $N^{i}$ \textcolor{black}{generating} the jump is 
\begin{align*}
\mP(dN^{i}(t)=1|\cF_{t-}^{N},dN(t)=1) & =\frac{\mP(dN^{i}(t)=1,dN(t)=1|\cF_{t-}^{N})}{\mP(dN(t)=1|\cF_{t-}^{N})}\\
 & =\frac{\mP(dN^{i}(t)=1|\cF_{t-}^{N})}{\l(t)dt}=\frac{\l^{i}(t)dt}{\l(t)dt}=\frac{\l^{i}(t)}{\l(t)}.
\end{align*}
This can be rigorously stated as (Last and Brandt, 1995 \cite{key-15}): 
\begin{prop}
Let $Y_{n}=\sum_{i=1}^{M}i\D N^{i}(T_{n})$. Then, 
\begin{align*}
\mP(Y_{n}=i|\cF_{T_{n}-}^{N})=\frac{\l^{i}(T_{n})}{\l(T_{n})}\quad\text{on the set}\quad\{T_{n}<\infty\}.
\end{align*}
\end{prop}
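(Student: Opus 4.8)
The plan is to identify the conditional expectation $\mE\big[\1_{\{Y_n=i\}}\1_{\{T_n<\infty\}}\mid\cF^{N}_{T_n-}\big]$ directly from the compensator (predictable intensity) of the point process, using the standard predictable-process representation of the strict past $\cF^{N}_{T_n-}$. Here $T_n$ denotes the $n$-th jump time of $N$, and I will take the $\l^i$ to be $\mF^{N}$-predictable intensities (automatic when the $\l^i$ are deterministic, in which case the martingale property of $N^i-\int_0^\cdot\l^i(s)\,ds$ carries over to $\mF^{N}$ by optional projection). Two classical facts will be used: (a) a predictable process evaluated at a stopping time $\tau$ is $\cF_{\tau-}$-measurable, and on $\{T_n<\infty\}$ the $\sigma$-field $\cF^{N}_{T_n-}$ is generated, up to null sets, by the random variables $\{H_{T_n}: H\text{ a bounded }\mF^{N}\text{-predictable process}\}$, a family stable under multiplication; (b) the ``integration-by-compensator'' identity $\mE\big[\int_0^\infty H_s\,dN^i_s\big]=\mE\big[\int_0^\infty H_s\l^i(s)\,ds\big]$ holds for every nonnegative $\mF^{N}$-predictable $H$ (and likewise with $N^i,\l^i$ replaced by $N,\l$), which is the usual monotone-class extension of the martingale property from indicators of stochastic intervals to general predictable integrands.

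The heart of the argument is a path-by-path bookkeeping identity at the $n$-th jump. Fix a bounded $\mF^{N}$-predictable $H$; then $s\mapsto H_s\1_{\{N(s-)=n-1\}}$ is again predictable, and since $N^i$ can jump only when $N$ does and $N$ (being a point process) has unit jumps, inspecting the jump times gives
\begin{align*}
\int_0^\infty H_s\1_{\{N(s-)=n-1\}}\, dN^i_s &= H_{T_n}\1_{\{Y_n=i\}}\1_{\{T_n<\infty\}},\\
\int_0^\infty H_s\1_{\{N(s-)=n-1\}}\, dN_s &= H_{T_n}\1_{\{T_n<\infty\}}.
\end{align*}
Applying fact (b) to the first line, and then fact (b) again to the predictable integrand $H_s\,\l^i(s)/\l(s)$ against $dN_s$ (legitimate because $\l^i/\l$ is predictable and, as one checks, $\l(T_n)>0$ a.s.\ on $\{T_n<\infty\}$, so the integrand is a.s.\ well defined), one obtains
\begin{align*}
\mE\big[H_{T_n}\1_{\{Y_n=i\}}\1_{\{T_n<\infty\}}\big] &= \mE\Big[\int_0^\infty H_s\1_{\{N(s-)=n-1\}}\l^i(s)\, ds\Big]\\
&= \mE\Big[H_{T_n}\,\frac{\l^i(T_n)}{\l(T_n)}\,\1_{\{T_n<\infty\}}\Big].
\end{align*}

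Since the identity $\mE[U\cdot\1_{\{Y_n=i\}}\1_{\{T_n<\infty\}}]=\mE[U\cdot\tfrac{\l^i(T_n)}{\l(T_n)}\1_{\{T_n<\infty\}}]$ now holds for $U=H_{T_n}$ with $H$ bounded $\mF^{N}$-predictable, a monotone-class argument based on fact (a) extends it to all bounded $\cF^{N}_{T_n-}$-measurable $U$; together with the $\cF^{N}_{T_n-}$-measurability of $\l^i(T_n)/\l(T_n)$ (again fact (a)) this yields $\mE[\1_{\{Y_n=i\}}\1_{\{T_n<\infty\}}\mid\cF^{N}_{T_n-}]=\tfrac{\l^i(T_n)}{\l(T_n)}\1_{\{T_n<\infty\}}$, which, since $\{T_n<\infty\}\in\cF^{N}_{T_n-}$, is exactly the claimed formula on $\{T_n<\infty\}$. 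I expect the only genuinely delicate ingredient to be the representation statement in fact (a) --- that $\cF^{N}_{T_n-}$ is generated on $\{T_n<\infty\}$ by predictable processes evaluated at $T_n$ --- which is standard in the general theory of processes (see, e.g., Dellacherie--Meyer, and Last and Brandt 1995 \cite{key-15} for the marked-point-process formulation); the positivity of $\l(T_n)$ on $\{T_n<\infty\}$ is a minor check, obtained by applying fact (b) to $H_s=\1_{\{\l(s)=0\}}\1_{\{N(s-)=n-1\}}$, and everything else is routine.
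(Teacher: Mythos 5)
The paper does not actually prove this proposition: it gives only the informal $dN(t)$-conditioning calculation that precedes the statement and then delegates the rigorous result to Last and Brandt 1995 \cite{key-15}. Your argument is a correct, self-contained rigorous proof that formalizes exactly that heuristic, and it is essentially the canonical argument from the marked-point-process literature. The path-by-path identities
\begin{align*}
\int_{0}^{\infty}H_{s}\1_{\{N(s-)=n-1\}}\,dN^{i}_{s}&=H_{T_{n}}\1_{\{Y_{n}=i\}}\1_{\{T_{n}<\infty\}},\\
\int_{0}^{\infty}H_{s}\1_{\{N(s-)=n-1\}}\,dN_{s}&=H_{T_{n}}\1_{\{T_{n}<\infty\}}
\end{align*}
are correct elementary bookkeeping (the $N^{i}$ have no common jumps and $N$ has unit jumps, so $\{N(s-)=n-1\}$ singles out $T_{n}$); the two applications of the compensator identity are legitimate for nonnegative predictable integrands; your positivity check that $\l(T_{n})>0$ a.s.\ on $\{T_{n}<\infty\}$, via $H_{s}=\1_{\{\l(s)=0\}}\1_{\{N(s-)=n-1\}}$, correctly removes the $0/0$ ambiguity; and the extension from $U=H_{T_{n}}$ to all bounded $\cF^{N}_{T_{n}-}$-measurable $U$ by the Dellacherie--Meyer characterization of $\cF_{\tau-}$ is standard. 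Two small remarks, neither a gap. First, for fact (b) you do not actually need the $\mF^{N}$-martingale property of $N^{i}-\int_{0}^{\cdot}\l^{i}(s)\,ds$: since $\mF^{N}$-predictable processes are also predictable for any larger filtration carrying the $N^{i}$, the identity $\mE[\int H\,dN^{i}]=\mE[\int H\l^{i}\,ds]$ follows directly from the compensator in the ambient filtration. What you genuinely do need is that $\l^{i}$ and $\l$ be $\mF^{N}$-predictable, so that $\l^{i}(T_{n})/\l(T_{n})$ is $\cF^{N}_{T_{n}-}$-measurable and the claimed conditional probability is even well-posed; you flag this correctly, and it holds automatically for deterministic intensities as in this paper. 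Second, it is worth saying explicitly that $\cF^{N}_{T_{n}-}$ is the usual strict-past $\sigma$-field at the stopping time $T_{n}$ (generated by $\cF^{N}_{0}$ and sets $A\cap\{t<T_{n}\}$, $A\in\cF^{N}_{t}$), since that is the hypothesis under which the predictable-representation fact (a) applies. With those clarifications your proof is airtight and does the work the paper delegates to the reference.
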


All of the above \textcolor{black}{propositions }can \textcolor{black}{be
generalized} \textcolor{black}{using }marked point processes and \textbf{random
measures}. We will \textcolor{black}{follow} Bj\"ork, 2021 \cite{key-9}\textcolor{black}{,
although} \textcolor{black}{more} detailed treatments are available
(\textcolor{black}{for example,} Last and Brandt, 1995 \cite{key-15}).

Consider a filtered probability space $(\W,\cF,\mF,\mP)$ and a measurable
mark space $(E,\cE)$. 
\begin{defn}
(\textcolor{black}{\emph{Market Point Process}}) A \textbf{marked
point process} $\Psi$ is a random measure on $(\mR_{+}\times E,\cB(\mR_{+})\otimes\cE)$. 
\end{defn}

$\Psi$ is a counting measure (\textcolor{black}{takes }only integer
values)\textcolor{black}{, i.e. }for every fixed $t$, $\Psi(\{t\},E)=0$
or $1$\textcolor{black}{, moreover,} $\Psi([0,t],E)<\infty$ $\mP$-a.s.
\textcolor{black}{For }every $A\in\cE$, we define the counting process
$N^{A}(t)\coloneqq\Psi([0,t]\times A)$. As was\textcolor{black}{{}
previously} mentioned, $\Psi$ can \textcolor{black}{be equivalently}
characterized in terms of the sequence $\{(T_{n},Y_{n})\}_{n=1}^{\infty}$,
where $(T_{n},Y_{n})\in\mR_{+}\times E$ and $T_{1}<T_{2}<...$. \textcolor{black}{Here,}
$\{T_{n}\}_{n}^{\infty}$ is the sequence of jump times (also called
event times) for the counting process $N^{E}(t),$ i.e. $T_{n}=\inf\{t\geq0:\Psi([0,t]\times E)=n\}.$

For a given marked point process $\Psi$ and a stochastic process
$\b:\mR_{+}\times E\times\W\mapsto\mR$, the stochastic integral \textcolor{black}{is
defined b}y $X(t)=\int_{0}^{t}\int_{E}\b_{s}(y)\Psi(dy,ds),$ or, in differential form $dX(t)=\int_{E}\b_{t}(y)\Psi(dy,dt),$ where $\b_{t}(y)$ \textcolor{black}{is the} jump size given the even
$\{T_{n}=t\}$. This stochastic integral \textcolor{black}{can be rewritten
as} the process $X(t)=\sum_{T_{n}\leq t}\b_{T_{n}}(Y_{n}).$

\begin{defn}
(\textcolor{black}{\emph{A Predictable Intensity}}) \textcolor{black}{The}
marked point process $\Psi$ \textcolor{black}{has} a \textbf{predictable
intensity} $\l:\mR_{+}\times\cE\times\W\mapsto\mR$ if \\
 (i) for each $(t,\w)$, $\l_{t}(\cdot,\w)$ is a nonnegative measure
on $(E,\cE)$; \\
 (ii) for each $A\in\cE$, the process $t\mapsto\l_{t}(A)$ is $\mF$-predictable;
\\
 (iii) for each $A\in\cE$, the counting process $N^{A}(t)\coloneqq\Psi([0,t]\times A)$
has the $\mF$-predictable intensity process $\l(A)$. 
\end{defn}

When $\l$ \textcolor{black}{exists, }\textcolor{black}{the compensated} martingale process\textcolor{black}{{} is} $N^{A}(t)-\int_{0}^{t}\l_{s}(A)ds.$ The heuristic interpretation of $\l$ as $\l_{t}(A)dt=\mE(dN^{A}(t)|\cF_{t-})$
still holds. More formally, for any $A\in\cE$ 
\begin{align*}
\mP(Y_{n}\in A|\cF_{T_{n}-}^{\Psi})=\frac{\l_{T_{n}}(A)}{\l_{T_{n}}(E)},\quad\text{on the set}\quad\{T_{n}<\infty\}.
\end{align*}
Hence, the intensity process $\l$ generates a measure-valued process $\G_{t}(dy)=\frac{\l_{t}(dy)}{\l_{t}(E)},$ and the marked point process can be described \textcolor{black}{via:
}(i) the intensity process $\l_{t}(E)$ that governs the distribution
of the event times $\{T_{n}\}_{n}^{\infty}$, and (ii) \textcolor{black}{given
an }event time $t$, the probability measure $\G_{t}(dy)$ \textcolor{black}{that
}governs the distribution over the mark space. 
\begin{prop}
\label{determ is poisson} Consider a marked point process $\Psi$
with intensity process $\l$. If $\l_{t}(dy)$ is deterministic, then
\\
 (i) the counting process $N^{E}(t)$ is Poisson with intensity function
$\l_{t}(E)$; \\
 (ii) for any collection $A_{1},...,A_{M}$ of $\cE$-measurable disjoint
subsets of $E$, the counting processes $N^{A_{1}},...,N^{A_{M}}$
are independent Poisson processes with intensity functions $\l_{t}(A_{1}),...,\l_{t}(A_{M})$,
respectively. 
\end{prop}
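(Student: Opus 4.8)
The plan is to derive everything from the Watanabe characterization theorem together with a single exponential (probability generating function) martingale computation for the independence claim. For part (i), observe that since $\l_t(dy)$ is deterministic the map $t\mapsto\l_t(E)$ is a deterministic function, and, $\Psi$ having intensity process $\l$, the compensated process $N^E(t)-\int_0^t\l_s(E)ds$ is an $\mF$-martingale; the Watanabe characterization theorem then yields at once that $N^E$ is an $\mF$-Poisson process with intensity function $\l_t(E)$. The identical argument with $E$ replaced by a fixed $A_m\in\cE$ shows that $N^{A_m}(t)-\int_0^t\l_s(A_m)ds$ is a martingale with deterministic compensator and $N^{A_m}$ is a counting process, so each $N^{A_m}$ is a Poisson process with intensity function $\l_t(A_m)$. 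Hence the only substantive content in part (ii) is the mutual independence of $N^{A_1},\dots,N^{A_M}$.

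For the independence, fix $z_1,\dots,z_M\in[0,1]$, put $\L^m_t=\int_0^t\l_s(A_m)ds$, and consider the process
\begin{align*}
Z(t)=\Big(\prod_{m=1}^M z_m^{N^{A_m}(t)}\Big)\exp\!\Big(-\sum_{m=1}^M(z_m-1)\L^m_t\Big).
\end{align*}
A direct computation with the Ito-Doeblin formula for jump processes gives $dZ(t)=Z(t-)dX(t)$, where $X(t)=\sum_{m=1}^M(z_m-1)\big(N^{A_m}(t)-\L^m_t\big)$ is an $\mF$-martingale as a linear combination of compensated counting processes; the disjointness of the $A_m$ enters precisely here, since it forces exactly one $N^{A_m}$ to jump, by one unit, at each event time, so that the jump part of $dZ$ is linear in the $dN^{A_m}$. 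Thus $Z$ is a local martingale, and since $0\le Z(t)\le\exp\big(\sum_m\L^m_T\big)$ on every compact $[0,T]$ it is a true martingale with $\mE Z(t)=Z(0)=1$.

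Rearranging the identity $\mE Z(t)=1$ gives
\begin{align*}
\mE\Big[\prod_{m=1}^M z_m^{N^{A_m}(t)}\Big]=\prod_{m=1}^M\exp\!\big((z_m-1)\L^m_t\big),
\end{align*}
that is, the joint probability generating function of $(N^{A_1}(t),\dots,N^{A_M}(t))$ is the product of the generating functions of the Poisson$(\L^m_t)$ laws; this re-confirms the Poisson marginals and shows that, for each fixed $t$, the variables $N^{A_1}(t),\dots,N^{A_M}(t)$ are independent. Repeating the computation on an arbitrary interval $[s,t]$ conditionally on $\cF_s$ --- the compensators being deterministic, the conditional generating function still factorizes and equals $\prod_m\exp((z_m-1)(\L^m_t-\L^m_s))$ --- shows the increment vector over $[s,t]$ is independent of $\cF_s$ with that product law; iterating over a finite partition $0=t_0<\dots<t_k=t$ yields joint independence of all the increments of $N^{A_1},\dots,N^{A_M}$, hence independence of these processes, which completes part (ii). Taking $M=1$ and $A_1=E$ recovers part (i).

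I expect the delicate point to be this final bootstrap from independence at a fixed time to independence of the processes: one must verify that the conditional generating function over $[s,t]$ genuinely factorizes given $\cF_s$ and then patch the pieces over a partition so as to control the joint law of \emph{all} the increments simultaneously. A second, more routine, matter is the verification that $Z$ is a true rather than merely a local martingale, which here is immediate from the boundedness of $Z$ on compact intervals, a consequence of the deterministic compensators. One could alternatively quote a multivariate Watanabe-type characterization directly (e.g.\ Last and Brandt, 1995 \cite{key-15}), but the generating-function route keeps the proof within the machinery already assembled in Section 2.
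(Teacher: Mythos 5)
Your proof is correct. The paper itself gives no proof of this proposition; it sits in Section 2 among other preliminaries that are stated as known facts and, where proofs are indicated at all, cited to Bj\"ork (2021) or Last and Brandt (1995). There is therefore no in-paper argument to compare yours against. Your route --- Watanabe's characterization for the marginal Poisson laws, then a bounded exponential martingale to factor the joint probability generating function of the increments, and finally a partition argument to bootstrap independence at fixed times to independence of the processes --- is the standard one and is sound. The identity $dZ(t)=Z(t-)\,dX(t)$ is exactly right and, as you flag, it is valid precisely because the $A_m$ are disjoint and $\Psi(\{t\},E)\in\{0,1\}$, so at any event time exactly one $N^{A_m}$ jumps, by one unit; this is what kills the cross terms in the It\^o--Doeblin formula. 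The bound $0\le Z(t)\le\exp\big(\sum_m\L^m_T\big)$, which you draw from the determinism of the compensators, is what upgrades $Z$ from a local to a true martingale. The conditional re-run on $[s,t]$ given $\cF_s$ goes through for the same reason (the conditional compensator is the same deterministic function), and patching these conditional factorizations over a partition does yield the full process-level independence. No gaps.
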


\textcolor{black}{When} $\l_{t}(dz)$ is \textcolor{black}{a constant,}
the intensity is a fixed nonrandom measure $\l(dy)$, \textcolor{black}{and}
the \textcolor{black}{events follow} a Poisson process with constant
intensity $\l(E)$\textcolor{black}{. When} an event occurs, the value
of $Y$ is determined by the probability measure $\G(dy)=\frac{\l(dy)}{\l(E)}$.
\textcolor{black}{Here,} $\Psi$ is a compound Poisson process.

Assume there are two marked point processes $\Psi_{1}$ and $\Psi_{2}$
defined on the same probability space with intensity measures $\l_{t}^{1}(dy)$
and $\l_{t}^{2}(dy)$\textcolor{black}{, respectively.} 
\begin{prop}
\label{no common jumps} Assume that $\Psi_{1}$ and $\Psi_{2}$ have
no common event times. Then, the random measure $\Psi$, defined by $\Psi(dt,dy)=\Psi_{1}(dt,dy)+\Psi_{2}(dt,dy),$ is a marked point process with intensity measure $\l_{t}(dy)=\l_{t}^{1}(dy)+\l_{t}^{2}(dy).$
\end{prop}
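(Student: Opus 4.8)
The plan is to verify the two defining properties of a marked point process for $\Psi$ and then identify its intensity. First I would check that $\Psi$ is a counting measure satisfying the required finiteness and simplicity conditions. For each fixed $t$, since $\Psi_1$ and $\Psi_2$ have no common event times, at most one of $\Psi_1(\{t\},E)$ and $\Psi_2(\{t\},E)$ is nonzero, and each is at most $1$; hence $\Psi(\{t\},E)=\Psi_1(\{t\},E)+\Psi_2(\{t\},E)\in\{0,1\}$. Finiteness on bounded intervals is immediate: $\Psi([0,t]\times E)=\Psi_1([0,t]\times E)+\Psi_2([0,t]\times E)<\infty$ $\mP$-a.s.\ as a sum of two a.s.-finite quantities. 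Thus $\Psi$ is a marked point process, and for each $A\in\cE$ we have $N^A(t)=N^{1,A}(t)+N^{2,A}(t)$, where $N^{i,A}(t)\coloneqq\Psi_i([0,t]\times A)$.

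Next I would identify the intensity. By hypothesis $N^{i,A}(t)-\int_0^t\l^i_s(A)\,ds$ is an $\mF$-martingale for $i=1,2$ and every $A\in\cE$. Adding the two martingales, $N^A(t)-\int_0^t(\l^1_s(A)+\l^2_s(A))\,ds$ is an $\mF$-martingale, so the process $t\mapsto \l^1_t(A)+\l^2_t(A)$ is (a version of) the $\mF$-predictable intensity of $N^A$; predictability is inherited since a sum of two $\mF$-predictable processes is $\mF$-predictable. Finally, for each $(t,\w)$ the set function $A\mapsto\l^1_t(A,\w)+\l^2_t(A,\w)$ is a nonnegative measure on $(E,\cE)$ as a sum of two such measures. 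This verifies conditions (i)--(iii) of the definition of a predictable intensity for $\Psi$, with $\l_t(dy)=\l^1_t(dy)+\l^2_t(dy)$, which is exactly the claim.

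The only genuinely delicate point is the passage ``$\Psi_1$ and $\Psi_2$ have no common event times $\Rightarrow$ $\Psi$ is simple,'' i.e.\ ruling out a doubled jump at a single instant. One must be a little careful about what ``no common event times'' means: it should be read as the a.s.\ statement that the jump-time sets $\{T_n^1\}$ and $\{T_n^2\}$ are disjoint, equivalently $\D N^{1,E}(t)\,\D N^{2,E}(t)=0$ for all $t$ a.s.\ (the natural analogue of the no-common-jumps condition in Definition of an $M$-variate point process). Granting that reading, the argument above is routine; the identification of the intensity via the martingale (compensator) characterization is the clean way to avoid re-deriving conditional jump probabilities, though one could alternatively check the formula $\mP(Y_n\in A\mid\cF^{\Psi}_{T_n-})=\l_{T_n}(A)/\l_{T_n}(E)$ directly using that on $\{T_n=T^1_k\}$ the mark comes from $\Psi_1$ and on $\{T_n=T^2_k\}$ from $\Psi_2$.
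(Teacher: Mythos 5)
The paper states this proposition in its preliminaries without giving a proof; like the surrounding facts it is imported from the literature (Bj\"ork 2021, Last--Brandt 1995), so there is no in-text argument to compare against. Your proof is correct and is the standard one: you verify that $\Psi=\Psi_1+\Psi_2$ is a simple, locally finite counting measure using the no-common-event-times hypothesis (so that for each $t$ at most one of $\Psi_1(\{t\},E)$, $\Psi_2(\{t\},E)$ is nonzero), and you then identify the intensity by adding the two compensated martingales $N^{i,A}(t)-\int_0^t\l^i_s(A)\,ds$, noting that predictability of $t\mapsto\l^1_t(A)+\l^2_t(A)$ and the measure property of $A\mapsto\l^1_t(A,\w)+\l^2_t(A,\w)$ are both preserved under addition. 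Your remark that ``no common event times'' should be read as $\D N^{1,E}(t)\,\D N^{2,E}(t)=0$ for all $t$ a.s.\ is exactly the right precision, and it matches the paper's own definition of an $M$-variate point process as one without common jumps. The alternative you mention -- verifying the mark law $\mP(Y_n\in A\mid\cF_{T_n-}^{\Psi})=\l_{T_n}(A)/\l_{T_n}(E)$ directly by conditioning on which of $\Psi_1,\Psi_2$ generates $T_n$ -- would also work but requires more care with the filtration carried by $\Psi$ versus the one carried by the pair $(\Psi_1,\Psi_2)$; the martingale route you chose sidesteps that cleanly.
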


\textcolor{black}{If the integrand of the stochastic integral with this
intensity measure satisfies suitable}\textcolor{black}{{} i}ntegrability
properties, the \textcolor{black}{integral is} a (local) martingale.
More precisely, 
\begin{prop}
Let $\Psi$ be a marked point process with predictable intensity measure
$\l$. Assume that $H$ is predictable \textcolor{black}{with} $\mE\left(\int_{0}^{\infty}\int_{E}|H_{t}(y)|\l_{t}(dy)dt\right)<\infty.$ Then, the process $M$ defined by $M(t)=\int_{0}^{t}\int_{E}H_{s}(y)\{\Psi(ds,dy)-\l_{s}(dy)ds\}$ is a martingale. If $H$ only satisfies $\int_{0}^{\infty}\int_{E}|H_{t}(y)|\l_{t}(dy)dt<\infty\quad\mP\text{-a.s.},$ then $M$ is a local martingale. 
\end{prop}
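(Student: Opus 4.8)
The plan is to establish the martingale claim first under the stronger integrability hypothesis and then bootstrap to the local martingale case by a standard localization argument. For the first part, I would proceed via a monotone class / approximation scheme on the integrand $H$. First I would verify the claim for elementary predictable integrands of the form $H_t(y,\w)=\1_{(a,b]}(t)\1_A(y)G(\w)$ with $G$ bounded and $\cF_a$-measurable and $A\in\cE$. For such an integrand,
\begin{align*}
M(t)=G\bigl(N^A(t\wedge b)-N^A(t\wedge a)\bigr)-G\int_{t\wedge a}^{t\wedge b}\l_s(A)\,ds,
\end{align*}
and the martingale property on $[a,b]$ reduces exactly to the defining property (iii) of a predictable intensity, namely that $N^A(t)-\int_0^t\l_s(A)\,ds$ is an $\mF$-martingale; pulling $G$ out uses that it is $\cF_a$-measurable, and for $t\le a$ the process is identically zero. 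Taking finite linear combinations extends this to all simple predictable integrands.

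Next I would pass to general predictable $H$ satisfying the integrability bound (1) by a monotone class argument: the collection of bounded predictable $H$ for which the conclusion holds is a vector space, closed under bounded monotone limits (here one invokes dominated convergence, using (1) as the dominating bound, both for the stochastic integral against $\Psi$ — which is really a countable sum over the jump points $T_n\le t$ — and for the compensator integral $\int_0^t\int_E H_s(y)\l_s(dy)\,ds$), and contains the simple integrands above, hence contains all bounded predictable $H$ by the functional monotone class theorem. Finally a truncation $H^{(k)}=H\1_{\{|H|\le k\}}$ together with dominated convergence in $L^1$, again dominated by (1), removes the boundedness restriction. This gives that $M$ is a true martingale. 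One should also note the a.s.\ finiteness of $M(t)$: the positive and negative parts of both $\sum_{T_n\le t}H_{T_n}(Y_n)$ and $\int_0^t\int_E H_s(y)\l_s(dy)\,ds$ are integrable by (1), so everything is well defined.

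For the local martingale statement, assume only (2). I would introduce the stopping times
\begin{align*}
\t_k=\inf\Bigl\{t\ge 0:\int_0^t\int_E |H_s(y)|\l_s(dy)\,ds\ge k\Bigr\},
\end{align*}
which are $\mF$-stopping times because the integrated process is predictable (indeed continuous and adapted) and nondecreasing; by (2) we have $\t_k\uparrow\infty$ $\mP$-a.s. Then $H^{k}_t(y):=H_t(y)\1_{\{t\le\t_k\}}$ is predictable and satisfies the stronger bound $\mE\int_0^\infty\int_E|H^k_t(y)|\l_t(dy)\,dt\le k<\infty$, so by the first part the corresponding $M^k$ is a true martingale; and $M^k(t)=M(t\wedge\t_k)$ because the jump of $M$ at $T_n$ is $H_{T_n}(Y_n)$ and the compensator only accumulates mass on $[0,\t_k]$ in the truncated version. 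Hence $M$ is a local martingale with localizing sequence $(\t_k)$.

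The main obstacle I anticipate is the passage from simple to general predictable integrands cleanly — specifically, making the dominated-convergence step fully rigorous for the stochastic integral against $\Psi$, since that ``integral'' is an absolutely convergent sum $\sum_{T_n\le t}H_{T_n}(Y_n)$ whose convergence and interchange-of-limit properties must be justified pathwise (or in $L^1$) using exactly bound (1) rewritten as $\mE\sum_n|H_{T_n}(Y_n)|\1_{\{T_n\le t\}}=\mE\int_0^t\int_E|H_s(y)|\l_s(dy)\,ds<\infty$, which is itself the content of the compensation formula for $\Psi$. If one is willing to take the compensation formula $\mE\sum_{T_n\le t}f(T_n,Y_n)=\mE\int_0^t\int_E f(s,y)\l_s(dy)\,ds$ for nonnegative predictable $f$ as known (it is implicit in the definition of the predictable intensity and the measure-valued process $\G_t$ already introduced), the rest is routine; I would cite Bj\"ork 2021 \cite{key-9} or Last and Brandt 1995 \cite{key-15} for this formula rather than reprove it.
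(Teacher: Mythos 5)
Your proof is correct and follows the standard approach (verify on elementary predictable integrands via the defining martingale property of the compensated counting process, extend by linearity and monotone class, truncate using the compensation formula to remove boundedness, then localize with hitting times of the continuous increasing compensator integral to pass from (1) to (2)). The paper itself does not supply a proof of this proposition — it is stated as a preliminary fact from the point-process literature, in the same spirit as the surrounding results credited to Bj\"ork 2021 \cite{key-9} and Last and Brandt 1995 \cite{key-15} — so there is no alternative argument in the source to compare against; your reconstruction is exactly the one those references give.
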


\begin{prop}
\textbf{(A Martingale Representation Theorem)}
Consider a filtered probability space $(\W,\cF,\mF,\mP)$ that carries a marked point process $\Psi$ with mark space $E$ \textcolor{black}{and
a} $\mF$-predictable intensity $\l$. Assume that $\mF$ is the natural filtration generated by $\Psi$ and let $X$ be a $(\mP,\mF)$-martingale (or local martingale). Then, there exists a predictable process $H$ and a real number $x$ such that $X(t)=x+\int_{0}^{t}\int_{E}H_{s}(y)\{\Psi(ds,dy)-\l_{s}(dy)ds\},$ or, in differential form, $dX(t)=\int_{E}H_{t}(y)\{\Psi(dt,dy)-\l_{t}(dy)dt\},$ where $H$ satisfies the corresponding conditions above in the martingale case and in the local martingale case. 
\end{prop}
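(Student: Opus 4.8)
The plan is to prove the predictable representation property (PRP) of the compensated random measure $\wt{\Psi}(ds,dy)\coloneqq\Psi(ds,dy)-\l_s(dy)\,ds$, which is precisely the assertion. First I would reduce to a manageable class of martingales. In the local martingale case, stop at a localizing sequence $\t_k\uparrow\infty$ so that each $X^{\t_k}$ is a uniformly integrable martingale; a representation on each $\lb 0,\t_k\rb$ that is consistent as $k$ grows patches into a global one, so it suffices to handle true martingales, and then by truncation together with Doob's maximal inequality it suffices to handle bounded martingales on a finite horizon, i.e. $X(t)=\mE[Z\mid\cF_t]$ with $Z$ bounded and $\cF_T$-measurable. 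A monotone-class argument reduces further to $Z$ that depends on only finitely many of the pairs $(T_1,Y_1),\dots,(T_k,Y_k)$, since (as recalled above) $\Psi$ is characterized by the sequence $\{(T_n,Y_n)\}$ and such functionals generate $\cF_T$.

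For such a $Z$ I would construct $H$ explicitly by induction on the number of jumps, exploiting the structure of the natural filtration between consecutive event times. On a stochastic interval $\lb T_n,T_{n+1}\rb$ no new information enters $\mF$ except the passage of time, so on $\{t<T_{n+1}\}$ the value $X(t)$ is a deterministic function of $t$ and of $\cF_{T_n}$; moreover the conditional law of $(T_{n+1}-T_n,Y_{n+1})$ given $\cF_{T_n}$ is governed by $\l$ (the intensity $\l_t(E)$ governs the next event time and $\G_t(dy)=\l_t(dy)/\l_t(E)$ the mark). Writing the martingale identity $\mE[X(u)\mid\cF_t]=X(t)$, splitting according to whether an event occurs in $(t,u]$, and invoking the induction hypothesis for the post-$T_{n+1}$ behavior, one obtains (letting $u\downarrow t$) a linear integral/ODE equation for $t\mapsto X(t)$ on $\lb T_n,T_{n+1}\rb$; its solution is absolutely continuous and its drift is forced to equal $-\int_E H_t(y)\,\l_t(dy)$, where $H_t(y)\coloneqq \varphi_t(y)-X(t-)$ and $\varphi_t(y)$ is the $\cF_{t-}$-determined value $X$ would take were a mark-$y$ event to occur at time $t$. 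This yields a predictable $H$ with $dX(t)=\int_E H_t(y)\,\wt{\Psi}(dt,dy)$ on $\lb T_n,T_{n+1}\rb$; concatenating over $n$ (only the first $k$ jumps are relevant) gives the representation with $x=X(0)=\mE Z$.

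To remove the restriction on $Z$, take $Z_j\to Z$ in $L^2(\mP)$ with each $Z_j$ in the special class. Doob's inequality gives uniform $L^2$-convergence of the associated martingales, and the It\^o isometry for integrals against $\wt{\Psi}$,
\begin{align*}
\mE\left(\int_0^T\!\!\int_E\big(H^j_s(y)-H^{i}_s(y)\big)\wt{\Psi}(ds,dy)\right)^{2}=\mE\int_0^T\!\!\int_E\big(H^j_s(y)-H^{i}_s(y)\big)^{2}\l_s(dy)\,ds,
\end{align*}
shows that $(H^j)$ is Cauchy in $L^2(\l_s(dy)\,ds\otimes\mP)$; its limit $H$ is predictable and represents $X$ on $[0,T]$. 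Letting $T\uparrow\infty$ and undoing the localization yields the general (local) martingale case.

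The main obstacle is the inter-jump step: making rigorous that an $\mF$-martingale is, on each interval between consecutive event times, a deterministic absolutely continuous function of time given the past, and extracting from the martingale property the exact equation that pins down its drift in terms of $\l$ and the post-jump values. Equivalently, the crux is verifying that $\mP$ is an extreme point of the set of probability measures under which $\Psi$ has compensator $\l_s(dy)\,ds$, after which the abstract Jacod--Yor criterion delivers the PRP; the remaining steps (localization, truncation, monotone class, It\^o isometry) are routine. This is the classical point-process martingale representation theorem; see Last and Brandt, 1995 \cite{key-15}, as well as the treatments of Br\'emaud and of Jacod.
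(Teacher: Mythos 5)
The paper does not actually prove this proposition; it is one of the collected ``preliminaries'' quoted from the literature (Bj\"ork, Br\'emaud, Last--Brandt and Jacod), with no argument given in the text. So there is no internal proof to compare against, and your sketch should be judged on its own merits.

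Your outline is the classical constructive argument and is essentially correct: reduce by localization and a monotone-class argument to terminal values depending on finitely many $(T_n,Y_n)$; use the piecewise-deterministic nature of the natural point-process filtration to show that between event times an $\mF$-martingale is an absolutely continuous function of $t$ conditionally on $\cF_{T_n}$; the martingale identity then forces the drift to be $-\int_E H_t(y)\,\l_t(dy)\,dt$ with $H_t(y)=\varphi_t(y)-X(t-)$, where $\varphi_t(y)$ is the predictable ``post-jump'' value; and concatenate. You also correctly identify the abstract shortcut (extremality of $\mP$ among measures with the prescribed compensator), though the relevant theorem here is Jacod's representation theorem for general semimartingales with jumps, of which Jacod--Yor is the continuous special case. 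Two points deserve more care in a full write-up. First, the proposition's integrability conditions (1) and (2) are $L^{1}$ conditions $\mE\int\!\!\int |H|\,\l_s(dy)\,ds<\infty$ (resp. a.s.\ finiteness), while your passage to the limit is phrased via the It\^o isometry in $L^{2}(\l_s(dy)\,ds\otimes\mP)$; to obtain exactly (1) you should either work in $\mathcal{H}^1$/ucp and use the BDG-type bounds for purely discontinuous integrals, or note that every local martingale of the natural point-process filtration is of finite variation and stop at $T_n\wedge n$ to land in the bounded case directly. Second, ``truncation plus Doob's maximal inequality'' gets you $L^1$ convergence of $X$, which does not by itself pass the representing integrands to the limit; the usual fix is the just-mentioned localization to bounded (hence $L^{2}$) martingales, after which your isometry argument applies unchanged. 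With those adjustments the proposal matches the standard proofs in Br\'emaud and in Last--Brandt.
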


Finally, we will need the \textcolor{black}{following version }\textcolor{black}{of
Girsanov's} theorem for marked point processes. 
\begin{prop}
\textbf{(Girsanov's Theorem for Point Processes)}

Consider a filtered probability space $(\W,\cF,\mF,\mP)$ that carries
a marked point process $\Psi$ with mark space $E$ \textcolor{black}{and
a} $\mF$-predictable intensity $\l$. Let $\varphi$ be a predictable process satisfying $\varphi_{t}(y)\geq0,\quad\mP-\text{a.s. \quad for all}\quad(t,y).$ \textcolor{black}{Define the} Radon-Nikodym process $Z$ by $dZ(t) =Z(t-)\int_{E}(\varphi_{t}(y)-1)\left[\Psi(dy,dt)-\l_{t}(dy)dt\right], Z(0) =1.$ Assume that $\mE_{\mP}Z(T)=1$ and define the probability measure
$\wt{\mP}$ by $Z(T)=\frac{d\wt{\mP}}{d\mP}\quad\text{on}\ \cF_{T}.$ Then, $\Psi$ has the predictable $(\wt{\mP},\mF)$-intensity $\wt{\l}$
given by $\wt{\l}_{t}(dy)=\varphi_{t}(y)\l_{t}(dy).$
\end{prop}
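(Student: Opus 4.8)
The plan is to prove Girsanov's theorem for marked point processes by identifying the $\wt{\mP}$-compensator of $\Psi$ directly through a martingale-testing argument. First I would fix $A \in \cE$ and consider the counting process $N^A(t) = \Psi([0,t]\times A)$, whose $(\mP,\mF)$-compensator is $\int_0^t \l_s(A)\,ds$. The goal is to show that
\begin{align*}
\wt{M}^A(t) = N^A(t) - \int_0^t \int_A \varphi_s(y)\,\l_s(dy)\,ds
\end{align*}
is a $(\wt{\mP},\mF)$-local martingale; by the heuristic interpretation of the intensity (and, if one wants full rigor on the mark space, the disintegration $\l_t(dy) = \l_t(E)\,\G_t(dy)$ together with Proposition on $\mP(Y_n \in A \mid \cF^\Psi_{T_n-})$), this identifies $\wt{\l}_t(dy) = \varphi_t(y)\,\l_t(dy)$ as the predictable $(\wt{\mP},\mF)$-intensity measure.

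The key reduction is the abstract Bayes/Girsanov principle: for any adapted process $X$, $X$ is a $(\wt{\mP},\mF)$-local martingale if and only if $XZ$ is a $(\mP,\mF)$-local martingale, where $Z$ is the Radon-Nikodym density process. So the heart of the proof is to compute $d(\wt{M}^A(t)\,Z(t))$ via the integration-by-parts (product) formula for semimartingales with jumps and check that the finite-variation (drift) part vanishes. I would write $Z(t) = 1 + \int_0^t Z(s-)\int_E (\varphi_s(y)-1)[\Psi(dy,ds) - \l_s(dy)\,ds]$ from the stated SDE, write $\wt{M}^A(t) = N^A(t) - \int_0^t \l_s(A)\,ds - \int_0^t\int_A(\varphi_s(y)-1)\l_s(dy)\,ds$, and expand
\begin{align*}
d(\wt{M}^A Z)(t) = \wt{M}^A(t-)\,dZ(t) + Z(t-)\,d\wt{M}^A(t) + d[\wt{M}^A, Z](t).
\end{align*}
The first term is already a local-martingale increment. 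For the remaining two, I would use that the jump of $Z$ at an event time $T_n$ with mark $Y_n$ is $Z(T_n-)(\varphi_{T_n}(Y_n)-1)$, so the bracket term contributes $Z(s-)(\varphi_s(y)-1)\Psi(dy,ds)$ on $A$, while $Z(s-)\,d\wt{M}^A(s)$ contributes $Z(s-)[\Psi(dy,ds) - \varphi_s(y)\l_s(dy)\,ds]$ on $A$. Adding, the $\Psi$-terms combine to $Z(s-)\varphi_s(y)\Psi(dy,ds)$ on $A$ and the compensator terms to $-Z(s-)\varphi_s(y)\l_s(dy)\,ds$ on $A$; the difference is precisely a $(\mP,\mF)$-local-martingale differential against the $\mP$-compensated measure $\Psi(dy,ds) - \l_s(dy)\,ds$ (here using the nonnegativity $\varphi \geq 0$ and predictability of $\varphi$ to control integrability, invoking the local-martingale version of the integration proposition). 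Hence $\wt{M}^A Z$ is a $(\mP,\mF)$-local martingale, so $\wt{M}^A$ is a $(\wt{\mP},\mF)$-local martingale.

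To upgrade from "$N^A - \int \wt{\l}(A)$ is a local martingale for each $A$" to the statement that $\wt{\l}_t(dy) = \varphi_t(y)\l_t(dy)$ is the predictable intensity \emph{measure} of $\Psi$ under $\wt{\mP}$, I would appeal to the definition of a predictable intensity: the three conditions are (i) $\wt{\l}_t(\cdot)$ is a nonnegative measure, which is clear since $\varphi_t \geq 0$ and $\l_t(\cdot)$ is a measure; (ii) $t \mapsto \wt{\l}_t(A)$ is $\mF$-predictable, which follows from predictability of $\varphi$ and of $\l(A)$; and (iii) the compensation property just established. One should also note $\wt{\mP} \sim \mP$ (or at least $\wt{\mP} \ll \mP$) so that $\mP$-a.s. statements transfer appropriately; since $Z(T)$ is a genuine density with $\mE_\mP Z(T) = 1$ this is automatic on $\cF_T$, and absolute continuity in the other direction would need $\varphi > 0$, which is not assumed, so I would state the conclusion on $[0,T]$ under $\wt{\mP} \ll \mP$. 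I expect the main obstacle to be the bookkeeping in the integration-by-parts step — correctly handling the quadratic-covariation term $[\wt{M}^A, Z]$ for pure-jump processes and making sure the integrability hypotheses are met so that one genuinely gets a local martingale rather than merely a process with vanishing drift — together with the measure-theoretic care needed to pass from the family of sets $A$ to the full random measure $\Psi$ on $(\mR_+ \times E)$.
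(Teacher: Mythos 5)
The paper states this proposition without proof, as one of the preliminary results collected from the literature (Bj\"ork 2021; Last and Brandt 1995 is the paper's cited source for the adjacent marked--point--process material, and Br\'emaud is also in the bibliography). Your argument -- reduce via the abstract Bayes rule to showing $\wt{M}^A Z$ is a $(\mP,\mF)$-local martingale, then run the semimartingale product rule and verify that $Z(t-)\,d\wt{M}^A(t) + d[\wt{M}^A,Z](t) = Z(t-)\int_A \varphi_t(y)\bigl[\Psi(dy,dt)-\l_t(dy)\,dt\bigr]$ is a compensated integral -- is exactly the standard proof found in those references, and your computation of the jump covariation and your remarks on $\wt{\mP}\ll\mP$ versus $\wt{\mP}\sim\mP$ and on the needed integrability are correct.
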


We end with a version of \textcolor{black}{Girsanov's }theorem that
includes a $D$-dimensional Brownian motion process $W.$ 
\begin{prop}
\textbf{(Girsanov's Theorem for General Jump-Diffusions)}

Consider a filtered probability space $(\W,\cF,\mF,\mP)$ that carries
a marked point process $\Psi$ with mark space $E$\textcolor{black}{,
a} $\mF$-predictable intensity $\l$\textcolor{black}{, and} a $D$-dimensional
Brownian motion $W$. Let $\th$ be a $D$-dimensional column vector process and let $\varphi$ be a predictable process satisfying $\varphi_{t}(y)\geq0,\quad\mP-\text{a.s. \quad for all}\quad(t,y),$ \textcolor{black}{Define the} Radon-Nikodym process $Z$ by 
\begin{align*}
dZ(t) & =Z(t)\th(t)^{T}dW(t)+Z(t-)\int_{E}(\varphi_{t}(y)-1)\left[\Psi(dy,dt)-\l_{t}(dy)dt\right].\\
Z(0) & =1,
\end{align*}
where $^{T}$ denotes transpose. Assume that $\mE_{\mP}Z(T)=1$ and
define the probability measure $\wt{\mP}$ by $Z(T)=\frac{d\wt{\mP}}{d\mP}\quad\text{on}\ \cF_{T}.$ Then, $\Psi$ has the predictable $(\wt{\mP},\mF)$-intensity $\wt{\l}$ given by $\wt{\l}_{t}(dy)=\varphi_{t}(y)\l_{t}(dy),$ and $\wt{W}$, defined by $d\wt{W}(t)=dW(t)-\th(t)dt,$ is a $\wt{\mP}$- Brownian motion. 
\end{prop}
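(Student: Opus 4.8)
The plan is to combine the two single-ingredient Girsanov theorems already established---the one for marked point processes and the one for Brownian motion---by exploiting the independence/orthogonality structure of the driving noises. First I would note that the candidate density process $Z$ factors as $Z=Z_1 Z_2$, where $Z_1$ solves $dZ_1(t)=Z_1(t)\th(t)^T dW(t)$ with $Z_1(0)=1$ (the continuous Girsanov exponential) and $Z_2$ solves $dZ_2(t)=Z_2(t-)\int_E(\varphi_t(y)-1)[\Psi(dy,dt)-\l_t(dy)dt]$ with $Z_2(0)=1$ (the pure-jump Girsanov density from the preceding proposition). Since $W$ is continuous and $\Psi$ is a pure-jump random measure, the stochastic integrals $\int Z_1 \th^T dW$ and $\int Z_2(s-)\int_E(\varphi-1)[\Psi-\l\,ds]$ have no common jumps and their covariation vanishes, so an application of the integration-by-parts (product) formula from the Ito--Doeblin calculus gives $d(Z_1Z_2)=Z_2\,dZ_1+Z_1\,dZ_2 = Z(t)\th(t)^T dW(t)+Z(t-)\int_E(\varphi_t(y)-1)[\Psi(dy,dt)-\l_t(dy)dt]$, which is exactly the SDE posited for $Z$; hence the product solves the stated equation and $Z\ge 0$.

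Next, with the hypothesis $\mE_\mP Z(T)=1$ in force, $\wt\mP$ is a well-defined probability measure equivalent to $\mP$ on $\cF_T$, and I would verify the two conclusions separately. For the jump part, I want to show $\Psi$ has $(\wt\mP,\mF)$-intensity $\wt\l_t(dy)=\varphi_t(y)\l_t(dy)$; for the Brownian part, I want $\wt W(t)=W(t)-\int_0^t\th(s)\,ds$ to be a $\wt\mP$-Brownian motion. Each of these is precisely the content of one of the already-proven propositions, but applied with the ``wrong'' density. The clean way to import them is a conditioning argument: define the two one-step-changed measures $\mP^{(1)}$ via $dZ_1(T)$ and $\mP^{(2)}$ via $dZ_2(T)$, observe $Z_1$ is a $\mP^{(2)}$-martingale (because the change of measure driven by $Z_2$ alters only the jump intensity and leaves the law of $W$---and in particular its martingale property and quadratic variation---intact, as $W$ and $\Psi$ play orthogonal roles), and symmetrically $Z_2$ is a $\mP^{(1)}$-martingale; then factor $\wt\mP$ as a two-stage change of measure and apply the single-ingredient theorems at each stage.

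Concretely, I would argue: (a) under $\mP^{(2)}=Z_2(T)\cdot\mP$, the Brownian-only Girsanov proposition applies verbatim (the density $Z_1$ has exactly the prescribed continuous-exponential form and, since changing the jump intensity does not disturb $W$, $W$ remains a $\mP^{(2)}$-Brownian motion with the same $\th$), so $\wt W$ is a $\wt\mP$-Brownian motion; (b) under $\mP^{(1)}=Z_1(T)\cdot\mP$, the point-process Girsanov proposition applies verbatim (the density $Z_2$ has exactly the prescribed form, and the continuous change of measure does not affect the predictable intensity of a pure-jump random measure, since the Brownian integral contributes nothing to the compensator of $\Psi$), so $\Psi$ has $\wt\mP$-intensity $\varphi_t(y)\l_t(dy)$. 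Finally, independence of $\wt W$ and $Q=\int_0^\cdot\int_E y\,\Psi(dy,ds)$ under $\wt\mP$ follows from the same orthogonality: they are driven by noises with no common jumps and zero covariation, so the martingale problem / characteristic-function argument (exactly as in the earlier combined Brownian--compound-Poisson proposition) separates, giving a product form for the joint law.

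The main obstacle I anticipate is making the ``orthogonality'' claims fully rigorous rather than heuristic---specifically, justifying that $Z_1$ remains a martingale under $\mP^{(2)}$ and $Z_2$ under $\mP^{(1)}$, i.e.\ that the two changes of measure genuinely commute. This is where one needs to be careful: one should check that $[Z_1,Z_2]=0$ (true since $Z_1$ is continuous and $Z_2$ is of finite variation off the jumps of $\Psi$, which are disjoint from any jumps of $Z_1$), invoke the product rule to see $Z_1Z_2$ is a local $\mP$-martingale, and then use the assumed $\mE_\mP Z(T)=1$ to upgrade to a true martingale so that $\wt\mP$ is a probability measure; the commuting-of-measure-changes statement then follows from a Bayes-rule computation. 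Everything else---the algebra of the product SDE, the form of $\wt W$, the identification of $\wt\l$---is routine given the preceding propositions.
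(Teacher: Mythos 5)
The paper states this proposition in the Preliminaries without proof (it is imported from the point-process literature, Bj\"ork 2021), so there is no in-paper argument to measure you against; what follows assesses your proposal on its own terms. Your opening steps are sound: the factorization $Z=Z_{1}Z_{2}$ with $Z_{1}=\mathcal{E}\bigl(\int\th^{T}\,dW\bigr)$ and $Z_{2}$ the pure-jump Girsanov exponential, the observation $[Z_{1},Z_{2}]=0$ (one factor continuous, the other purely discontinuous), and the product-rule verification that $Z_{1}Z_{2}$ solves the stated SDE are all correct and are exactly how one should begin.

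The gap is in the two-stage measure change. You define $\mP^{(1)}=Z_{1}(T)\cdot\mP$ and $\mP^{(2)}=Z_{2}(T)\cdot\mP$ and then apply the single-ingredient Girsanov theorems under these intermediate measures. But for $\mP^{(1)}$ or $\mP^{(2)}$ to be probability measures you need $\mE_{\mP}Z_{1}(T)=1$ or $\mE_{\mP}Z_{2}(T)=1$ separately, and this does \emph{not} follow from the stated hypothesis $\mE_{\mP}[Z_{1}(T)Z_{2}(T)]=1$: each factor is a nonnegative local martingale and hence a supermartingale, but a product of two strict supermartingales can still have expectation one. Your proposed fix---``use $\mE_{\mP}Z(T)=1$ to upgrade to a true martingale, then a Bayes-rule computation''---only establishes that $Z=Z_{1}Z_{2}$ is a true $\mP$-martingale; the Bayes identity $\mE_{\mP^{(2)}}[Z_{1}(T)\,|\,\cF_{t}]=Z(t)/Z_{2}(t)$ presupposes that $\mP^{(2)}$ is already a probability measure, which is precisely what is unproven. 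You should either add the (common but nontrivial) hypothesis that $Z_{1}$ and $Z_{2}$ are individually $\mP$-martingales, or, better, avoid intermediate measures altogether: work with the SDE for $Z$ directly and show by integration by parts that $Z\wt{W}$ and $Z\bigl(N^{A}-\int_{0}^{\cdot}\wt{\l}_{s}(A)\,ds\bigr)$ are $\mP$-local martingales for each $A\in\cE$ (their mutual covariations with $Z$ are what produce the Girsanov drift and intensity corrections), then conclude by the L\'evy characterization for $\wt{W}$ and the Watanabe characterization for the compensated counting processes. That route uses exactly the same algebra as yours but never needs the factors to be martingales on their own, so it closes the hole without extra hypotheses.
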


\textcolor{black}{The first fundamental theorem of asset pricing states that for a nonnegative semimartingale $S$, no arbitrage in the sense of no free lunch with vanishing risk (NFLVR, see Jarrow \cite{key-11}) is equivalent
to the existence of a probability measure $\mQ$ on $(\Omega,\mathcal{G})$ equivalent to $\mP$ under which $S$ is a local martingale. We denote the set of equivalent martingale measures
(EMMs) for $S$ w.r.t. $\mG$ by $\mathfrak{\mathfrak{M}}(S,\mG)$. Similarly, the set of equivalent local
martingale measures (ELMMs) will be denoted as $\mathfrak{M}_{l}(S,\mG).$ We drop the dependence on
$S$ and the underlying filtration $\mG$ when there is no confusion.
\begin{prop}
\textbf{(The First Fundamental Theorem }\textbf{\textcolor{black}{(FFTAP)}}\textbf{)}
NFLVR if and only if $\fM_{l}\neq\varnothing$.
\end{prop}}

\begin{defn}
Assume $\mathfrak{M}_{l}\neq\emptyset$. Choose $\mQ\in\mathfrak{M}_{l}$. The market is said to be complete with respect to $\mQ$ if given any $Z_{T}\in L_{+}^{1}(\mQ)$, there exists an $x\geq0$ and $(\alpha_{0},\alpha)\in\mathcal{A}(x)$ (the set of admissible self-financing trading strategies (s.f.t.s.), where $x$ is the initial wealth, see Jarrow \cite{key-11}) such that $x+\int_{0}^{T}\alpha_{u}\cdot dS_{u}=Z_{T}$
and the value process defined by $X_{t}\coloneqq x+\int_{0}^{t}\alpha_{u}\cdot dS_{u}$
for all $t\in[0,T]$ is a $\mQ$-martingale.
\end{defn}

\textcolor{black}{
\begin{prop}
\textbf{(The Second Fundamental Theorem }\textbf{\textcolor{black}{(SFTAP)}}\textbf{)}
Assume $\fM_{l}\neq\varnothing$. If the market is complete
with respect to $\mQ\in\fM_{l}$, then $\fM_{l}$ is a singleton.
Conversely, if $\fM\neq\varnothing$, and $\fM$ is a singleton, then the market is
complete with respect to $\mQ\in\fM$.
\end{prop}}

\section{Filtration Reduction with Complete Neglect}

\textcolor{black}{This section illustrates the uplifted EMM methodology
for a filtration reduction with complete neglect}. \textcolor{black}{The
terminology ``complete neglect'' is explained below.}

\subsection{\textcolor{black}{Three Stocks, One Brownian Motion, and Three Poisson
Processes}}

\textcolor{black}{We begin with }\textcolor{black}{a constant}\textcolor{black}{{}
coefficient discrete jump process that captures the main ideas and
allows to}\textcolor{black}{{} present the the key insights in a transparent
fashion.} \textcolor{black}{We generalize}\textcolor{black}{{} in subsequent
sections.}

\subsubsection{The Original Market}

\textcolor{black}{This section introduces the original market. }We are
given a complete filtered probability space $(\W,\cG,\mG,\mP)$ that
satisfies the usual hypotheses and supports a Brownian motion $W$
and $M$ independent Poisson processes $N_{1},...,N_{M}$, all on
a finite horizon $0\leq t\leq T$.

For $\l_{m}>0$ and $-1<y_{1}<...<y_{M}$, define $N(t)=\sum_{m=1}^{M}N_{m}(t),\quad Q(t)=\sum_{m=1}^{M}y_{m}N_{m}(t).$ Thus, $N$ is a Poisson process with intensity $\l=\sum_{m=1}^{M}\l_{m}$
and $Q$ is a compound Poisson process. Denote by $Y_{i}$ the size
of the $i$th jump of $Q$. \textcolor{black}{The distribution} of
$Y_{i}$ is supported on $\{y_{1},...,y_{M}\}$ and $Q$ can be written
as $Q(t)=\sum_{i=1}^{N(t)}Y_{i}.$ Setting $p(y_{m})=\frac{\l_{m}}{\l}$, it follows that \textcolor{black}{the
jumps} $Y_{i}$ are \textcolor{black}{i.i.d.} with $p(y_{m})=\mP(Y_{i}=y_{m})$.
Set $\b\coloneqq\mE Y_{i}=\sum_{m=1}^{M}y_{m}p(y_{m})=\frac{1}{\l}\sum_{m=1}^{M}\l_{m}y_{m}.$ Note that $Q(t)-\b\l t$ is \textcolor{black}{a $\mathbbm{P}$}-martingale.

\textcolor{black}{For the moment, consider only a single stock trading
}\textcolor{black}{in the original market }\textcolor{black}{with the
price process} $dS(t)=(\a-\b\l)S(t)dt+\s S(t)dW(t)+S(t-)dQ(t).$ The solution to this stochastic differential equation is given by $S(t)=S(0)\exp\{\s W(t)+(\a-\b\l-\frac{1}{2}\s^{2})t\}\prod_{i=1}^{N(t)}(Y_{i}+1),$ see Shreve, 2004 \cite{key-16}, Theorem 11.7.3, p.513.

\subsubsection{No-Arbitrage \textcolor{black}{and An Incomplete Market}}

\textcolor{black}{An EMM in the original market is co}nstructed as
follows. For a constant $\th$ and positive constants $\wt{\l}_{1},...,\wt{\l}_{M}$,
we define $Z_{0}(t)=\exp\{-\th W(t)-\frac{1}{2}\th^{2}t\},
Z_{m}(t)=e^{(\l_{m}-\wt{\l}_{m})t}\left(\frac{\wt{\l}_{m}}{\l_{m}}\right)^{N_{m}(t)},m=1,...,M,
Z(t)=Z_{0}(t)\prod_{m=1}^{M}Z_{m}(t),
\mQ(A) =\int_{A}Z(T)d\mP\quad\text{for all}\quad A\in\cG.$ \textcolor{black}{Then, it }follows that under $\mQ$, \\
 (i) the process $\wt{W}(t)=W(t)-\th t$ is a Brownian motion, \\
 (ii) each $N_{m}$ is a Poisson process with intensity $\wt{\l}_{m}$\textcolor{black}{,
and} \\
 (iii) $\wt{W}$ and $N_{1},...,N_{M}$ are independent of \textcolor{black}{each
other}.

Thus, under $\mQ$, $N(t)=\sum_{m=1}^{M}N_{m}(t)$ is Poisson with
intensity $\wt{\l}=\sum_{m=1}^{M}\wt{\l}_{m}$, $Y_{i}$ are \textcolor{black}{i.i.d.}
with $\mQ(Y_{i}=y_{m})\coloneqq q(y_{m})\coloneqq\frac{\wt{\l}_{m}}{\wt{\l}}$\textcolor{black}{,
and} $Q(t)-\wt{\b}\wt{\l}t$ is a martingale, where 
$\wt{\b}\coloneqq\mE_{\mQ}Y_{i}=\sum_{m=1}^{M}y_{m}q(y_{m})=\frac{1}{\wt{\l}}\sum_{m=1}^{M}\wt{\l}_{m}y_{m}.$
\textcolor{black}{This implies that} $\mQ$ is\textcolor{black}{{} an
EMM }if and only if the discounted price process $S$ is a $\mQ$-martingale\textcolor{black}{,
i.e. }
\begin{align*}
dS(t) & =(\a-\b\l)S(t)dt+\s S(t)dW(t)+S(t-)dQ(t)\\
 & =rS(t)dt+\s S(t)d\wt{W}(t)+S(t-)d(Q(t)-\wt{\b}\wt{\l}t).
\end{align*}
Combined, these equalities imply that\textcolor{black}{{} the solutions
to the} market price of risk equation\textcolor{black}{, which is }$\a-\b\l=r+\s\th-\wt{\b}\wt{\l},$ \textcolor{black}{or, rewritten}\textcolor{black}{{} as} $\a-r=\s\th+\sum_{m=1}^{M}(\l_{m}-\wt{\l}_{m})y_{m},$
\textcolor{black}{characterize}\textcolor{black}{{} the set of EMMs in
the original market. Because }\textcolor{black}{this is one}\textcolor{black}{{}
equation with }\textcolor{black}{the} $M+1$ unknowns ($\th,\wt{\l}_{1},...,\wt{\l}_{M}$),
the market is incomplete.

\textcolor{black}{We now consider the original market as constructed
above, but with three stocks trading, denoted $S_{i}$ for $i=1,2,3$,
each following the }\textcolor{black}{similar}\textcolor{black}{{} stochastic
evolution. As long as the number of jump processes $M\geq3,$ the
original market is still arbitrage-free and incomplete. To simplify
the presentation, we assume that $M=3.$}

Including the subscripts for the three traded stocks,
the original market now consists of a single Brownian motion and three independent Poisson processes $N_{i}$ with intensities $\l_{i}$ defining three compound Poisson processes $Q_{i}(t)=y_{i,1}N_{1}+y_{i,2}N_{2}(t)+y_{i,3}N_{3}(t),\quad i=1,2,3,$ where $y_{i,m}>-1$ for $i=1,2,3$, $m=1,2,3$. Let $\b_{i}=\frac{1}{\l}(\l_{1}y_{i,1}+\l_{2}y_{i,2}+\l_{3}y_{i,3}),\quad i=1,2,3.$ The stocks follow the stochastic differential equations 
\begin{align*}
dS_{i}(t) =(\a_{i}-\b_{i}\l)S_{i}(t)dt+\s_{i}S_{i}(t)dW(t)+S_{i}(t-)dQ_{i}(t),\quad i=1,2,3.
\end{align*}
The \textcolor{black}{modified} market price of risk equations are
\begin{align}
\begin{split}\a_{1}-r & =\s_{1}\th+(\l_{1}-\wt{\l}_{1})y_{1,1}+(\l_{2}-\wt{\l}_{2})y_{1,2}+(\l_{3}-\wt{\l}_{3})y_{1,3},\\
\a_{2}-r & =\s_{2}\th+(\l_{1}-\wt{\l}_{1})y_{2,1}+(\l_{2}-\wt{\l}_{2})y_{2,2}+(\l_{3}-\wt{\l}_{3})y_{2,3},\\
\a_{3}-r & =\s_{3}\th+(\l_{1}-\wt{\l}_{1})y_{3,1}+(\l_{2}-\wt{\l}_{2})y_{3,2}+(\l_{3}-\wt{\l}_{3})y_{3,3}.
\end{split}
\label{eq: original mkt risk premiums}
\end{align}
\textcolor{black}{For subsequent use, we let $\mG$ be} the augmented
filtration generated by $W,N_{1},N_{2}$ and $N_{3}$. \textcolor{black}{We
note that this structure is a modification of an example from Shreve,
2004 \cite{key-16}, }Example 11.7.4, p.515-516.

\subsubsection{The Fictitious Market}

\textcolor{black}{The purpose of this section is to construct a complete
fictitious market from the original market using filtration reduction
from $\mG$ to an }$\mF\subseteq\mathbbm{G}$\textcolor{black}{. To
simplify the exposition, we consider the original market consisting
}\textcolor{black}{of the}\textcolor{black}{{} three stocks. The generalization
to an arbitrary number of stocks is subsequently presented. }

\textcolor{black}{Consider the reduced filtration} $\mF$ generated
by $W,N_{1}$ and $N_{2}$.\textcolor{black}{{} Denoting by $\wt{S}$
the vector of price processes in the fictitious market, we} would
like to have $\wt{S}(t)=\mE[S(t)|\cF_{t}]$ for each $t\in[0,T]$.
This equality to be understood as an optional projection. Note that
\begin{align*}
\wt{S}_{i}(t) & =\mE[S_{i}(t)|\cF_{t}]=\mE[S_{i}(0)\exp\{\s_{i}W(t)+(\a_{i}-\b_{i}\l-\frac{1}{2}\s_{i}^{2})t\}\prod_{j=1}^{N(t)}(Y_{ij}+1)|\cF_{t}]\\
 & =S_{i}(0)\exp\{\s_{i}W(t)+(\a_{i}-\b_{i}\l-\frac{1}{2}\s_{i}^{2})t\}\mE[\prod_{j=1}^{N(t)}(Y_{ij}+1)|\cF_{t}]\\
 & =S_{i}(0)\exp\{\s_{i}W(t)+(\a_{i}-\b_{i}\l-\frac{1}{2}\s_{i}^{2})t\}\mE[\prod_{m=1}^{M}(y_{i,m}+1)^{N_{m}(t)}|\cF_{t}]\\
 & =S_{i}(0)\exp\{\s_{i}W(t)+(\a_{i}-(\l_{1}y_{i,1}+\l_{2}y_{i,2}+\l_{3}y_{i,3})-\frac{1}{2}\s_{i}^{2})t\}\times\\
 & \times\mE[\prod_{m=1}^{M}(y_{i,m}+1)^{N_{m}(t)}|\cF_{t}]\\
 & =S_{i}(0)\exp\{\s_{i}W(t)+(\a_{i}-(\l_{1}y_{i,1}+\l_{2}y_{i,2}+\l_{3}y_{i,3})-\frac{1}{2}\s_{i}^{2})t\}\times\\
 & \times\prod_{m=1}^{2}(y_{i,m}+1)^{N_{m}(t)}\mE[(y_{i,3}+1)^{N_{3}(t)}|\cF_{t}]\\
 & =S_{i}(0)\exp\{\s_{i}W(t)+(\a_{i}-(\l_{1}y_{i,1}+\l_{2}y_{i,2})-\frac{1}{2}\s_{i}^{2})t\}e^{-\l_{3}y_{i,3}t}\times\\
 & \times\prod_{j=1}^{N_{1}(t)+N_{2}(t)}(\wt{Y}_{ij}+1)\mE[(y_{i,3}+1)^{N_{3}(t)}]\\
 & =S_{i}(0)\exp\{\s_{i}W(t)+(\a_{i}-\wh{\b}\wh{\l}-\frac{1}{2}\s_{i}^{2})t\}e^{-\l_{3}y_{i,3}t}\prod_{j=1}^{\wt{N}(t)}(\wt{Y}_{ij}+1)e^{\l p(y_{i,3})ty_{i,3}}\\
 & =S_{i}(0)\exp\{\s_{i}W(t)+(\a_{i}-\wh{\b}\wh{\l}-\frac{1}{2}\s_{i}^{2})t\}e^{-\l_{3}y_{i,3}t}\prod_{j=1}^{\wt{N}(t)}(\wt{Y}_{ij}+1)e^{\l_{3}ty_{i,3}}\\
 & =S_{i}(0)\exp\{\s_{i}W(t)+(\a_{i}-\wh{\b}\wh{\l}-\frac{1}{2}\s_{i}^{2})t\}\prod_{j=1}^{\wt{N}(t)}(\wt{Y}_{ij}+1),
\end{align*}
where we have used the facts that \textcolor{black}{$M=3$,} $\prod_{j=1}^{N(t)}(Y_{ij}+1)=\prod_{m=1}^{M}(y_{i,m}+1)^{N_{m}(t)}$,
$\prod_{m=1}^{2}(y_{i,m}+1)^{N_{m}(t)}$ is $\cF_{t}$-measurable,
$(y_{i,3}+1)^{N_{3}(t)}$ is independent of $\cF_{t}$, \textcolor{black}{the
notation} $\wt{N}(t)\coloneqq N_{1}(t)+N_{2}(t)$, $\wh{\b}_{i}\coloneqq\frac{1}{\wh{\l}}(\l_{1}y_{i,1}+\l_{2}y_{i,2}),\quad i=1,2,3,$
$\wh{\l}\coloneqq\l_{1}+\l_{2}$, and $\wt{Y}_{ij}$ \textcolor{black}{for
i.i.d}. random variables which \textcolor{black}{take the} values $y_{1}$
and $y_{2}$. \textcolor{black}{The derivation also used} the probability
generating \textcolor{black}{function for} $N_{3}(t)\sim$ Poisson$(\l p(y_{i,3})t)$.

The change from \textcolor{black}{the jump sizes} $Y_{ij}$ that are
$\{y_{1},y_{2},y_{3}\}$-distributed to \textcolor{black}{the jump sizes}
$\wt{Y}_{ij}$ that are $\{y_{1},y_{2}\}$-distributed \textcolor{black}{corresponds
}\textcolor{black}{to a filtration}\textcolor{black}{{} reduction with
complete neglect. Here, the trader ``neglects'' the jump size }$y_{3}$,\textcolor{black}{{}
which corresponds to a reduced filtration} generated by a subset of
$N^{1},N^{2},N^{3}$. \textcolor{black}{Clearly other} reductions are
possible\textcolor{black}{, and in the next section} we present \textcolor{black}{a
filtration} reduction \textcolor{black}{with partial} neglect.

\textcolor{black}{Hence, the s}tock prices in the fictitious market
\textcolor{black}{follow the stochastic }differential equations 
\begin{align*}
d\wt{S}_{i}(t) & =(\a_{i}-\wh{\b}_{i}\wh{\l})S_{i}(t)dt+\s_{i}\wt{S}_{i}(t)dW(t)+\wt{S}_{i}(t-)d\wt{Q}_{i}(t),\quad i=1,2,3,
\end{align*}
where $\wt{Q}_{i}(t)=y_{i,1}N_{1}(t)+y_{i,2}N_{2}(t),\quad i=1,2,3,$ with $y_{i,m}>-1$ for $i=1,2,3$, $m=1,2$. The solution is $\wt{S}_{i}(t)=S_{i}(0)\exp\{\s_{i}W(t)+(\a_{i}-\wh{\b}\wh{\l}-\frac{1}{2}\s_{i}^{2})t\}\prod_{j=1}^{\wt{N}(t)}(Y_{ij}+1).$
\textcolor{black}{The corresponding market price }of risk equations
are 
\begin{align}
\a_{1}-r & =\s_{1}\th+(\l_{1}-\wt{\l}_{1})y_{1,1}+(\l_{2}-\wt{\l}_{2})y_{1,2},\label{eq: uplifed EMM}\\
\a_{2}-r & =\s_{2}\th+(\l_{1}-\wt{\l}_{1})y_{2,1}+(\l_{2}-\wt{\l}_{2})y_{2,2},\nonumber \\
\a_{3}-r & =\s_{3}\th+(\l_{1}-\wt{\l}_{1})y_{3,1}+(\l_{2}-\wt{\l}_{2})y_{3,2}.\nonumber 
\end{align}
There are three equations with three unknowns ($\th,\wt{\l}_{1},\wt{\l}_{2}$).
If there exists a unique solution $(\th^{*},\wt{\l}_{1}^{*},\wt{\l}_{2}^{*})$
to the market price of risk equations, then the fictitious market
is arbitrage-free and complete. \textcolor{black}{By the second fundamental
theorem of asset pricing, }\textcolor{black}{this implies a}\textcolor{black}{{}
unique EMM $\wt{\mQ}$ under which $\wt{S}$ is a martingale.}

\subsubsection{\textcolor{black}{The Uplifted EMM }}

\textcolor{black}{To price derivatives in the original market, we
need to uplift the unique EMM }\textcolor{black}{from the}\textcolor{black}{{}
fictitious market to an EMM in the original market. In this simple
case,} the uplift will be unique, because given the vector $(\th^{*},\wt{\l}_{1}^{*},\wt{\l}_{2}^{*})$,
the only $\wt{\l}_{3}$ that \textcolor{black}{satisfies} the set
of equations (\ref{eq: original mkt risk premiums}) and (\ref{eq: uplifed EMM})
is $\wt{\l}_{3}=\l_{3}$. \textcolor{black}{We state this result as
a theorem. } 
\begin{thm}
\textcolor{black}{(A Unique Uplifted EMM) }\textcolor{black}{Consider
a market} with one Brownian motion, three Poisson processes, and three
stocks. Assume that in the fictitious market $\mF$ obtained by complete
neglect there is a unique EMM defined by the vector $(\th^{*},\wt{\l}_{1}^{*},\wt{\l}_{2}^{*})$.
Then, there is a unique uplift of the unique EMM from the\textcolor{black}{{}
fictitious market to the original} market \textcolor{black}{given}
by the vector $(\th^{*},\wt{\l}_{1}^{*},\wt{\l}_{2}^{*},\l_{3})$. 
\end{thm}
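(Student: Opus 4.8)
The plan is to pin the uplift down by the two requirements it must meet: it is an equivalent martingale measure for the original three-stock market, and it satisfies the consistency condition, namely that its restriction to the reduced filtration $\mF$ coincides with the unique fictitious-market EMM $\wt{\mQ}$. I will show that these two requirements force the Girsanov parameter vector to equal $(\th^{*},\wt{\l}_{1}^{*},\wt{\l}_{2}^{*},\l_{3})$, and conversely that this vector does define such a measure. Throughout I work inside the family of constant-parameter changes of measure $Z(T)=Z_{0}(T)\prod_{m=1}^{3}Z_{m}(T)$ constructed above; by the martingale representation theorem for the filtration $\mG$ generated by $W,N_{1},N_{2},N_{3}$ together with the constant-coefficient structure of the model, this is in fact the full set of EMMs, a point I would note without elaboration.

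First I would treat the consistency requirement. Suppose $\mQ$ is a candidate uplift with parameters $(\th,\wt{\l}_{1},\wt{\l}_{2},\wt{\l}_{3})$, $\wt{\l}_{m}>0$. Since $\mF$ is generated by $W,N_{1},N_{2}$ and $N_{3}$ is $\mP$-independent of these, the factor $Z_{3}$ is a $\mP$-martingale of unit mean independent of $\cF_{T}$, so the density of $\mQ|_{\cF_{T}}$ with respect to $\mP|_{\cF_{T}}$ is $\mE_{\mP}[Z(T)\mid\cF_{T}]=Z_{0}(T)Z_{1}(T)Z_{2}(T)$. Equating this $\mP$-a.s. with the density of $\wt{\mQ}$, which is $Z_{0}(T)Z_{1}(T)Z_{2}(T)$ built from $(\th^{*},\wt{\l}_{1}^{*},\wt{\l}_{2}^{*})$, and taking logarithms, I match exponents against the nondegenerate $\mP$-law of $(W(T),N_{1}(T),N_{2}(T))$ --- a Gaussian independent of two independent Poisson variables each supported on all of $\{0,1,2,\dots\}$: the coefficient of $W(T)$ gives $\th=\th^{*}$, and the coefficients of $N_{1}(T),N_{2}(T)$ give $\wt{\l}_{m}=\wt{\l}_{m}^{*}$ for $m=1,2$. (Equivalently, under $\mQ|_{\mF}=\wt{\mQ}$ the drift of $W$ and the intensities of $N_{1},N_{2}$ are determined by the measure, so they must agree with those of $\wt{\mQ}$.)

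Next I would use the EMM requirement. Since $\mQ$ is an EMM for the original market, its parameters solve (\ref{eq: original mkt risk premiums}), while $(\th^{*},\wt{\l}_{1}^{*},\wt{\l}_{2}^{*})$ solves (\ref{eq: uplifed EMM}) because $\wt{\mQ}$ is an EMM for the fictitious market. Substituting $\th=\th^{*}$, $\wt{\l}_{1}=\wt{\l}_{1}^{*}$, $\wt{\l}_{2}=\wt{\l}_{2}^{*}$ into the $i$th equation of (\ref{eq: original mkt risk premiums}) and subtracting the $i$th equation of (\ref{eq: uplifed EMM}) leaves $(\l_{3}-\wt{\l}_{3})y_{i,3}=0$ for $i=1,2,3$; as the jump sizes $y_{i,3}$ are not all zero --- the neglected process $N_{3}$ genuinely influences the market --- this forces $\wt{\l}_{3}=\l_{3}$, giving uniqueness. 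For existence, $\l_{3}>0$ keeps $Z(T)$ strictly positive with $\mE_{\mP}Z(T)=1$, so Girsanov's theorem yields an equivalent $\mQ$; and with $\wt{\l}_{3}=\l_{3}$ one has $Z_{3}\equiv1$, hence $Z=Z_{0}Z_{1}Z_{2}$ and $\mQ|_{\cF_{T}}=\wt{\mQ}$, while adding the trivial identity $(\l_{3}-\l_{3})y_{i,3}=0$ turns (\ref{eq: uplifed EMM}) back into (\ref{eq: original mkt risk premiums}), so $\mQ$ is indeed an EMM for the original market.

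The main obstacle is the consistency step: upgrading the measure-level equality $\mQ|_{\cF_{T}}=\wt{\mQ}$ to the pointwise identification of the three shared parameters. This rests on the $\mP$-independence of the neglected coordinate $N_{3}$ from $\mF$, which makes $Z_{3}$ disappear under the optional projection --- precisely the reason ``complete neglect'' leaves the $N_{3}$ risk premium at zero, $Z_{3}\equiv1$ --- and on the identifiability of the exponential Girsanov family from the $\mP$-law of $(W(T),N_{1}(T),N_{2}(T))$. After that, the argument is a one-line linear computation.
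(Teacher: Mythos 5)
Your proof is correct and its core step is exactly the paper's one-line argument: fixing $(\th^{*},\wt{\l}_{1}^{*},\wt{\l}_{2}^{*})$, subtract the fictitious market-price-of-risk equations from the original ones to obtain $(\l_{3}-\wt{\l}_{3})y_{i,3}=0$ for $i=1,2,3$, hence $\wt{\l}_{3}=\l_{3}$.

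What you add, and the paper does not, is a justification for why a candidate uplift must share the parameters $(\th^{*},\wt{\l}_{1}^{*},\wt{\l}_{2}^{*})$ at all: you take ``uplift'' to mean an original-market EMM whose restriction to $\cF_{T}$ is $\wt{\mQ}$, compute $\mE_{\mP}[Z(T)\mid\cF_{T}]=Z_{0}Z_{1}Z_{2}$ by independence of $N_{3}$, and identify the parameters from the resulting density equality. The paper instead treats the agreement of the shared parameters as part of the meaning of ``uplift,'' and proves the restriction identity $\mQ^{*}|_{\cF_{T}}=\wt{\mQ}$ separately as the very next theorem, in the easy direction. So you have essentially run the paper's Theorems 19 and 20 in reverse order and with the implication tightened; you also verify existence explicitly ($Z_{3}\equiv1$ when $\wt{\l}_{3}=\l_{3}$), which the paper leaves implicit. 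The one hypothesis you correctly surface, which the paper leaves tacit, is that the $y_{i,3}$ are not all zero (otherwise $N_{3}$ is invisible to the market and $\wt{\l}_{3}$ is genuinely unconstrained); this is implicitly guaranteed by the nondegeneracy of the model but worth stating.
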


\textcolor{black}{It is easy to see that the restriction of the uplifted
EMM defined on $\cG_{T}$ to the $\s$-field $\cF_{T}$ is precisely
the measure $\wt{\mQ}$, i.e. } 
\begin{thm}
$\mQ^{*}|_{\cF_{T}}=\wt{\mQ}.$ 
\end{thm}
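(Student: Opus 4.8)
The plan is to exploit the explicit forms of the two Radon--Nikodym densities together with the fact, established in the previous theorem, that the uplifted parameter is $\wt{\l}_3 = \l_3$. First I would observe that this choice forces the third density factor to be trivial: $Z_3(t) = e^{(\l_3-\l_3)t}(\l_3/\l_3)^{N_3(t)} \equiv 1$, so the density of $\mQ^{*}$ on $\cG_T$ collapses to $Z(T) = Z_0(T)Z_1(T)Z_2(T)$. Since $Z_0(T)$ is a function of $W(T)$, $Z_1(T)$ a function of $N_1(T)$, and $Z_2(T)$ a function of $N_2(T)$, and $\mF$ is the reduced filtration generated by $W, N_1, N_2$, the random variable $Z(T)$ is $\cF_T$-measurable (augmentation does not affect this).

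Next, for an arbitrary $A \in \cF_T \subseteq \cG_T$ I would compute directly
\[
\mQ^{*}(A) = \int_A Z_0(T)Z_1(T)Z_2(T)Z_3(T)\,d\mP = \int_A Z_0(T)Z_1(T)Z_2(T)\,d\mP,
\]
so that $\mQ^{*}|_{\cF_T}$ is precisely the probability measure $\mQ'$ on $\cF_T$ whose density relative to $\mP|_{\cF_T}$ is $Z_0(T)Z_1(T)Z_2(T)$. It then remains to identify $\mQ'$ with $\wt{\mQ}$.

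For that identification I would show $\mQ'$ is an EMM for the fictitious market and invoke uniqueness. By the change-of-measure results of the preliminaries (Girsanov for a Brownian motion together with independent Poisson processes), applied with parameters $(\th^{*}, \wt{\l}_1^{*}, \wt{\l}_2^{*})$, under $\mQ'$ the process $\wt W(t) = W(t) - \th^{*}t$ is a Brownian motion and $N_1, N_2$ are independent Poisson processes with intensities $\wt{\l}_1^{*}, \wt{\l}_2^{*}$; substituting these into the fictitious-market dynamics for $\wt S_i$ and using that $(\th^{*}, \wt{\l}_1^{*}, \wt{\l}_2^{*})$ solves the market price of risk equations (\ref{eq: uplifed EMM}) shows each discounted $\wt S_i$ is a $\mQ'$-martingale, while $\mQ' \sim \mP|_{\cF_T}$ because $Z_0(T)Z_1(T)Z_2(T)>0$ a.s. Hence $\mQ'$ is an EMM in the fictitious market; since that market is complete, the second fundamental theorem of asset pricing gives $\mQ' = \wt{\mQ}$, and therefore $\mQ^{*}(A) = \wt{\mQ}(A)$ for every $A \in \cF_T$.

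The only genuinely delicate point is this last one: one must confirm that the abstractly defined ``unique EMM $\wt{\mQ}$ of the fictitious market'' coincides with the measure produced by the Girsanov construction at the solution of (\ref{eq: uplifed EMM}) — equivalently, that the $\mP$-density of $\wt{\mQ}$ on $\cF_T$ is exactly $Z_0(T)Z_1(T)Z_2(T)$. This is settled by completeness (uniqueness of the EMM), so beyond careful bookkeeping of which driving process each factor $Z_0, Z_1, Z_2, Z_3$ depends on, there is no substantial obstacle; the statement is essentially the ``consistency'' encoded in the choice $\wt{\l}_3 = \l_3$.
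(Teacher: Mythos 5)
Your proposal is correct and close in spirit to the paper's, but you take a slightly longer route. The paper's one-line proof uses the conditional-expectation identity: for $A\in\cF_T$, $\int_A Z_T\,d\mP=\int_A\mE[Z_T\mid\cF_T]\,d\mP$, and then $\mE[Z_T\mid\cF_T]=Z_0(T)Z_1(T)Z_2(T)\,\mE[Z_3(T)]=\wt Z_T$ by independence of $N_3$ from $\cF_T$ and $\mE[Z_3(T)]=1$ (martingale property); this actually works for \emph{any} $\wt\l_3$, not just $\wt\l_3=\l_3$. You instead invoke the previous theorem's conclusion $\wt\l_3=\l_3$ to make $Z_3\equiv 1$ outright, which is fine here but makes the argument hinge on the preceding result rather than on the projection identity. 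Your other addition — identifying the measure with density $Z_0Z_1Z_2$ as $\wt\mQ$ by checking the martingale property and invoking completeness and the second fundamental theorem — is unnecessary overhead: $\wt\mQ$ was constructed precisely as the measure with that Radon–Nikodym density at the solution $(\th^*,\wt\l_1^*,\wt\l_2^*)$ of (\ref{eq: uplifed EMM}), so the identification $\wt Z_T=Z_0(T)Z_1(T)Z_2(T)$ is definitional, not something to re-derive via uniqueness. Neither point is an error; your argument reaches the same conclusion with a couple of extra steps the paper elides.
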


\begin{proof}
Take $A\in\cF_{T}.$ Then $\mQ(A)=\int_{A}Z_{T}d\mP=\int_{A}\mE[Z_{T}|\cF_{T}]d\mP=\int_{A}\wt{Z}_{T}d\mP=\wt{\mQ}(A).$
\end{proof}

\subsection{\textcolor{black}{$n$ Stocks, One Brownian Motion, and $M$ Poisson
Processes}}

\textcolor{black}{This section extends the previous original market
to include $n$ traded stocks and $M$ Poisson processes.}

\subsubsection{The Original Market}

Assume that there are $n$ stock price processes driven by a single
Brownian motion and $M$ independent Poisson processes with intensities
$\l_{m},m=1,...,M$, \textcolor{black}{which are} independent of the
Brownian motion. We assume that $n<1+M$ and the filtration $\mG$
is generated by the $1+M$ processes. Thus, we have 
\begin{align*}
dS_{i}(t) & =(\a_{i}-\b_{i}\l)S(t)dt+\s_{i}S(t)dW(t)+S_{i}(t-)dQ_{i}(t),\quad i=1,...,n,\\
Q_{i}(t) & =y_{i,1}N_{1}+y_{i,2}N_{2}(t)+...+y_{i,M}N_{M}(t),\quad i=1,...,n,
\end{align*}
where $y_{i,m}>-1$ for $i=1,...,n$, $m=1,...,M$, $\b_{i}=\frac{1}{\l}(\l_{1}y_{i,1}+\l_{2}y_{i,2}+...+\l_{M}y_{i,M}),\quad i=1,...,n$\textcolor{black}{,
and }$\l=\sum_{m=1}^{M}\l_{m}$.

The corresponding market price of risk equations are given by 
\begin{align*}
\a_{1}-r & =\s_{1}\th+(\l_{1}-\wt{\l}_{1})y_{1,1}+...+(\l_{M}-\wt{\l}_{M})y_{1,M},\\
...\\
\a_{n}-r & =\s_{n}\th+(\l_{1}-\wt{\l}_{1})y_{n,1}+...+(\l_{M}-\wt{\l}_{M})y_{n,M}.
\end{align*}
This market is incomplete because we have $n$ equations \textcolor{black}{with
the} $1+M$ unknowns $\th,\wt{\l}_{1},...,\wt{\l}_{M}$, where $n<1+M$.

\subsubsection{The Fictitious Market}

\textcolor{black}{This section constructs the complete fictitious market.
}\textcolor{black}{To obtain} a complete market, we need a reduction
to $n$ sources of randomness.\textcolor{black}{{} We }\textcolor{black}{choose
to reduce to the filtration that is} generated by the single Brownian
motion and $n-1<M$ Poisson processes. Without loss of generality,
we retain the first $n-1$ Poisson processes and discard $N_{n},N_{n+1},...,N_{M}$.

Thus, \textcolor{black}{the stock price evolution in the fictitious
market is} 
\begin{align*}
d\wt{S}_{i}(t) & =(\a_{i}-\wh{\b}_{i}\wh{\l})S(t)dt+\s_{i}S(t)dW(t)+S_{i}(t-)d\wt{Q}_{i}(t),\quad i=1,...,n,\\
\wt{Q}_{i}(t) & =y_{i,1}N_{1}(t)+y_{i,2}N_{2}(t)+...+y_{i,n-1}N_{n-1}(t),\quad i=1,...,n,
\end{align*}
where $y_{i,m}>-1$ for $i=1,...,n$, $m=1,...,n-1$, $\wh{\b}_{i}=\frac{1}{\wh{\l}}(\l_{1}y_{i,1}+\l_{2}y_{i,2}+...+\l_{n-1}y_{i,n-1}),\quad i=1,...,n$,
and $\wh{\l}=\sum_{m=1}^{n-1}\l_{m}$.

The corresponding market price of risk equations are 
\begin{align*}
\a_{1}-r & =\s_{1}\th+(\l_{1}-\wt{\l}_{1})y_{1,1}+...+(\l_{n-1}-\wt{\l}_{n-1})y_{1,n-1},\\
...\\
\a_{n}-r & =\s_{n}\th+(\l_{1}-\wt{\l}_{1})y_{3,1}+...+(\l_{n-1}-\wt{\l}_{n-1})y_{n,n-1}.
\end{align*}
\textcolor{black}{This }\textcolor{black}{gives }$n$ equations in $n$
unknowns $\th,\wt{\l}_{1},...,\wt{\l}_{n-1}$. Assuming that it has
a unique solution, we obtain an arbitrage-free and complete market.
\textcolor{black}{Here the}\textcolor{black}{{} unique EMM $\wt{\mQ}$
is} given by $\th^{*},\wt{\l}_{1}^{*},...,\wt{\l}_{n-1}^{*}$.

\subsubsection{\textcolor{black}{The Uplifted EMM}: Issues of Uniqueness }

As before, we now \textcolor{black}{uplift the EMM }\textcolor{black}{from
the}\textcolor{black}{{} fictitious market to a unique EMM in the original
market.} \textcolor{black}{Unfortunately, uniqueness fails} \textcolor{black}{because}
if $M-(n-1)>1$,\textcolor{black}{{} then there} are (infinitely) many
ways in which the original market price of risk equations can be satisfied,
\textcolor{black}{given }\textcolor{black}{that the} values $\th^{*},\wt{\l}_{1}^{*},...,\wt{\l}_{n-1}^{*}$
\textcolor{black}{are already fixed} \textcolor{black}{by the EMM in
the fictitious market.} More precisely, any $\wt{\l}_{n},...,\wt{\l}_{M}$
that satisfy 
\begin{align*}
(\l_{n}-\wt{\l}_{n})y_{1,n}+...+(\l_{M}-\wt{\l}_{M})y_{1,M} & =0,\\
...\\
(\l_{n}-\wt{\l}_{n})y_{n,n}+...+(\l_{M}-\wt{\l}_{M})y_{n,M} & =0
\end{align*}
will be viable \textcolor{black}{candidates for an uplifted EMM. To
obtain a unique EMM in the original market, we apply the notion of
}\textcolor{black}{consistency from Grigorian and Jarrow 2024}\textcolor{black}{{}
\cite{key-12,key-13,key-14}. }

\textcolor{black}{To see how}\textcolor{black}{{} consistency manifests
itself in a jump-diffusion }\textcolor{black}{setting, suppose that
}\textcolor{black}{we selected a }different subset of \textcolor{black}{the
}$M=n+1$\textcolor{black}{{} Poisson} \textcolor{black}{processes available
for}\textcolor{black}{{} the filtration reduction.} \textcolor{black}{Assume
that }we retain $N_{1},...,N_{n-2},N_{n}$, i.e. we choose $N_{n}$
instead of $N_{n-1}$. \textcolor{black}{As before, we still} discard
$N_{n+1}$ in both scenarios. Note that the \textcolor{black}{different
risk assessments} disagree on the importance of $N_{n-1}$ and $N_{n}$,
but both agree that $N_{n+1}$'s risk is irrelevant. Therefore, it
is natural to expect that \textcolor{black}{the uplifts in both filtration
reductions} would agree \textcolor{black}{on the subset of risks considered irrelevant
}\textcolor{black}{and} \textcolor{black}{the corresponding intensity} $\wt{\l}_{n+1}$
\textcolor{black}{would be the same.}

More formally, consistency requires that the uplifts \textcolor{black}{agree
on} the overlapping views of irrelevant risks. \textcolor{black}{The
only uplift} with this consistency property is the one obtained by
setting $\wt{\l}_{m}=\l_{m}$ for all $m$ not included in the reduced
filtration. \textcolor{black}{To see this, consider two reductions,
$A$ and $B$, whereby $A$ involves discarding $N_{n}$ and $N_{n+1}$,
while $B$ involves discarding $N_{n-1}$ and $N_{n+1}$. The non-uniqueness
stems from the following systems of equations for the market prices
of risk} 
\begin{align*}
\begin{split}\text{Reduction}\quad A\quad\\
(\l_{n}-\wt{\l}_{n})y_{1,n}+(\l_{n+1}-\wt{\l}_{n+1})y_{1,n+1} & =0,\\
...\\
(\l_{n}-\wt{\l}_{n})y_{n,n}+(\l_{n+1}-\wt{\l}_{n+1})y_{n,n+1} & =0,
\end{split}
\begin{split}\text{Reduction}\quad B\quad\\
(\l_{n-1}-\wt{\l}_{n-1})y_{1,n-1}+(\l_{n+1}-\wt{\l}_{n+1})y_{1,n+1} & =0,\\
...\\
(\l_{n-1}-\wt{\l}_{n-1})y_{n,n-1}+(\l_{n+1}-\wt{\l}_{n+1})y_{n,n+1} & =0.
\end{split}
\end{align*}
\textcolor{black}{Let us assume that} 
\begin{align*}
(\l_{n+1}-\wt{\l}_{n+1})y_{1,n+1} & \coloneqq x_{1}\neq0,\\
...\\
(\l_{n+1}-\wt{\l}_{n+1})y_{n,n+1} & \coloneqq x_{n}\neq0.
\end{align*}
\textcolor{black}{Then, consistency between $A$ and $B$ requires
that} 
\begin{align*}
(\l_{n-1}-\wt{\l}_{n-1})y_{1,n-1}=(\l_{n}-\wt{\l}_{n})y_{1,n} & =-x_{1},\\
...\\
(\l_{n-1}-\wt{\l}_{n-1})y_{n,n-1}=(\l_{n}-\wt{\l}_{n})y_{n,n} & =-x_{n}.
\end{align*}
\textcolor{black}{Next, consider a third reduction $C$, where }\textcolor{black}{we
neglect}\textcolor{black}{{} $N_{n-1}$ and $N_{n}$. The corresponding
market price of risk equations are} 
\begin{align*}
\text{Reduction}\quad C\quad\quad\quad\\
(\l_{n-1}-\wt{\l}_{n-1})y_{1,n-1}+(\l_{n}-\wt{\l}_{n})y_{1,n} & =0,\\
...\\
(\l_{n-1}-\wt{\l}_{n-1})y_{n,n-1}+(\l_{n}-\wt{\l}_{n})y_{n,n} & =0.
\end{align*}
\textcolor{black}{If we require that all the three choices are consistent,
substitution of the previously obtained values yield}s 
\begin{align*}
-2x_{1} & =0,\\
...\\
-2x_{n} & =0,
\end{align*}
and hence a contradiction. Therefore $x_{m}=0$, $m=1,...,n,$ and,
since the $y$'s are nonzero, $\wt{\l}_{n+1}=\l_{n+1}$. Arguing similarly,
we \textcolor{black}{see that consistency requires} that $\wt{\l}_{m}=\l_{m}$
for all $m$ in the subset of discarded processes.

A somewhat different argument leads to the same conclusion. Let us
still assume that there are $M=n+1$ Poisson processes, and we intend
to \textcolor{black}{remove two} Poisson processes, say $N_{n}$ and
$N_{n+1}$ to obtain a complete market and a unique EMM $\wt{\mQ}$
\textcolor{black}{in the fictitious market.} Without the notion of
consistency, the uplift is not unique, as is evident from the market
prices of risk equations 
\begin{align*}
(\l_{n}-\wt{\l}_{n})y_{1,n}+(\l_{n+1}-\wt{\l}_{n+1})y_{1,n+1} & =0,\\
...\\
(\l_{n}-\wt{\l}_{n})y_{n,n}+(\l_{n+1}-\wt{\l}_{n+1})y_{n,n+1} & =0.
\end{align*}
However, if instead of uplifting to the original market, we uplift
to a market with more information stemming from a single additional
Poisson process (say, $N_{n}$), uniqueness can be recovered. Indeed,
in this case we will first obtain $\th^{*},\wt{\l}_{1}^{*},...,\wt{\l}_{n-1}^{*}$
and uplift it to $\mF^{(+1)}$ generated by $\mF$ and $N_{n}$. This
corresponds to solving 
\begin{align*}
(\l_{n}-\wt{\l}_{n})y_{1,n} & =0,\\
...\\
(\l_{n}-\wt{\l}_{n})y_{n,n} & =0,
\end{align*}
from which we obtain $\wt{\l}_{n}^{*}=\l_{n}$. Next, we uplift again
to the original $\mG=\mF^{(+2)}$ and obtain $\wt{\l}_{n+1}^{*}=\l_{n+1}$.
Clearly, this \textbf{sequential procedure} can be applied \textcolor{black}{to
an }arbitrary number of discrete jumps, \textcolor{black}{yielding}
$\wt{\l}_{m}^{*}=\l_{m}$ for all the $m$ in the subset of\textcolor{black}{{}
the neglected }Poisson processes.

We have thus proved the following theorem. 
\begin{thm}
\textcolor{black}{(A Unique Consistent Uplifted EMM) }Assume the reduced
market obtained by complete neglect is arbitrage-free and complete.
Then, there exists a unique consistent uplifted EMM $\mQ^{*}$ of
the unique EMM $\wt{\mQ}$ in the fictitious market given by $\th^{*},\wt{\l}_{1}^{*},...,\wt{\l}_{n-1}^{*}$,
and $\wt{\l}_{m}^{*}=\l_{m},m=n,...,M$. 
\end{thm}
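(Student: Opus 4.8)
The plan is to exhibit the measure claimed in the statement, verify it is a consistent uplifted EMM, and then rule out any competitor. For existence I would substitute the parameter vector $(\th^{*},\wt{\l}_{1}^{*},\dots,\wt{\l}_{n-1}^{*},\l_{n},\dots,\l_{M})$ into the original market price of risk equations $\a_{i}-r=\s_{i}\th+\sum_{m=1}^{M}(\l_{m}-\wt{\l}_{m})y_{i,m}$, $i=1,\dots,n$: setting $\wt{\l}_{m}=\l_{m}$ annihilates every summand with $m\geq n$, so the system collapses to the fictitious market price of risk system, which $(\th^{*},\wt{\l}_{1}^{*},\dots,\wt{\l}_{n-1}^{*})$ solves by hypothesis. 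Girsanov's theorem for general jump-diffusions then produces the Radon-Nikodym density $Z^{*}(t)=Z_{0}^{*}(t)\prod_{m=1}^{M}Z_{m}^{*}(t)$ in which $Z_{m}^{*}\equiv 1$ for $m\geq n$; hence $Z^{*}(t)=\wt{Z}(t)$, the density of $\wt{\mQ}$, which is strictly positive with unit expectation, so $\mQ^{*}\sim\mP$ and the discounted stocks are $\mQ^{*}$-martingales, i.e. $\mQ^{*}$ is an EMM on $\cG_{T}$. For $A\in\cF_{T}$ the one-line computation $\mQ^{*}(A)=\int_{A}Z^{*}(T)\,d\mP=\int_{A}\wt{Z}(T)\,d\mP=\wt{\mQ}(A)$ shows $\mQ^{*}|_{\cF_{T}}=\wt{\mQ}$, so $\mQ^{*}$ genuinely uplifts $\wt{\mQ}$; and it is consistent because it assigns every discarded process its original intensity $\l_{m}$, a prescription independent of which reduction is chosen.

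The substance of the theorem is uniqueness. Let $\mQ'$ be any consistent uplifted EMM, with parameters $\th',\wt{\l}_{1}',\dots,\wt{\l}_{M}'$. First, $\mQ'|_{\cF_{T}}=\wt{\mQ}$ forces $\th'=\th^{*}$ and $\wt{\l}_{m}'=\wt{\l}_{m}^{*}$ for $m\leq n-1$, because the restricted density involves only $W,N_{1},\dots,N_{n-1}$ and their $(\wt{\mQ},\mF)$-law determines those parameters. To pin down the remaining intensities I would run the triple-reduction argument of the preceding discussion: for $M=n+1$, writing $x_{i}\coloneqq(\l_{n+1}-\wt{\l}_{n+1}')y_{i,n+1}$, the market price of risk systems of reductions $A$, $B$, $C$ together with the consistency identifications of the intensities they pairwise share force $(\l_{n-1}-\wt{\l}_{n-1}')y_{i,n-1}=(\l_{n}-\wt{\l}_{n}')y_{i,n}=-x_{i}$ and simultaneously $(\l_{n-1}-\wt{\l}_{n-1}')y_{i,n-1}+(\l_{n}-\wt{\l}_{n}')y_{i,n}=0$, whence $-2x_{i}=0$ for each $i$ and therefore $\wt{\l}_{n+1}'=\l_{n+1}$ since the $y_{i,m}$ are nonzero; iterating over the discarded indices yields $\wt{\l}_{m}'=\l_{m}$ for all $m\geq n$. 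For general $M$ this iteration is cleanest in the guise of the sequential procedure: uplift $\wt{\mQ}$ one Poisson process at a time along $\mF\subseteq\mF^{(+1)}\subseteq\cdots\subseteq\mG$, where at each stage the restriction to the previous filtration fixes all earlier parameters and the single new market price of risk constraint reads $(\l_{k}-\wt{\l}_{k}')y_{i,k}=0$ for all $i$, forcing $\wt{\l}_{k}'=\l_{k}$.

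I expect the uniqueness step to be the main obstacle, for two reasons. First, consistency must be formalized carefully enough to license the triple-reduction comparison: one has to fix the admissible reductions (those generated by $W$ together with an $(n-1)$-element subset of $\{N_{1},\dots,N_{M}\}$) and guarantee that the auxiliary reductions entering the comparison are themselves arbitrage-free and complete, so that their EMMs and hence the overlap constraints are well defined; failing that, one restricts attention to the chain $\mF\subseteq\mF^{(+1)}\subseteq\cdots$, which requires no auxiliary complete market, or imposes a nondegeneracy hypothesis on the matrix $(y_{i,m})$. Second, for $M>n+1$ one must apply the comparison to enough triples to reach every discarded index, a book-keeping task the sequential reformulation handles automatically. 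Once these points are settled, existence, the explicit form, and uniqueness assemble into the statement.
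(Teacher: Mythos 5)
Your proposal is correct and follows essentially the same route as the paper: uniqueness is obtained via the triple-reduction / sequential-uplift consistency argument that immediately precedes the theorem statement, bundled with a routine existence check (substitution into the original market price of risk system, Girsanov, and the restriction computation $\mQ^{*}|_{\cF_{T}}=\wt{\mQ}$). You flesh out a few details the paper leaves implicit — the density identification $Z_{m}^{*}\equiv 1$ for $m\geq n$, why the restriction pins down $\th'$ and $\wt{\l}'_{m}$ for $m\leq n-1$, and the caveats about admissibility of the auxiliary reductions and nondegeneracy of $(y_{i,m})$ — but these are refinements of the paper's argument rather than a different approach.
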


\subsection{\textcolor{black}{$n$ Stocks, $D$ Brownian Motions, and $M$ Poisson
Processes}}

\textcolor{black}{Finally, this section extends the previous insights
to the general case.}

\subsubsection{The Original Market}

Let there are $n$ stock price processes driven by $D$ independent
Brownian motions and $M$ independent Poisson processes with intensities
$\l_{m},m=1,...,M$, which are independent of the Brownian \textcolor{black}{motions.
}We assume that $n<D+M$ and the filtration $\mG$ is generated by
the $D+M$ processes. \textcolor{black}{The stock price process is}
\begin{align*}
dS_{i}(t) & =(\a_{i}-\b_{i}\l)S(t)dt+\sum_{j=1}^{D}\s_{ij}S(t)dW^{j}(t)+S_{i}(t-)dQ_{i}(t),\quad i=1,...,n,\\
Q_{i}(t) & =y_{i,1}N_{1}(t)+y_{i,2}N_{2}(t)+...+y_{i,M}N_{M}(t),\quad i=1,...,n,
\end{align*}
where $y_{i,m}>-1$ for $i=1,...,n$, $m=1,...,M$, $\b_{i}=\frac{1}{\l}(\l_{1}y_{i,1}+\l_{2}y_{i,2}+...+\l_{M}y_{i,M}),\quad i=1,...,n$
and $\l=\sum_{m=1}^{M}\l_{m}$.

The corresponding market price of risk equations are 
\begin{align*}
\a_{1}-r & =\s_{11}\th_{1}+...+\s_{1D}\th_{D}+(\l_{1}-\wt{\l}_{1})y_{1,1}+...+(\l_{M}-\wt{\l}_{M})y_{1,M},\\
...\\
\a_{n}-r & =\s_{n1}\th_{1}+...+\s_{nD}\th_{D}+(\l_{1}-\wt{\l}_{1})y_{n,1}+...+(\l_{M}-\wt{\l}_{M})y_{n,M}.
\end{align*}
The market is arbitrage-free \textcolor{black}{if the equations have
a solution and }incomplete, \textcolor{black}{because there are} $n$
equations \textcolor{black}{with the} $D+M$ unknowns $\th_{1},...,\th_{D},\wt{\l}_{1},...,\wt{\l}_{M}$.

\subsubsection{The Fictitious Market}

To obtain a complete market, we reduce the filtration to one generated
by $n$ sources of randomness. Obviously, this can be done in many
different ways. We consider several examples \textcolor{black}{from
which a pattern }\textcolor{black}{emerges.}

{ \centering \textbf{Case 1: Reduction to a Brownian Market} \\
 } \bigskip{}

Assume $n<M$ and $n<D$. We choose \textcolor{black}{the reduction}
$\mF$ generated\textcolor{black}{{} by the} $n$ Brownian motions, i.e.
we do not include any Poisson processes and reduce the number of Brownian
motions from $D$ to $n$. \textcolor{black}{Thus, the price process
in the fictitious market is} 
\begin{align*}
d\wt{S}_{i}(t) & =\wt{S}_{i}(t)\left((\a_{i}-r)dt+\sum_{d=1}^{n}\s_{id}dW^{d}(t)\right),\\
\wt{S}_{i}(t) & =S_{i}(0)e^{(\a_{i}-r)t-\frac{1}{2}\left(\sum_{d=1}^{D}\s_{id}^{2}\right)t+\sum_{d=1}^{D}\s_{id}W^{d}(t)}
\end{align*}
for all $t\in[0,T]$ and $i=1,...,n$. It is clear that this i\textcolor{black}{s
a Brownian motion market with constant coefficients. Hence,} \textcolor{black}{this
is }a complete market and, by the second fundamental theorem of asset
pricing, the unique EMM $\wt{\mQ}$ is given by $\wt{\th}=\wt{\s}^{-1}(\a-r\wt{\1})$,
where $\wt{\s}$ refers to the original volatility matrix\textcolor{black}{{}
truncated to} the corresponding \textcolor{black}{relevant }$n$ terms.

\bigskip{}

{ \centering \textbf{The Uplifted EMM} \\
 } \bigskip{}

From the \textcolor{black}{previous section}, if the reduced market
is arbitrage-free and complete, there exists a \textcolor{black}{unique
consistent} uplift $\mQ^{*}$ of the unique EMM $\wt{\mQ}$ \textcolor{black}{from}\textcolor{black}{{}
the fictitious market} to the original market given by $\th^{*}\coloneqq(\wt{\th},\mathbf{0}_{D-n})$.

\textcolor{black}{Combining this with the previous analysis, we can
obtain }\textcolor{black}{a unique consistent uplift}\textcolor{black}{{}
to the original jump-diffusion market with discrete jumps.} 
\begin{thm}
\textcolor{black}{(A Unique Consistent Uplifted EMM)} Assume the reduced
market obtained by complete neglect is arbitrage-free and complete.
Then, there exists a unique consistent uplift $\mQ^{*}$ of the unique
EMM $\wt{\mQ}$ \textcolor{black}{from}\textcolor{black}{{} the fictitious
market }to the original market given by $\th^{*}$ and $\wt{\l}_{m}=\l_{m},m=1,...,M$. 
\end{thm}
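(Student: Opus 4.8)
The plan is to combine two ingredients that have already been developed: the reduction-to-a-Brownian-market computation of Case~1, which pins down the diffusive part of the uplifted market price of risk, and the sequential-uplift argument of the previous subsection, which pins down the jump intensities. First I would observe that the reduced filtration $\mF$ in the general case is the one generated by $n$ Brownian motions (and no Poisson processes), so that the fictitious market is exactly the constant-coefficient Brownian market of Case~1; by the second fundamental theorem of asset pricing it is complete and carries the unique EMM $\wt{\mQ}$ determined by $\wt{\th}=\wt{\s}^{-1}(\a-r\wt{\1})$, where $\wt{\s}$ is the $n\times n$ truncation of the volatility matrix. This is where the hypothesis ``the reduced market is arbitrage-free and complete'' is used: it guarantees $\wt{\s}$ is invertible so that $\wt{\th}$ exists and is unique.

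Next I would uplift. The original market price of risk equations are
\begin{align*}
\a_{i}-r=\sum_{j=1}^{D}\s_{ij}\th_{j}+\sum_{m=1}^{M}(\l_{m}-\wt{\l}_{m})y_{i,m},\quad i=1,\dots,n.
\end{align*}
A candidate uplift is $\th^{*}=(\wt{\th},\mathbf{0}_{D-n})$ together with $\wt{\l}_{m}^{*}=\l_{m}$ for all $m=1,\dots,M$; substituting these makes every jump term vanish and reduces the system to $\a_{i}-r=\sum_{d=1}^{n}\s_{id}\wt{\th}_{d}$, which holds by the definition of $\wt{\th}$. So an uplifted EMM exists. For uniqueness I would appeal directly to the consistency argument already given: among all solutions $(\th,\wt{\l})$ extending $\wt{\th}$ on the first $n$ Brownian coordinates, consistency of the uplift across the different admissible filtration reductions (equivalently, the sequential one-source-at-a-time uplift procedure) forces $\wt{\l}_{m}^{*}=\l_{m}$ for every neglected Poisson process $N_{m}$, i.e.\ for all $m=1,\dots,M$; and once the jump intensities are fixed, the remaining Brownian market prices of risk $\th_{n+1},\dots,\th_{D}$ are forced to be $0$ because the diffusion system already has a unique solution in the $n$ retained coordinates and the discarded Brownian motions do not appear in the price equations (any nonzero value would be a ``priced'' non-hedged risk, which consistency with the sub-reduction neglecting that Brownian motion excludes, exactly as in the Poisson case). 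Hence $(\th^{*},\wt{\l}^{*})$ is the unique consistent uplift.

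I would then close by noting that $\mQ^{*}$ is indeed an EMM in the original market — the discounted $S_{i}$ are $\mQ^{*}$-martingales precisely because the market price of risk equations are satisfied, by the Girsanov theorem for general jump-diffusions stated in the preliminaries — and that its restriction to $\cF_{T}$ is $\wt{\mQ}$ by the same one-line conditional-expectation computation used earlier (the Radon--Nikodym density $Z_T^{*}$ projects onto $\wt{Z}_T$ on $\cF_T$). The main obstacle, and the only place requiring care, is the uniqueness half: one must argue that consistency across \emph{all} admissible $n$-source reductions — not just reductions to Brownian markets — genuinely pins down every neglected intensity to its original value and every extraneous Brownian market price of risk to zero. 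The cleanest route is to invoke the already-established sequential uplift procedure: uplift one neglected source at a time through the chain $\mF\subset\mF^{(+1)}\subset\cdots\subset\mG$, at each step solving a homogeneous system whose only solution (given the $y$'s are nonzero, resp.\ the $\s$-columns are the relevant ones) is $\wt{\l}=\l$, resp.\ $\th=0$, so that the composition is forced. This reduces the general statement to the two special cases already treated, and no genuinely new calculation is needed.
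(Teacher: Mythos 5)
Your proposal is correct and follows essentially the paper's own route: the paper asserts this theorem by combining the pure-Brownian consistency result (which yields $\th^{*}=(\wt{\th},\mathbf{0}_{D-n})$) with the sequential consistency argument of the preceding subsection (which forces $\wt{\l}_{m}=\l_{m}$ for every neglected Poisson source), and that is exactly the combination you spell out, adding the explicit existence check and the $\mQ^{*}|_{\cF_{T}}=\wt{\mQ}$ observation. One small wording fix: the discarded Brownian motions do appear in the original market price-of-risk equations (they impose $\sum_{j>n}\s_{ij}\th_{j}=0$, which can admit nonzero solutions when $D-n>n$), so it is the consistency argument across sub-reductions — which you correctly invoke — and not the equations themselves that forces $\th_{n+1},\dots,\th_{D}=0$.
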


\bigskip{}

{ \centering \textbf{Case 2: Reduction to a Jump-Diffusion Market}
\\
 } \bigskip{}

Assume $n<M$ and $n\geq D$. \textcolor{black}{Here, we }choose \textcolor{black}{a}
reduction $\mF$ generated by the full set of $D$ Brownian motions
and $n-D$ Poisson processes. \textcolor{black}{Thus, the price process
in the fictitious market is} 
\begin{align*}
d\wt{S}_{i}(t) & =(\a_{i}-\wh{\b}_{i}\wh{\l})S(t)dt+\sum_{j=1}^{D}\s_{ij}S(t)dW^{j}(t)+S_{i}(t-)d\wt{Q}_{i}(t),\quad i=1,...,n,\\
\wt{Q}_{i}(t) & =y_{i,1}N_{1}(t)+y_{i,2}N_{2}(t)+...+y_{i,n-D}N_{n-D}(t),\quad i=1,...,n,
\end{align*}
where $y_{i,m}>-1$ for $i=1,...,n$, $m=1,...,n-D$, $\wh{\b}_{i}=\frac{1}{\wh{\l}}(\l_{1}y_{i,1}+\l_{2}y_{i,2}+...+\l_{n-D}y_{i,n-D}),\quad i=1,...,n$,
and $\wh{\l}=\sum_{m=1}^{n-D}\l_{m}$.

The corresponding market price of risk equations are 
\begin{align*}
\a_{1}-r & =\s_{11}\th_{1}+...+\s_{1D}\th_{D}+(\l_{1}-\wt{\l}_{1})y_{1,1}+...+(\l_{n-D}-\wt{\l}_{n-D})y_{1,n-D},\\
...\\
\a_{n}-r & =\s_{n1}\th_{1}+...+\s_{nD}\th_{D}+(\l_{1}-\wt{\l}_{1})y_{3,1}+...+(\l_{n-D}-\wt{\l}_{n-D})y_{n,n-D}.
\end{align*}
\textcolor{black}{There are $n$ equations in $n$ unknowns} $\th_{1},...,\th_{D},\wt{\l}_{1},...,\wt{\l}_{n-D}$.
Assuming that it has a unique solution, we obtain an arbitrage-free
and complete market, and hence a unique EMM $\wt{\mQ}$ given by $\wt{\th}^{*}$
and $\wt{\l}_{1}^{*},...,\wt{\l}_{n-D}^{*}$.

\bigskip{}

{ \centering \textbf{The Uplifted EMM} \\
 } \bigskip{}

Following the same analysis as above, we uplift this measure to the
original jump-diffusion \textcolor{black}{market.}
\begin{thm}
\textcolor{black}{(A Unique Consistent Uplifted EMM)} Assume the reduced
market obtained by complete neglect is arbitrage-free and complete.
Then, there exists a unique consistent uplifted EMM $\mQ^{*}$ \textcolor{black}{from
the} unique EMM $\wt{\mQ}$ \textcolor{black}{in the fictitious market}
given by $\wt{\th}_{1}^{*},...,\wt{\th}_{D}^{*}$, $\wt{\l}_{1}^{*},...,\wt{\l}_{n-D}^{*},$
and $\wt{\l}_{m}=\l_{m},m=n-D+1,...,M$. 
\end{thm}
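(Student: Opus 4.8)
The plan is to verify three things in turn: that the stated parameter vector defines a genuine EMM $\mQ^{*}$ on $\cG_{T}$, that it is consistent, and that no other consistent uplift exists.

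\textbf{Step 1 (the candidate is an EMM).} Let $(\wt{\th}^{*},\wt{\l}_{1}^{*},\dots,\wt{\l}_{n-D}^{*})$, with $\wt{\th}^{*}=(\wt{\th}_{1}^{*},\dots,\wt{\th}_{D}^{*})$, be the unique solution of the fictitious market price of risk equations, put $\wt{\l}_{m}^{*}\coloneqq\l_{m}$ for $m=n-D+1,\dots,M$, and form
\begin{align*}
Z(t)=\exp\Bigl(-\sum_{d=1}^{D}\wt{\th}_{d}^{*}W^{d}(t)-\tfrac{1}{2}\sum_{d=1}^{D}(\wt{\th}_{d}^{*})^{2}t\Bigr)\prod_{m=1}^{M}e^{(\l_{m}-\wt{\l}_{m}^{*})t}\Bigl(\tfrac{\wt{\l}_{m}^{*}}{\l_{m}}\Bigr)^{N_{m}(t)}.
\end{align*}
Because the coefficients are constant, each factor is a mean-one $\mP$-martingale and the factors are independent, so $\mE_{\mP}Z(T)=1$ and $\mQ^{*}(A)\coloneqq\int_{A}Z(T)\,d\mP$ defines a probability measure equivalent to $\mP$ on $\cG_{T}$. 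By the Girsanov theorem for general jump-diffusions, under $\mQ^{*}$ each $\wt{W}^{d}(t)=W^{d}(t)-\wt{\th}_{d}^{*}t$ is a Brownian motion, each $N_{m}$ is Poisson with intensity $\wt{\l}_{m}^{*}$, and all are independent. Substituting into the SDE for $S_{i}$ and discounting, the drift of $e^{-rt}S_{i}(t)$ under $\mQ^{*}$ vanishes exactly when $\a_{i}-r=\sum_{d=1}^{D}\s_{id}\wt{\th}_{d}^{*}+\sum_{m=1}^{M}(\l_{m}-\wt{\l}_{m}^{*})y_{i,m}$; since $\wt{\l}_{m}^{*}=\l_{m}$ for $m>n-D$ the last $M-(n-D)$ terms drop out, leaving precisely the $i$-th fictitious market price of risk equation, which holds by construction. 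Hence each discounted price is a $\mQ^{*}$-local martingale, and, with constant coefficients together with the integrability built into the Doleans-Dade exponentials, a true $\mQ^{*}$-martingale, so $\mQ^{*}$ is an EMM.

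\textbf{Step 2 (consistency).} For consistency I would use the sequential uplift, exactly as in the $D=1$ case. Fix any chain $\mF=\mF^{(0)}\subset\mF^{(1)}\subset\dots\subset\mF^{(M-(n-D))}=\mG$ in which $\mF^{(k)}$ is generated by $\mF^{(k-1)}$ together with one additional Poisson process $N_{m_{k}}$. At stage $k$ the newly imposed market price of risk equations read $(\l_{m_{k}}-\wt{\l}_{m_{k}})y_{i,m_{k}}=0$ for $i=1,\dots,n$; since the discarded process $N_{m_{k}}$ genuinely affects some traded stock, some $y_{i,m_{k}}\neq0$, forcing $\wt{\l}_{m_{k}}=\l_{m_{k}}$. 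Iterating over $k$ gives $\wt{\l}_{m}=\l_{m}$ for every $m>n-D$, so the uplift produced along any such chain is $\mQ^{*}$ itself. Running the same argument along different admissible chains --- equivalently, the three-reduction contradiction argument used earlier in this section --- shows that any uplift agreeing with all of them on the overlapping neglected intensities must also satisfy $\wt{\l}_{m}=\l_{m}$ for all discarded $m$; thus $\mQ^{*}$ is consistent.

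\textbf{Step 3 (uniqueness).} Once consistency has pinned $\wt{\l}_{m}=\l_{m}$ for $m=n-D+1,\dots,M$, the remaining parameters $(\th_{1},\dots,\th_{D},\wt{\l}_{1},\dots,\wt{\l}_{n-D})$ of any consistent uplifted EMM must satisfy the original market price of risk system, which after these cancellations collapses to the fictitious system; that system has the unique solution $(\wt{\th}^{*},\wt{\l}_{1}^{*},\dots,\wt{\l}_{n-D}^{*})$ by hypothesis. Hence the consistent uplifted EMM is unique and equals $\mQ^{*}$. I expect the main obstacle to be Step 2: making precise which filtration reductions count as admissible, verifying the nondegeneracy needed to pass from $(\l_{m}-\wt{\l}_{m})y_{i,m}=0$ to $\wt{\l}_{m}=\l_{m}$ at each stage, and confirming that the intensities so obtained are consistent with every other reduction to a complete jump-diffusion market rather than merely with the particular chain chosen. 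Steps 1 and 3 are routine given the Girsanov and second fundamental theorem results already established.
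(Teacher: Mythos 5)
Your proposal is correct and follows essentially the same approach the paper takes: the paper's argument for this Case-2 theorem is literally a reference back to the $D=1$ consistency and sequential-uplift discussion in Section 3.2.3, and your Steps 2 and 3 reproduce exactly that argument (you even invoke both the three-reduction contradiction and the sequential chain, which are the two variants the paper offers). Your Step 1 (explicitly verifying via the Doleans-Dade exponential and jump-diffusion Girsanov that the candidate vector is indeed an EMM) is useful bookkeeping that the paper leaves implicit, and your flagged concern about the nondegeneracy $y_{i,m}\neq 0$ is exactly the hypothesis the paper invokes without emphasis ("since the $y$'s are nonzero"); so you have correctly identified where the argument quietly relies on the jump-size matrix not degenerating.
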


\subsection{Time-Varying Coefficients and Inhomogeneous Poisson Processes}

\textcolor{black}{We now extend the }\textcolor{black}{analytics to}\textcolor{black}{{}
Poisson processes with intensities that are deterministic functions
of time. Also, the mean rate of return $\a(t)$, the money market
rate $r(t)$, and the volatility $\s(t)$ can be deterministic functions
of time as well, with $\s(t)$ being square integrable}.

\subsubsection{\textcolor{black}{Preliminaries}}

\textcolor{black}{To understand }\textcolor{black}{the changes necessary}\textcolor{black}{,
we recast the constant coefficient case to the more general notation
of stochastic integrals with respect to random measures and their
compensators. Note 
\begin{align*}
\prod_{i=1}^{N(t)}(1+Y_{i})=e^{\ln\prod_{i=1}^{N(t)}(1+Y_{i})}=e^{\sum_{i=1}^{N(t)}\ln(1+Y_{i})}=e^{\int_{0}^{t}\int_{\mR}\ln(1+y)N(ds,dy)},
\end{align*}
and hence the associated exponential martingale is given by 
\begin{align*}
e^{\int_{0}^{t}\int_{\mR}\ln(1+y)N(ds,dy)-\int_{0}^{t}\int_{\mR}y\nu(ds,dy)} & =e^{\int_{0}^{t}\int_{\mR}\ln(1+y)N(ds,dy)-\int_{0}^{t}\int_{\mR}yds\l(dy)}\\
 & =e^{\int_{0}^{t}\int_{\mR}\ln(1+y)N(ds,dy)-\l t\int_{\mR}y\G(dy)}\\
 & =e^{\int_{0}^{t}\int_{\mR}\ln(1+y)N(ds,dy)-\l t\b},
\end{align*}
where we have used the fact }\textcolor{black}{that for a}\textcolor{black}{{}
compound Poisson (and hence L\'evy) process, the compensator }\textcolor{black}{factors}\textcolor{black}{{}
as $\nu(ds,dy)=ds\l(dy)$, with $\l$ the L\'evy measure of jumps, $\G(dy)=\frac{\l(dy)}{\l(\mR)}$
the distribution of $Y_{i}$, $\l\coloneqq\l(\mR)$, and $\b\coloneqq\int_{\mR}y\G(dy)$. }

\textcolor{black}{When the intensity $\l$ of the Poisson process
is not}\textcolor{black}{{} a constant}\textcolor{black}{, the resulting
marked point process in the exponent is no longer a L\'evy process,
since it does not have stationary increments. }\textcolor{black}{Specifically,
the}\textcolor{black}{{} compensator measure no longer }\textcolor{black}{factors
into}\textcolor{black}{{} a product measure, and the distribution of
the marks explicitly depends on $t$, i.e. $\nu(ds,dy)=\l_{s}(dy)ds=\l_{s}(\mR)\G_{s}(dy)ds$.
Here, the associated exponential martingale is given by 
\begin{align*}
e^{\int_{0}^{t}\int_{\mR}\ln(1+y)N(ds,dy)-\int_{0}^{t}\int_{\mR}y\nu(ds,dy)} & =e^{\int_{0}^{t}\int_{\mR}\ln(1+y)N(ds,dy)-\int_{0}^{t}\int_{\mR}y\l_{s}(dy)ds}\\
 & =e^{\int_{0}^{t}\int_{\mR}\ln(1+y)N(ds,dy)-\int_{0}^{t}\int_{\mR}y\G_{s}(dy)\l_{s}(\mR)ds}.
\end{align*}
}

\textcolor{black}{In }\textcolor{black}{this section, }\textcolor{black}{we
assume that the measures $\{\G_{t}\}$ are (uniformly in $t$) finitely
supported. For a finite time horizon $[0,T]$ this means that their
atoms can be represented as a finite set $B=\{y_{1},...,y_{M}\}$.
With a slight abuse}\textcolor{black}{{} of terminology, }\textcolor{black}{we
call this set their (common) support. Note the two cases: 
\begin{align*}
\text{L\'evy case:}\quad & \int_{\mR}yF(dy)=y_{1}p_{1}+...+y_{M}p_{M},\\
 & F(dy)=\d_{y_{1}}(dy)\frac{\l(\{y_{1}\})}{\l(B)}+...+\d_{y_{M}}(dy)\frac{\l(\{y_{M}\})}{\l(B)},\\
\text{Time-varying case:}\quad & \int_{\mR}yF_{s}(dy)=y_{1}p_{1}(s)+...+y_{M}p_{M}(s),\\
 & F_{s}(dy)=\d_{y_{1}}(dy)\frac{\l_{s}(\{y_{1}\})}{\l_{s}(B)}+...+\d_{y_{M}}(dy)\frac{\l_{s}(\{y_{M}\})}{\l_{s}(B)}.
\end{align*}
}\textcolor{black}{Within this setup, the following lemma will prove
useful.}
\begin{lem}
\textcolor{black}{\label{exp mart} Let the original price process
be  
\begin{align*}
S(t)\coloneqq e^{\int_{0}^{t}\int_{B}\ln(1+y)N(ds,dy)-\int_{0}^{t}\int_{B}y\l_{s}(dy)ds},
\end{align*}
where $N$ is a Poisson point process with finitely supported $\{\G_{t}\}$.
Then, $\mE[S_{t}]=1$ for all $t\in[0,T]$. } 
\end{lem}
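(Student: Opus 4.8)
The plan is to exploit the finite support of the mark distributions: replace the time-inhomogeneous marked point process by a fixed, finite family of mutually independent one-dimensional Poisson processes, one per jump size $y_m$, and then evaluate $\mE[S(t)]$ directly with the probability generating function of the Poisson law.

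First I would write $B=\{y_1,\dots,y_M\}$ and, for each $m$, set $N_m(t)\coloneqq\Psi([0,t]\times\{y_m\})$ and $\Lambda_m(t)\coloneqq\int_0^t\lambda_s(\{y_m\})\,ds$. Since $\lambda_s(dy)$ is deterministic and the singletons $\{y_1\},\dots,\{y_M\}$ are disjoint, Proposition~\ref{determ is poisson} applies verbatim: the processes $N_1,\dots,N_M$ are independent, and each $N_m$ is an inhomogeneous Poisson process with intensity function $\lambda_s(\{y_m\})$, so $N_m(t)$ has the Poisson law with parameter $\Lambda_m(t)$. Rewriting the two stochastic integrals in the exponent as finite sums, namely $\int_0^t\int_B\ln(1+y)\,N(ds,dy)=\sum_{m=1}^M\ln(1+y_m)N_m(t)$ and $\int_0^t\int_B y\,\lambda_s(dy)\,ds=\sum_{m=1}^M y_m\Lambda_m(t)$, gives the product representation $S(t)=\prod_{m=1}^M(1+y_m)^{N_m(t)}e^{-y_m\Lambda_m(t)}$. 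The factors are independent and each is nonnegative with finite mean, since $\mE[(1+y_m)^{N_m(t)}]=e^{\Lambda_m(t)((1+y_m)-1)}<\infty$ because $y_m>-1$ and $\Lambda_m(t)<\infty$; hence the expectation factorizes and
\[
\mE[S(t)]=\prod_{m=1}^M e^{-y_m\Lambda_m(t)}\,\mE\big[(1+y_m)^{N_m(t)}\big]=\prod_{m=1}^M e^{-y_m\Lambda_m(t)}\,e^{y_m\Lambda_m(t)}=1,
\]
where the middle equality is the Poisson generating function evaluated at $z=1+y_m$.

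The computation is routine once the decomposition is in place; the point that needs care is the very first move. Because the marks are genuinely time-dependent, $S$ is \emph{not} the Doleans-Dade exponential of a compound Poisson (hence L\'evy) process, so the constant-coefficient identity from the discussion above cannot simply be quoted; it is Proposition~\ref{determ is poisson} --- deterministic intensity yields independent Poisson restrictions to disjoint mark sets --- that licenses the decomposition. The only standing hypothesis to keep track of is integrability of the intensity, $\int_0^T\lambda_s(B)\,ds<\infty$, which guarantees $\Lambda_m(T)<\infty$ and hence that each $(1+y_m)^{N_m(t)}$ is integrable. As a byproduct, running the same calculation conditionally --- using that each $N_m$ has independent increments --- shows $S$ is a genuine $\mP$-martingale, i.e. the stochastic exponential $dS(t)=S(t-)\int_B y\,\{N(dt,dy)-\lambda_t(dy)\,dt\}$ is a true martingale and not merely a local one; but that is more than the lemma asks for.
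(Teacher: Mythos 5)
Your argument is correct. The paper dispatches this lemma with a citation (Br\'emaud, Theorem~3.2.2), a general result on exponential (local) martingales of point processes that one would normally prove by stochastic calculus and a careful localization argument; you instead give an explicit, self-contained verification that exploits exactly the two standing hypotheses of the lemma -- deterministic intensity and finite mark support. The pivot is your invocation of Proposition~\ref{determ is poisson} to split $\Psi$ along the singletons $\{y_m\}$ into independent inhomogeneous Poisson processes $N_m$; after that the exponent collapses to a finite sum, $S(t)=\prod_m(1+y_m)^{N_m(t)}e^{-y_m\Lambda_m(t)}$, and each factor is handled by the Poisson probability generating function $\mE[z^{N_m(t)}]=e^{\Lambda_m(t)(z-1)}$ at $z=1+y_m>0$. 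Finite support is what makes the product finite (no $L^1$/uniform-integrability delicacy in passing the expectation through an infinite product), determinism of $\lambda_t$ is what licenses both the independence of the $N_m$ and the computability of the marginals, and $y_m>-1$ is what keeps $z>0$ so the generating function is available. This buys a transparent proof that also makes clear, as you note, where the stated hypotheses are actually used and that $\int_0^T\lambda_s(B)\,ds<\infty$ is the only integrability needed; your closing remark that the same computation run incrementally gives the true (not merely local) martingale property is a correct and useful strengthening, even though the lemma only claims $\mE[S(t)]=1$.
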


\begin{proof}
\textcolor{black}{See, for example, Br\'emaud, 2021 \cite{Bremaud 2021},
p. 81, Theorem 3.2.2.} 
\end{proof}
\textcolor{black}{Although the finite support condition can be relaxed,
the above result }\textcolor{black}{suffices for }\textcolor{black}{our
purposes. Because reducing the filtration in the complete neglect
case corresponds}\textcolor{black}{{} to retaining a subset of $B$ that
will be hedged}\textcolor{black}{, we perform the following natural
decomposition of the point process corresponding to a partition of
$B$. }

\textcolor{black}{Let $\Psi$ be a marked point process with a deterministic
intensity function $\l$. }\textcolor{black}{We assume }\textcolor{black}{throughout
that $\Psi([0,t]\times B)<\infty$ for all $t\in[0,T]$. Proposition
\ref{determ is poisson} implies }\textcolor{black}{that the }\textcolor{black}{associated
counting process $N^{B}$ is a Poisson process with intensity $\l_{t}(B).$
Partition $B$ into $\wh{B}\cup\wh{B}^{c}$. By the same proposition
\ref{determ is poisson}, the associated counting processes $N^{\wh{B}}$
and $N^{\wh{B}^{c}}$ are independent Poisson processes with intensities
$\l_{t}(\wh{B})$ and $\l_{t}(\wh{B}^{c})$, respectively. Since they
are independent, $N^{\wh{B}}$ and $N^{\wh{B}^{c}}$ have no common
jumps. Hence, we obtain the decomposition of the random measure $\Psi$
as in proposition \ref{no common jumps}, i.e. 
\begin{align*}
\Psi(dt,dy)=\Psi^{\wh{B}}(dt,dy)+\Psi^{\wh{B}^{c}}(dt,dy),
\end{align*}
where $\Psi^{\wh{B}}(dt,dy)$ and $\Psi^{\wh{B}^{c}}(dt,dy)$ are
marked point processes with intensity functions $\l_{t}(\wh{B})$
and $\l_{t}(\wh{B}^{c})$, respectively with $\l_{t}(B)=\l_{t}(\wh{B}^{c})+\l_{t}(\wh{B}^{c})$.
This decomposition yields the following factorization of the original
price process 
\begin{align*}
S(t) & \coloneqq e^{\int_{0}^{t}\int_{B}\ln(1+y)\Psi(ds,dy)-\int_{0}^{t}\int_{B}y\l_{s}(dy)ds}\\
 & =e^{\int_{0}^{t}\int_{\wh{B}}\ln(1+y)\Psi^{\wh{B}}(ds,dy)+\int_{0}^{t}\int_{\wh{B}^{c}}\ln(1+y)\Psi^{\wh{B}^{c}}(ds,dy)-\int_{0}^{t}\int_{^{\wh{B}}}y\l_{s}(dy)ds-\int_{0}^{t}\int_{^{\wh{B}^{c}}}y\l_{s}(dy)ds}\\
 & =e^{\int_{0}^{t}\int_{\wh{B}}\ln(1+y)\Psi^{\wh{B}}(ds,dy)-\int_{0}^{t}\int_{^{\wh{B}}}y\l_{s}(dy)ds}e^{\int_{0}^{t}\int_{\wh{B}^{c}}\ln(1+y)\Psi^{\wh{B}^{c}}(ds,dy)-\int_{0}^{t}\int_{^{\wh{B}^{c}}}y\l_{s}(dy)ds}\\
 & =S^{\wh{B}}(t)S^{\wh{B}^{c}}(t),
\end{align*}
where $S^{\wh{B}}(t)\coloneqq e^{\int_{0}^{t}\int_{\wh{B}}\ln(1+y)\Psi^{\wh{B}}(ds,dy)-\int_{0}^{t}\int_{^{\wh{B}}}y\l_{s}(dy)ds}$
and }\\
 \textcolor{black}{$S^{\wh{B}^{c}}(t)\coloneqq e^{\int_{0}^{t}\int_{\wh{B}^{c}}\ln(1+y)\Psi^{\wh{B}^{c}}(ds,dy)-\int_{0}^{t}\int_{^{\wh{B}^{c}}}y\l_{s}(dy)ds}$. }

\textcolor{black}{Let $\mG$ be the filtration generated by $\Psi$,
and $\mF$ }\textcolor{black}{the filtration}\textcolor{black}{{} generated
by $\Psi^{\wh{B}}$. With this setup, we have the following result.
} 
\begin{thm}
\textcolor{black}{Let the original price process be 
\begin{align*}
S(t) & \coloneqq e^{\int_{0}^{t}\int_{B}\ln(1+y)\Psi(ds,dy)-\int_{0}^{t}\int_{B}y\l_{s}(dy)ds}.
\end{align*}
Then, 
\begin{align*}
\wt{S}(t)\coloneqq\mE_{\mP}[S(t)|\cF_{t}]=S^{\wh{B}}(t)=e^{\int_{0}^{t}\int_{\wh{B}}\ln(1+y)\Psi^{\wh{B}}(ds,dy)-\int_{0}^{t}\int_{^{\wh{B}}}y\l_{s}(dy)ds},
\end{align*}
where $\mE_{\mP}[S(t)|\cF_{t}]$ is to be understood in the sense
of optional projections. } 
\end{thm}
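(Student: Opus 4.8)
The plan is to compute the optional projection $\wt{S}(t) = \mE_{\mP}[S(t)\mid\cF_t]$ by exploiting the factorization $S(t) = S^{\wh{B}}(t)\,S^{\wh{B}^c}(t)$ established just before the statement. First I would note that $S^{\wh{B}}(t)$ is $\cF_t$-measurable by construction, since $\mF$ is the filtration generated by $\Psi^{\wh{B}}$ and $S^{\wh{B}}(t)$ is built from the integral of $\ln(1+y)$ against $\Psi^{\wh{B}}$ up to time $t$ together with the deterministic compensator term $\int_0^t\int_{\wh{B}}y\,\l_s(dy)\,ds$. Hence it pulls out of the conditional expectation, leaving $\wt{S}(t) = S^{\wh{B}}(t)\,\mE_{\mP}[S^{\wh{B}^c}(t)\mid\cF_t]$.

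The core step is then to show $\mE_{\mP}[S^{\wh{B}^c}(t)\mid\cF_t] = 1$. For this I would invoke the independence of $\Psi^{\wh{B}}$ and $\Psi^{\wh{B}^c}$: since $\l$ is deterministic, Proposition~\ref{determ is poisson} gives that $N^{\wh{B}}$ and $N^{\wh{B}^c}$ are independent Poisson processes, and more fully the marked point processes $\Psi^{\wh{B}}$ and $\Psi^{\wh{B}^c}$ are independent. Therefore $S^{\wh{B}^c}(t)$, being a functional of $\Psi^{\wh{B}^c}$ alone, is independent of $\cF_t = \sigma(\Psi^{\wh{B}}_s : s\le t)$, so the conditional expectation collapses to the unconditional one: $\mE_{\mP}[S^{\wh{B}^c}(t)\mid\cF_t] = \mE_{\mP}[S^{\wh{B}^c}(t)]$. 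Finally, $S^{\wh{B}^c}$ has exactly the form of the process in Lemma~\ref{exp mart} — it is the stochastic exponential of $\int_0^\cdot\int_{\wh{B}^c}\ln(1+y)\{\Psi^{\wh{B}^c}(ds,dy) - \l_s(dy)\,ds\}$ driven by a Poisson point process with finitely supported (restricted) mark distributions $\{\G_t|_{\wh{B}^c}\}$ — so Lemma~\ref{exp mart} yields $\mE_{\mP}[S^{\wh{B}^c}(t)] = 1$ for all $t\in[0,T]$. Combining, $\wt{S}(t) = S^{\wh{B}}(t)$, which is the claimed formula.

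I would also address the measure-theoretic point that the right-hand side is a genuine càdlàg version of the optional projection rather than merely an equality of conditional expectations at each fixed $t$: since $S^{\wh{B}}$ is an $\mF$-adapted càdlàg process and, for each $t$, $S^{\wh{B}}(t) = \mE_{\mP}[S(t)\mid\cF_t]$ a.s., the process $S^{\wh{B}}$ \emph{is} the optional projection of $S$ onto $\mF$ by the uniqueness of optional projections (using that $\mF$ satisfies the usual conditions, inherited from $\mG$, and that $S$ is of integrable variation / a martingale so the projection exists). One should check that $S$ is nonnegative and integrable — which follows from Lemma~\ref{exp mart} applied to the full $\Psi$ with support $B$ — so the optional projection is well defined.

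The main obstacle I anticipate is not any single hard estimate but rather being careful about the independence claim: one must justify that the \emph{entire} reduced point process $\Psi^{\wh{B}^c}$ (all its event times and marks) is independent of $\Psi^{\wh{B}}$, not merely that the aggregate counting processes $N^{\wh{B}}$ and $N^{\wh{B}^c}$ are independent. This is exactly the content of Proposition~\ref{determ is poisson}(ii) together with the thinning/marking structure — when the intensity measure is deterministic, splitting the mark space into disjoint pieces produces genuinely independent marked point processes — and Proposition~\ref{no common jumps} confirms the additive decomposition is consistent. Once that independence is in hand, the rest is a clean application of the tower property, the pull-out property, and Lemma~\ref{exp mart}; no delicate computation with the exponential formula is needed beyond recognizing $S^{\wh{B}^c}$ as an instance of the lemma's process with mark space $\wh{B}^c$.
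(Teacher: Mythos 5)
Your proposal is correct and follows essentially the same route as the paper: factor $S = S^{\wh{B}}S^{\wh{B}^c}$, pull out the $\cF_t$-measurable factor $S^{\wh{B}}(t)$, use independence of $\Psi^{\wh{B}}$ and $\Psi^{\wh{B}^c}$ to drop the conditioning, and apply Lemma~\ref{exp mart} to conclude $\mE_{\mP}[S^{\wh{B}^c}(t)]=1$. The additional remarks you make about the full independence of the marked point processes (not just the aggregate counters) and about realizing the optional projection as a c\`adl\`ag version are sound refinements that the paper leaves implicit.
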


\begin{proof}
$\wt{S}(t) \coloneqq\mE_{\mP}[S(t)|\cF_{t}]=S^{\wh{B}}(t)\mE_{\mP}[S^{\wh{B}^{c}}(t)|\cF_{t}]=S^{\wh{B}}(t)\mE_{\mP}[S^{\wh{B}^{c}}(t)]=S^{\wh{B}}(t)$ by independence of $\Psi^{\wh{B}}$ and $\Psi^{\wh{B}^{c}}$ and Lemma \ref{exp mart}. 
\end{proof}
\textcolor{black}{Thus, $(S^{\wh{B}}(t),t\in\mT)$ is the fictitious
price process corresponding to the reduced filtration $\mF$}\textcolor{black}{,
reflecting those risks that the trader desires to exactly hedge.}\textcolor{black}{{}
This fact will be used freely below without explicit notification. }

\subsubsection{The Original Market}

\textcolor{black}{This section constructs the original market. }We assume
there are $n$ stocks, one Brownian motion, and $M$ Poisson processes
with deterministic intensity functions $\l_{1}(t)$,...,$\l_{M}(t)$,
where $n<M$. The restriction to one Brownian motion is to simplify
the presentation and is without loss of generality. The filtration
$\mG$ is the augmented filtration generated by the Brownian motion
and Poisson processes. \textcolor{black}{The }\textcolor{black}{original
stock}\textcolor{black}{{} price process is given by} 
\begin{align*}
dS_{i}(t)=\left(\a_{i}(t)+\int_{B}y\l_{t}(dy)\right)S_{i}(t)dt+S_{i}(t)\s_{i}(t)dW(t)+S_{i}(t-)\int_{B}y\wt{\Psi}_{i}(dy,dt),\quad i=1,...,n,
\end{align*}
where $B\subset(-1,\infty)$ is the common support of \textcolor{black}{the
jump distributions} (i.e. the mark space $\{y_{1},...,y_{M}\}$),
$\l_{t}(\cdot)$ is the measure on the mark space defined via $\l_{t}(dy)=\l_{t}(B)\G_{t}(dy)$,
with $\G_{t}(dy)$ being the distribution of the jump size given an
event at time $t$, and 
\begin{align*}
\wt{\Psi}_{i}(dy,dt)=\Psi_{i}(dy,dt)-\l_{t}(dy)dt,
\end{align*}
\textcolor{black}{where $\Psi_{i}(dy,dt)$ is the random} measure
that governs the jumps. Note that $\int_{0}^{t}\int_{B}y\wt{\Psi}_{i}(dy,ds)$
is a (local) martingale \textcolor{black}{that represents} the compensated
sum of jumps. \textcolor{black}{Note also that }$\int_{0}^{t}\int_{B}y\Psi_{i}(dy,ds)=\sum_{j=1}^{N(t)}Y_{ij}$,
\textcolor{black}{and since} the distributions are finitely supported,
$\int_{B}y\l_{t}(dy)=\sum_{m=1}^{M}y_{m}\l_{m}(t)$\textcolor{black}{.
We keep} the \textcolor{black}{integral notation} to\textcolor{black}{{}
facilitate further }extensions.

The corresponding market price of risk equations are\footnote{\textcolor{black}{Note that the expressions for jumps }\textcolor{black}{are
integrals}\textcolor{black}{{} with respect to discrete measures $\l_{t}(dy)$
and $\wt{\l}_{t}(dy).$ We }\textcolor{black}{purposely adhere}\textcolor{black}{{}
to}\textcolor{black}{{} notation similar}\textcolor{black}{{} to the previous
settings. Here, $\l_{j}(t)$ corresponds to the mass $\l_{t}(\{y_{i,j}\}),i=1,...,n,j=1,...,M$.
Similarly for $\wt{\l}_{j}(t).$}} 
\begin{align*}
\a_{1}(t)-r(t) & =\s_{1}(t)\th(t)+(\l_{1}(t)-\wt{\l}_{1}(t))y_{1,1}+...+(\l_{M}(t)-\wt{\l}_{M}(t))y_{1,M},\\
...\\
\a_{n}(t)-r(t) & =\s_{n}(t)\th(t)+(\l_{1}(t)-\wt{\l}_{1}(t))y_{n,1}+...+(\l_{M}(t)-\wt{\l}_{M}(t))y_{n,M},
\end{align*}
and the market is clearly incomplete.

\subsubsection{The Fictitious Market}

\textcolor{black}{To obtain the fictitious market, we} reduce the filtration\textcolor{black}{{}
to the one} generated by the Brownian motion and the Poisson processes
$N_{1},...,N_{n-1}$. The price process in the fictitious market satisfies:
\begin{align*}
d\wt{S}_{i}(t)=\left(\a_{i}(t)+\int_{\wh{B}}y\l_{t}(dy)\right)\wt{S}_{i}(t)dt+\wt{S}_{i}(t)\s_{i}(t)dW(t)+\wt{S}_{i}(t-)\int_{\wh{B}}y\wt{\P}_{i}(dy,dt),\quad i=1,...,n,
\end{align*}
where $\wt{\P}_{i}(dy,dt)$ is the compensated random measure $\wt{\P}_{i}(dy,dt)=\P_{i}(dy,dt)-\wh{\l}_{t}(dy)dt,$ $\int_{0}^{t}\int_{\wh{B}}y\P_{i}(dy,dt)=\sum_{j=1}^{\wt{N}(t)}\wt{Y}_{ij}$,
$\wt{Y}_{ij}$ \textcolor{black}{are} supported on $\wh{B}\coloneqq\{y_{1},...,y_{n-1}\}$,
$\wt{N}(t)=N_{1}(t)+...+N_{n-1}(t)$, and $\wh{\l}_{t}(\cdot)$ is
the restriction of the measure $\l_{t}(\cdot)$ to $\wh{B}$.

The corresponding market price of risk equations are 
\begin{align*}
\a_{1}(t)-r(t) & =\s_{1}(t)\th(t)+(\l_{1}(t)-\wt{\l}_{1}(t))y_{1,1}+...+(\l_{n-1}(t)-\wt{\l}_{n-1}(t))y_{1,n-1},\\
...\\
\a_{n}(t)-r(t) & =\s_{n}(t)\th(t)+(\l_{1}(t)-\wt{\l}_{1}(t))y_{n,1}+...+(\l_{n-1}(t)-\wt{\l}_{n-1}(t))y_{n,n-1}.
\end{align*}
\textcolor{black}{There are} $n$ equations in $n$ unknowns $\th(t),\wt{\l}_{1}(t),...,\wt{\l}_{n-1}(t)$.
If there is a unique solution, we obtain an arbitrage-free and complete
market. \textcolor{black}{This implies that there}\textcolor{black}{{}
is }a unique \textcolor{black}{EMM} $\wt{\mQ}$ given by $\wt{\th}^{*}(t)$
and $\wt{\l}_{1}^{*}(t),...,\wt{\l}_{n-1}^{*}(t)$.

\textcolor{black}{Letting} $\ov{\l}^{*}(t)\coloneqq\wt{\l}_{1}^{*}(t)+...+\wt{\l}_{n-1}^{*}(t)$,
this corresponds to the Radon-Nikodym process \textcolor{black}{in
Girsanov's} theorem given by 
\begin{align*}
dZ_{i}(t)=Z_{i}(t-)\int_{\wh{B}}(\varphi_{t}(y)-1)\left[\P_{i}(dy,dt)-\wh{\l}_{t}(dy)dt\right],\quad i=1,...,n,
\end{align*}
where $\varphi_{t}(y)=\frac{\ov{\l}_{t}^{*}(dy)}{\wh{\l}_{t}(dy)}$.
The \textcolor{black}{solution is given by} $Z_{i}(t)=\prod_{j=1}^{\wt{N}(t)}\varphi_{T_{j}}(\wt{Y}_{ij})e^{\int_{0}^{t}\int_{\wh{B}}(1-\varphi_{s}(y))\wh{\l}_{s}(dy)ds},$ where $\{T_{n}\}$ are the jump times and $\{\wt{Y}_{in}\}$ are the
marks of $\P_{i}$. It can also be written as 
\begin{align*}
Z_{i}(t)=e^{\int_{0}^{t}\int_{\wh{B}}\ln\varphi_{s}(y)\P_{i}(dy,ds)+\int_{0}^{t}\int_{\wh{B}}(1-\varphi_{s}(y))\wh{\l}_{s}(dy)ds}.
\end{align*}

Letting $Z(t)=\prod_{i=1}^{n}Z_{i}(t)$ and assuming that $\mE_{\mP}Z(T)=1$,
we know that under the measure defined by $Z$, $\P_{i}(dy,dt)$ has
the intensity $\ov{\l}_{t}^{*}(dy)$.

\bigskip{}

{ \centering \textbf{The Uplifted EMM} \\
 } \bigskip{}

\textcolor{black}{We uplift the EMM in the fictitious market to the
original market in the usual }\textcolor{black}{way to obtain}\textcolor{black}{{}
the following theorem.} 
\begin{thm}
\textcolor{black}{(A Unique Consistent Uplifted EMM)} Assume the reduced
market obtained by complete neglect is arbitrage-free and complete.
Then, there exists a unique consistent uplifted EMM $\mQ^{*}$ of
the unique EMM $\wt{\mQ}$ \textcolor{black}{from }the\textcolor{black}{{}
fictitious market} given by $\wt{\th}^{*}(t)$, $\wt{\l}_{1}^{*}(t),...,\wt{\l}_{n-1}^{*}(t),$
and $\wt{\l}_{m}^{*}(t)=\l_{m}(t),m=n,...,M$. 
\end{thm}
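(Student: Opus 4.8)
The plan is to follow the three-step template of the earlier ``Unique Consistent Uplifted EMM'' theorems, now read in the time-varying, random-measure formulation. First I would describe the family of all uplifts. By the Girsanov theorem for general jump-diffusions, every EMM $\mQ\sim\mP$ on $\cG_T$ is specified by a Brownian market price of risk $\th(t)$ together with intensity functions $\wt{\l}_1(t),\dots,\wt{\l}_M(t)$ for $N_1,\dots,N_M$, and it is an EMM exactly when the original market price of risk equations hold for a.e. $t$. An uplift of $\wt{\mQ}$ is such a $\mQ$ with $\mQ|_{\cF_T}=\wt{\mQ}$; since $\mF$ is generated by $W,N_1,\dots,N_{n-1}$, this forces $\th(t)=\wt{\th}^*(t)$ and $\wt{\l}_j(t)=\wt{\l}_j^*(t)$ for $j=1,\dots,n-1$. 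Substituting these values into the original equations leaves, for a.e. $t$, the residual system $\sum_{m=n}^{M}(\l_m(t)-\wt{\l}_m(t))\,y_{i,m}=0$, $i=1,\dots,n$, in the unknowns $(\wt{\l}_n(t),\dots,\wt{\l}_M(t))$; since $M\ge n+1$ this system may admit solutions other than $\wt{\l}_m(t)=\l_m(t)$, so without a further selection criterion the uplift is not unique.

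Second, I would invoke consistency to single out one uplift, running the argument of the preceding subsection pointwise in $t$. Considering the alternative complete-neglect reductions $A$, $B$, $C$ that discard different pairs among the neglected processes (or, equivalently, the sequential $\mF^{(+1)},\mF^{(+2)},\dots$ procedure), one finds that the only candidate agreeing with all of them on their overlapping views of which risks are irrelevant is the one with $\wt{\l}_m^*(t)=\l_m(t)$ for every $m=n,\dots,M$. This is the step carrying the real content. I expect the main obstacle to be verifying that the combinatorial argument, stated in the excerpt only for constants, still goes through with $t$-dependent intensities and with the attendant measurability of the selection; it does, because each equation there is linear and homogeneous in the quantities $\l_m(t)-\wt{\l}_m(t)$ with the same nonzero constant coefficients $y_{i,m}$ at every $t$, so the contradiction $-2x_m(t)=0$ yields $x_m(t)=0$, hence $\wt{\l}_m^*(t)=\l_m(t)$, for a.e. $t$.

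Third, I would check that this choice genuinely yields an EMM and is an uplift of $\wt{\mQ}$. Because the Girsanov ratio $\varphi_t(y)$ equals $1$ on the discarded marks $\{y_n,\dots,y_M\}$, the Radon--Nikodym process of the Girsanov theorem for general jump-diffusions collapses to $Z(t)=\wt{Z}(t)$, the density process of $\wt{\mQ}$ on $\mF$, so $\mE_\mP Z(T)=\mE_\mP\wt{Z}(T)=1$ and $\mQ^*$ is a well-defined probability equivalent to $\mP$. The original market price of risk equations then hold by construction (the fictitious ones hold by definition of $\wt{\th}^*$ and $\wt{\l}^*$, and the extra terms vanish since $\wt{\l}_m^*=\l_m$), so each discounted $S_i$ is a $\mQ^*$-(local) martingale by the martingale/local-martingale criterion for stochastic integrals against compensated random measures together with the Brownian part. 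Finally, for $A\in\cF_T$ one has $\mQ^*(A)=\int_A Z(T)\,d\mP=\int_A\mE_\mP[Z(T)\mid\cF_T]\,d\mP=\int_A\wt{Z}(T)\,d\mP=\wt{\mQ}(A)$, exactly as in the earlier $\mQ^*|_{\cF_T}=\wt{\mQ}$ computation, confirming that $\mQ^*$ is the unique consistent uplifted EMM, which would complete the proof.
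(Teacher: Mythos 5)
Your proposal is correct and takes essentially the same approach as the paper, which itself simply defers to the preceding subsections with the remark that one ``uplifts the EMM in the fictitious market to the original market in the usual way.'' You have merely made explicit what the paper leaves implicit: that the consistency argument of the constant-coefficient case (reductions $A$, $B$, $C$, or equivalently the sequential $\mF^{(+1)},\mF^{(+2)},\dots$ procedure) is linear and homogeneous in the quantities $\l_m(t)-\wt{\l}_m(t)$ with $t$-independent coefficients $y_{i,m}$, and hence runs pointwise in $t$ without change; and that the restriction identity $\mQ^*|_{\cF_T}=\wt{\mQ}$ is verified by the same $\int_A Z(T)\,d\mP=\int_A\mE_{\mP}[Z(T)\mid\cF_T]\,d\mP$ computation used earlier.
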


\textcolor{black}{Letting} $\wt{\l}^{*}(t)\coloneqq\wt{\l}_{1}^{*}(t)+...+\wt{\l}_{M}^{*}(t)$,
this corresponds to the Radon-Nikodym process in \textcolor{black}{Girsanov's
theorem }given by 
\begin{align*}
dZ_{i}(t)=Z_{i}(t-)\int_{B}(\varphi_{t}(y)-1)\left[\Psi_{i}(dy,dt)-\l_{t}(dy)dt\right],\quad i=1,...,n,
\end{align*}
where $\varphi_{t}(y)=\frac{\wt{\l}_{t}^{*}(dy)}{\l_{t}(dy)}$. \textcolor{black}{The
solution is} $Z_{i}(t)=\prod_{j=1}^{N(t)}\varphi_{T_{j}}(Y_{ij})e^{\int_{0}^{t}\int_{B}(1-\varphi_{s}(y))\l_{s}(dy)ds},$ where $\{T_{n}\}$ are the jump times and $\{Y_{in}\}$ are the marks
of $\Psi_{i}$. Again, it can also be written as 
\begin{align*}
Z_{i}(t)=e^{\int_{0}^{t}\int_{B}\ln\varphi_{s}(y)\Psi_{i}(dy,ds)+\int_{0}^{t}\int_{B}(1-\varphi_{s}(y))\l_{s}(dy)ds}.
\end{align*}

\begin{rem}
\textcolor{black}{\emph{(The Financial Meaning of Consistency)}}

\textcolor{black}{Consider the Poisson processes whose associated
risks are }\textcolor{black}{neglected in}\textcolor{black}{{} the construction
of the fictitious market. Their intensities $\wt{\l}_{j}$ under the
EMM in the original market have the usual interpretation }\textcolor{black}{of
differing from $\l_{j}$ by a jump risk premium}\textcolor{black}{.
We see that in }\textcolor{black}{the consistent }\textcolor{black}{uplifted
EMM, the equality $\wt{\l}_{m}=\l_{m},m=n-D+1,...,M,$ states that
}\textbf{\textcolor{black}{these particular }}\textcolor{black}{intensities}\textbf{\textcolor{black}{{}
have no risk }}\textcolor{black}{premium. In }\textcolor{black}{essence,
because the trader does not consider }\textcolor{black}{these risks
when }\textcolor{black}{partially hedging any derivative's payoff,
}\textcolor{black}{these jump risks are not adjusted for a risk premium,
i.e. they represent non-priced risks.}
\end{rem}

\section{Filtration Reduction with Partial Neglect}

\textcolor{black}{This section introduces }\textcolor{black}{an alternative}\textcolor{black}{{}
filtration reduction that a trader might choose to employ, which we
called partial neglect.}

\subsection{\textcolor{black}{Discrete Jump Sizes}}

\textcolor{black}{We now consider the scenario where a trader does
not wish to completely ignore the information coming from some subset
}\textcolor{black}{of random processes,}\textcolor{black}{{} but chooses
a reduction that retains some of }\textcolor{black}{their randomness
}\textcolor{black}{as well}. \textcolor{black}{For clarity, we consider
}only one Brownian motio\textcolor{black}{n and focus on the }\textcolor{black}{jump
processes}.

\subsubsection{Trinomial\textcolor{black}{{} to a Binomial Jump} Reduction}

{ \centering \textbf{The Original Market} \\
 } \bigskip{}

\textcolor{black}{This section constructs the original market. }Let
the random jumps be modeled by the following compound Poisson processes.
\begin{align*}
Q_{i}(t)=y_{i,1}N_{1}+y_{i,2}N_{2}(t)+y_{i,3}N_{3}(t),\quad i=1,2,3,
\end{align*}
where $y_{i,m}>-1$ for $i=1,2,3$, $m=1,2,3$, with $\b_{i}=\frac{1}{\l}(\l_{1}y_{i,1}+\l_{2}y_{i,2}+\l_{3}y_{i,3}),\quad i=1,2,3,$ and the market price of risk equations are 
\begin{align*}
\a_{1}-r & =\s_{1}\th+(\l_{1}-\wt{\l}_{1})y_{1,1}+(\l_{2}-\wt{\l}_{2})y_{1,2}+(\l_{3}-\wt{\l}_{3})y_{1,3},\\
\a_{2}-r & =\s_{2}\th+(\l_{1}-\wt{\l}_{1})y_{2,1}+(\l_{2}-\wt{\l}_{2})y_{2,2}+(\l_{3}-\wt{\l}_{3})y_{2,3},\\
\a_{3}-r & =\s_{3}\th+(\l_{1}-\wt{\l}_{1})y_{3,1}+(\l_{2}-\wt{\l}_{2})y_{3,2}+(\l_{3}-\wt{\l}_{3})y_{3,3}.
\end{align*}
\textcolor{black}{Hence, the }market is incomplete.

\bigskip{}

{ \centering \textbf{The Fictitious Market} \\
 } \bigskip{}

\textcolor{black}{Consider the fictitious market corresponding to
the reduced} filtration $\mF$ generated by $N_{1}$ and $N_{2}+N_{3}$.
\textcolor{black}{As before, this explains the term ``partial neglect,''
whereby the }\textcolor{black}{trader exactly hedges the aggregated
risks}\textcolor{black}{{} generated by $N_{2}$ and $N_{3}$.} The
\textcolor{black}{modified} compound Poisson processes are 
\begin{align*}
\ov{Q}_{i}(t) & \coloneqq\mE[Q_{i}(t)|\cF_{t}]\\
 & =\mE[y_{i,1}N_{1}(t)+y_{i,2}N_{2}(t)+y_{i,3}N_{3}(t)|\cF_{t}]\\
 & =y_{i,1}N_{1}(t)+y_{i,2}\mE[N_{2}(t)|\cF_{t}]+y_{i,3}\mE[N_{3}(t)|\cF_{t}]\\
 & =y_{i,1}N_{1}(t)+y_{i,2}\frac{\l_{2}}{\l_{2}+\l_{3}}(N_{2}(t)+N_{3}(t))+y_{i,3}\frac{\l_{3}}{\l_{2}+\l_{3}}(N_{2}(t)+N_{3}(t))\\
 & =y_{i,1}N_{1}(t)+\left(y_{i,2}\frac{\l_{2}}{\l_{2}+\l_{3}}+y_{i,3}\frac{\l_{3}}{\l_{2}+\l_{3}}\right)(N_{2}(t)+N_{3}(t))\\
 & =y_{i,1}N_{1}(t)+\left(y_{i,2}\frac{p(y_{2})}{p(y_{2})+p(y_{3})}+y_{i,3}\frac{p(y_{3})}{p(y_{2})+p(y_{3})}\right)\ov{N}_{2}(t)\\
 & =y_{i,1}N_{1}(t)+\ov{y}_{i,2}\ov{N}_{2}(t).
\end{align*}
\textcolor{black}{Here, the n}ew compound Poisson processes \textcolor{black}{have
}two-valued jump sizes $\ov{Y}_{ij}$, i.e. $y_{i,1}$ and $\ov{y}_{i,2}$.
However, $\ov{y}_{i,2}$ is a convex combination of $y_{i,2}$ and
$y_{i,3}$, which \textcolor{black}{corresponds to the average randomness
generated by }the two jump processes.

The market price of risk equations are 
\begin{align*}
\a_{1}-r & =\s_{1}\th+(\l_{1}-\wt{\l}_{1})y_{1,1}+(\g-\ov{\g})\ov{y}_{1,2},\\
\a_{2}-r & =\s_{2}\th+(\l_{1}-\wt{\l}_{1})y_{2,1}+(\g-\ov{\g})\ov{y}_{2,2},\\
\a_{3}-r & =\s_{3}\th+(\l_{1}-\wt{\l}_{1})y_{3,1}+(\g-\ov{\g})\ov{y}_{3,2},
\end{align*}
where $\g\coloneqq\l_{2}+\l_{3}$ is the intensity of $\ov{N}_{2}\coloneqq N_{2}+N_{3}$.
We have three equations with three unknowns $\th,\wt{\l}_{1},\ov{\g}$.
If there exists a unique solution $(\th^{*},\wt{\l}_{1}^{*},\ov{\g}^{*})$
to the market price of risk equations, then the fictitious market
is arbitrage-free and complete.

\bigskip{}

{ \centering \textbf{The Uplifted EMM } \\
 } \bigskip{}

To uplift the EMM \textcolor{black}{from} the fictitious market to the
original market, \textcolor{black}{we equate} the two systems of market
price of risk equations (the original one and the reduced) and obtain
\begin{align*}
(\l_{2}-\wt{\l}_{2})y_{1,2}+(\l_{3}-\wt{\l}_{3})y_{1,3} & =(\g-\ov{\g})\ov{y}_{1,2},\\
(\l_{2}-\wt{\l}_{2})y_{2,2}+(\l_{3}-\wt{\l}_{3})y_{2,3} & =(\g-\ov{\g})\ov{y}_{2,2},\\
(\l_{2}-\wt{\l}_{2})y_{3,2}+(\l_{3}-\wt{\l}_{3})y_{3,3} & =(\g-\ov{\g})\ov{y}_{3,2}.
\end{align*}
\textcolor{black}{Using} $\g\ov{y}_{i,2}=(\l_{2}+\l_{3})\ov{y}_{i,2}=\l_{2}y_{i,2}+\l_{3}y_{i,3}$,
this yields 
\begin{align*}
\ov{\g}\ov{y}_{1,2} & =\wt{\l}_{2}y_{1,2}+\wt{\l}_{3}y_{1,3},\\
\ov{\g}\ov{y}_{2,2} & =\wt{\l}_{2}y_{2,2}+\wt{\l}_{3}y_{2,3},\\
\ov{\g}\ov{y}_{3,2} & =\wt{\l}_{2}y_{3,2}+\wt{\l}_{3}y_{3,3},
\end{align*}
and since $\ov{y}_{i,2}=y_{i,2}\frac{\l_{2}}{\l_{2}+\l_{3}}+y_{i,3}\frac{\l_{3}}{\l_{2}+\l_{3}}=y_{i,2}\frac{p(y_{2})}{p(y_{2})+p(y_{3})}+y_{i,3}\frac{p(y_{3})}{p(y_{2})+p(y_{3})}=y_{i,2}\d+y_{i,3}(1-\d)$,
\textcolor{black}{letting} $\d\coloneqq\frac{p(y_{2})}{p(y_{2})+p(y_{3})}=\frac{\l_{2}}{\l_{2}+\l_{3}}$,
the required uplift is given by 
\begin{align*}
\wt{\l}_{2}^{*} & =\ov{\g}^{*}\d,\\
\wt{\l}_{3}^{*} & =\ov{\g}^{*}(1-\d).
\end{align*}
We have thus proved an important special case of the desired result. 
\begin{thm}
\textcolor{black}{(A Unique Uplifted EMM) }Assume that in the fictitious
market obtained by partial neglect there is a unique EMM defined by
the vector $(\th^{*},\wt{\l}_{1}^{*},\ov{\g}^{*})$. Then, there is
a unique uplift of the unique EMM from\textcolor{black}{{} the fictitious
market to the original market p}rovided by the vector $(\th^{*},\wt{\l}_{1}^{*},\ov{\g}^{*}\d,\ov{\g}^{*}(1-\d))$,
where $\d\coloneqq\frac{\l_{2}}{\l_{2}+\l_{3}}$. 
\end{thm}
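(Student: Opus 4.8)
The plan is to pin down an uplift through its Girsanov parameters and then solve the resulting finite linear system. Recall that the EMMs of the original market are exactly the measures obtained by the Girsanov change of \S3, parametrized by a quadruple $(\th,\wt{\l}_1,\wt{\l}_2,\wt{\l}_3)$ with each $\wt{\l}_m>0$, under which $\wt W(t)=W(t)-\th t$ is a Brownian motion, each $N_m$ is Poisson with intensity $\wt{\l}_m$, and the three original market price of risk equations hold. A probability measure $\mQ^*$ is an uplift of $\wt{\mQ}$ precisely when it is such an EMM and it is consistent with $\wt{\mQ}$ on the degrees of freedom common to both markets, equivalently (as in the theorem $\mQ^*|_{\cF_T}=\wt{\mQ}$ of the previous section) when $\mQ^*|_{\cF_T}=\wt{\mQ}$. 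Since $\cF_T$ is generated by $W$, $N_1$ and $\ov N_2\coloneqq N_2+N_3$, and since $N_2,N_3$ remain independent Poisson under $\mQ^*$ so that $\ov N_2$ is Poisson with intensity $\wt{\l}_2+\wt{\l}_3$, matching the two laws on $\cF_T$ is equivalent to the three scalar conditions $\th=\th^*$, $\wt{\l}_1=\wt{\l}_1^*$ and $\wt{\l}_2+\wt{\l}_3=\ov\g^*$. This identification — that the \emph{sum} $\wt{\l}_2+\wt{\l}_3$, not the individual intensities, is the quantity shared with the fictitious market — is the conceptual heart of the argument.

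Next I would substitute $\th=\th^*$ and $\wt{\l}_1=\wt{\l}_1^*$ into the three original market price of risk equations and subtract from them the three fictitious market price of risk equations (which already determine $\th^*,\wt{\l}_1^*,\ov\g^*$). The $\s_i\th^*$ and $(\l_1-\wt{\l}_1^*)y_{i,1}$ terms cancel, and using $\g\ov y_{i,2}=(\l_2+\l_3)\ov y_{i,2}=\l_2 y_{i,2}+\l_3 y_{i,3}$ the differences collapse to $\wt{\l}_2 y_{i,2}+\wt{\l}_3 y_{i,3}=\ov\g^*\,\ov y_{i,2}$ for $i=1,2,3$. Plugging in $\ov y_{i,2}=\d y_{i,2}+(1-\d)y_{i,3}$ with $\d=\l_2/(\l_2+\l_3)$ shows that $\wt{\l}_2=\ov\g^*\d$, $\wt{\l}_3=\ov\g^*(1-\d)$ satisfies each of these three equations together with $\wt{\l}_2+\wt{\l}_3=\ov\g^*$. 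Since $\ov\g^*>0$ (it is the $\ov N_2$-intensity under a measure equivalent to $\mP$) and $\d\in(0,1)$, these two intensities are strictly positive, so the associated Radon--Nikodym density defines a measure genuinely equivalent to $\mP$; hence $(\th^*,\wt{\l}_1^*,\ov\g^*\d,\ov\g^*(1-\d))$ is an EMM of the original market, and it restricts to $\wt{\mQ}$ by construction. This establishes existence.

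For uniqueness, couple the constraint $\wt{\l}_2+\wt{\l}_3=\ov\g^*$ with any single equation $\wt{\l}_2 y_{i,2}+\wt{\l}_3 y_{i,3}=\ov\g^*\ov y_{i,2}$ for an index $i$ with $y_{i,2}\neq y_{i,3}$: the $2\times2$ system in $(\wt{\l}_2,\wt{\l}_3)$ has determinant $y_{i,3}-y_{i,2}\neq0$, so its solution is forced, and must therefore be the pair exhibited above. The one place this argument requires care — and the main obstacle — is the degenerate symmetric case $y_{i,2}=y_{i,3}$ for \emph{every} stock $i$: then $N_2$ and $N_3$ enter all prices identically, the original and fictitious markets coincide, only $\wt{\l}_2+\wt{\l}_3$ is determined, and uniqueness of the split has to be recovered from the consistency principle of \S3 rather than from the market price of risk equations. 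Under the implicit non-degeneracy assumption in force here (some $y_{i,2}\neq y_{i,3}$), the two preceding paragraphs together yield that the uplift is unique and equals $(\th^*,\wt{\l}_1^*,\ov\g^*\d,\ov\g^*(1-\d))$, which is the assertion of the theorem.
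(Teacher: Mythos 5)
Your proof is correct and follows the same algebraic route as the paper: substitute $\th=\th^{*}$, $\wt{\l}_{1}=\wt{\l}_{1}^{*}$, equate the two market-price-of-risk systems, use $\g\ov{y}_{i,2}=\l_{2}y_{i,2}+\l_{3}y_{i,3}$ to collapse the differences to $\wt{\l}_{2}y_{i,2}+\wt{\l}_{3}y_{i,3}=\ov{\g}^{*}\ov{y}_{i,2}$, and then plug in $\ov{y}_{i,2}=\d y_{i,2}+(1-\d)y_{i,3}$. Where you go beyond the paper is in two useful places. First, you state explicitly that ``uplift'' means an original-market EMM whose $\cF_{T}$-restriction equals $\wt{\mQ}$, and you extract from that the extra scalar constraint $\wt{\l}_{2}+\wt{\l}_{3}=\ov{\g}^{*}$ (the law of $\ov{N}_{2}$ must match). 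This is indeed the operative notion in the paper --- it is stated as a separate theorem in the complete-neglect section --- but it is not re-invoked here, and without it the system consists only of the three $y$-equations. Second, you actually prove uniqueness by exhibiting a nonsingular $2\times2$ subsystem (the sum constraint together with one $y$-equation having $y_{i,2}\neq y_{i,3}$), whereas the paper simply asserts ``the required uplift is given by.'' Your remark that uniqueness genuinely fails when $y_{i,2}=y_{i,3}$ for every $i$ is accurate and is an implicit non-degeneracy hypothesis the paper does not mention: in that case $N_{2}$ and $N_{3}$ enter all prices only through their sum, only $\wt{\l}_{2}+\wt{\l}_{3}$ is pinned down, and one must fall back on the sequential consistency argument of the complete-neglect section to split the intensity. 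In short: same core computation, more careful bookkeeping of what determines uniqueness.
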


\subsubsection{Multinomial Jumps to a Multinomial Reduction}

{ \centering \textbf{The Original Market} \\
 } \bigskip{}

\textcolor{black}{In the original market, we} assume that there are
$n$ stocks \textcolor{black}{trading and there are} $D$ Brownian
motions and $M$ Poisson processes, where $D<n<M$.\textcolor{cyan}{{}
}\textcolor{black}{The random jumps follow compound}\textcolor{cyan}{{}
}Poisson processes, 
\begin{align*}
Q_{i}(t)=y_{i,1}N_{1}(t)+y_{i,2}N_{2}(t)+...+y_{i,M}N_{M}(t),\quad i=1,...,n,
\end{align*}
where $y_{i,m}>-1$ for $i=1,,,,,n$, $m=1,...,M$, \textcolor{black}{with} $\b_{i}=\frac{1}{\l}(\l_{1}y_{i,1}+\l_{2}y_{i,2}+...+\l_{M}y_{i,M}),\quad i=1,...,M,$ and the market price of risk equations are 
\begin{align*}
\a_{1}-r & =\s_{11}\th_{1}+...+\s_{1D}\th_{D}+(\l_{1}-\wt{\l}_{1})y_{1,1}+...+(\l_{M}-\wt{\l}_{M})y_{1,M},\\
...\\
\a_{n}-r & =\s_{n1}\th_{1}+...+\s_{nD}\th_{D}+(\l_{1}-\wt{\l}_{1})y_{n,1}+...+(\l_{M}-\wt{\l}_{M})y_{n,M}.
\end{align*}
\textcolor{black}{This market is incomplete, because }we have $n$
equations with $D+M$ unknowns $\th_{1},...,\th_{D},\wt{\l}_{1},...,\wt{\l}_{M}$.

\bigskip{}

{ \centering \textbf{The Fictitious Market} \\
 } \bigskip{}

\textcolor{black}{To construct a complete fictitious market,} it is
necessary to reduce the number of jump sizes from $M$ to $n-D$.
Without loss of generality, we reduce \textcolor{black}{to the filtration
$\mF$ generated }by the \textcolor{black}{jump processes} $N_{1},...,N_{n-D-1}$
and the \textcolor{black}{``aggregate'' }$N_{n-D}+...+N_{M}$.

The reduced compound Poisson processes are 
\begin{align*}
\ov{Q}_{i}(t) & \coloneqq\mE[Q_{i}(t)|\cF_{t}]\\
 & =\mE[y_{i,1}N_{1}(t)+...+y_{i,M}N_{M}(t)|\cF_{t}]\\
 & =y_{i,1}N_{1}(t)+...+y_{i,n-D-1}N_{n-D-1}(t)+y_{i,n-D}\mE[N_{n-D}(t)|\cF_{t}]+...+y_{i,M}\mE[N_{M}(t)|\cF_{t}]\\
 & =y_{i,1}N_{1}(t)+...+y_{i,n-D-1}N_{n-D-1}(t)+y_{i,n-D}\mE[N_{n-D+1}(t)|\cF_{t}]+...+y_{i,M}\mE[N_{M}(t)|\cF_{t}]\\
 & =y_{i,1}N_{1}(t)+...+y_{i,n-D-1}N_{n-D-1}(t)+(y_{i,n-D}\d_{n-D}+...+y_{i,M}\d_{M})\ov{N}_{n-D}(t)\\
 & =y_{i,1}N_{1}(t)+...+y_{i,n-D-1}N_{n-D-1}(t)+\ov{y}_{i,n-D}\ov{N}_{n-D}(t),
\end{align*}
where $\d_{i}=\frac{\l_{i}}{\sum_{j=n-D}^{M}\l_{j}}=\frac{p(y_{i})}{\sum_{j=n-D}^{M}p(y_{j})}$,
$\ov{N}_{n-D}(t)=N_{n-D}(t)+...+N_{M}(t)$ and $\ov{y}_{i,n-D}=y_{i,n-D}\d_{n-D}+...+y_{i,M}\d_{M}.$

Again, the\textcolor{black}{{} ``aggregate''} compound Poisson process
corresponds \textcolor{black}{to a} jump size $\ov{y}_{i,n-D}$ that
is a convex combination of the previous \textcolor{black}{values. The}
market price of risk equations \textcolor{black}{in the fictitious
market ar}e 
\begin{align*}
\a_{1}-r & =\s_{11}\th_{1}+...+\s_{1D}\th_{D}+(\l_{1}-\wt{\l}_{1})y_{1,1}+...+(\l_{n-D-1}-\wt{\l}_{n-D-1})y_{1,n-D-1}+(\g-\ov{\g})\ov{y}_{1,n-D},\\
...\\
\a_{n}-r & =\s_{n1}\th_{1}+...+\s_{nD}\th_{D}+(\l_{1}-\wt{\l}_{1})y_{3,1}+...+(\l_{n-D-1}-\wt{\l}_{n-D-1})y_{n,n-D-1}+(\g-\ov{\g})\ov{y}_{n,n-D},
\end{align*}
where $\g\coloneqq\l_{n-D}+...+\l_{M}$ is the intensity of $\ov{N}_{n-D}$
and $\ov{\g}\coloneqq\wt{\l}_{n-D}+...+\wt{\l}_{M}$. \textcolor{black}{This
yields} $n$ equations with $n$ unknowns $\th_{1},...,\th_{D},\l_{1},...,\l_{n-D-1},\ov{\g}$.
If there exists a unique solution $(\th_{1}^{*},...,\th_{D}^{*},\l_{1}^{*},...,\l_{n-D-1}^{*},\ov{\g}^{*})$
to the market price of risk equations, then the fictitious market
is arbitrage-free and complete.

\bigskip{}

{ \centering \textbf{The Uplifted EMM } \\
 } \bigskip{}

\textcolor{black}{We want to uplift the EMM in the fictitious market
to an EMM in the original market.} Equating the two systems of market
price of risk equations (the original \textcolor{black}{and the fictitious}),
we obtain 
\begin{align*}
(\l_{n-D}-\wt{\l}_{n-D})y_{1,n-D}+...+(\l_{M}-\wt{\l}_{M})y_{1,M} & =(\g-\ov{\g})\ov{y}_{1,n-D},\\
 & ...\\
(\l_{n-D}-\wt{\l}_{n-D})y_{n,n-D}+...+(\l_{M}-\wt{\l}_{M})y_{n,M} & =(\g-\ov{\g})\ov{y}_{n,n-D}.
\end{align*}
Using $\g\ov{y}_{i,n-D}=(\l_{n-D}+...+\l_{M})\ov{y}_{i,n-D}=\l_{n-D}y_{i,n-D}+...+\l_{M}y_{i,M}$\textcolor{black}{\ yields}
\begin{align*}
\ov{y}_{1,n-D} & =\frac{\wt{\l}_{n-D}}{\ov{\g}}y_{1,n-D}+...+\frac{\wt{\l}_{M}}{\ov{\g}}y_{1,M},\\
 & ...\\
\ov{y}_{M,n-D} & =\frac{\wt{\l}_{n-D}}{\ov{\g}}y_{n,n-D}+...+\frac{\wt{\l}_{M}}{\ov{\g}}y_{n,M},
\end{align*}
and we immediately recognize \textcolor{black}{that there are} (infinitely)
many convex combinations \textcolor{black}{that produce the} given weighted
\textcolor{black}{average, and there }\textcolor{black}{is no unique
uplift. }\textcolor{black}{We employ the argument used in the discrete
time }trinomial to binomial case\textcolor{black}{{} (see Grigorian and
Jarrow 2024 \cite{key-12,key-13,key-14})}. \textcolor{black}{To obtain
uniqueness, we require that our} uplift be consistent with all the
ordered filtrations (corresponding to consecutive outcomes, i.e. jump
sizes). 
\begin{thm}
\textcolor{black}{(A Unique Consistent Uplifted EMM) }

There is a unique uplift of $(\th_{1}^{*},...,\th_{D}^{*},\l_{1}^{*},...,\l_{n-D-1}^{*},\ov{\g}^{*})$
to the original market that is consistent with all ordered f\textcolor{black}{iltration}
reductions. The unique\textcolor{black}{{} consistent uplifted }\textcolor{black}{EMM
}is given by $(\th_{1}^{*},...,\th_{D}^{*},\l_{1}^{*},...,\l_{n-D-1}^{*},\ov{\g}^{*}\d_{n-D},...,\ov{\g}^{*}\d_{M})$,
where $\d_{i}=\frac{\l_{i}}{\sum_{j=n-D}^{M}\l_{j}}=\frac{p(y_{i})}{\sum_{j=n-D}^{M}p(y_{j})}$
for $i=n-D,...,M$. 
\end{thm}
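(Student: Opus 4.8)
The plan is to follow the two-step pattern of the preceding subsections: first verify that the stated intensity vector is a legitimate uplift of $\wt{\mQ}$, and then show that consistency with the ordered reductions rules out every competing candidate. For the first step I would reuse the computation just before the theorem: equating the original and fictitious market price of risk equations shows that positive numbers $\wt{\l}_{n-D},\dots,\wt{\l}_M$ extend $(\th_1^*,\dots,\th_D^*,\l_1^*,\dots,\l_{n-D-1}^*,\ov{\g}^*)$ to an EMM of the original market exactly when, writing $q_j\coloneqq\wt{\l}_j/\ov{\g}^*$, one has $q_j>0$, $\sum_{j=n-D}^M q_j=1$, and $\sum_{j=n-D}^M q_j y_{i,j}=\ov{y}_{i,n-D}$ for all $i=1,\dots,n$. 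Since $\ov{y}_{i,n-D}=\sum_{j=n-D}^M\d_j y_{i,j}$ by the very definition of $\ov{y}_{i,n-D}$ and $\ov{\g}^*,\d_j>0$, the vector $\wt{\l}_j^*\coloneqq\ov{\g}^*\d_j$ satisfies all three conditions, hence is a valid uplift; its tail intensities also sum to $\ov{\g}^*$, so it is in particular consistent with the reduction $\mF$ itself.

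For uniqueness I would adapt the sequential procedure of the complete-neglect section. Fix the chain of ordered reductions $\mF=\mF_0\subseteq\mF_1\subseteq\cdots\subseteq\mF_{M-(n-D)}=\mG$, in which $\mF_\ell$ keeps $N_1,\dots,N_{n-D+\ell-1}$ hedged individually and aggregates the running tail $N_{n-D+\ell}+\cdots+N_M$, so that $\mF_\ell$ arises from $\mF_{\ell-1}$ by splitting $N_{n-D+\ell-1}$ off the running aggregate. If a candidate uplift is consistent with both $\mF_{\ell-1}$ and $\mF_\ell$, their market price of risk equations must hold with the same Brownian parameters and the same intensities, so subtracting them leaves, for each stock $i$, a single scalar equation linear in the two new unknowns at that step --- the intensity $\wt{\l}_{n-D+\ell-1}$ of the just-separated process and the intensity of the new, smaller aggregate. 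Rewriting the incoming-aggregate term through the identity that an aggregate's mean jump size equals the physical-intensity-weighted average of the jump sizes it pools, these $n$ equations are solved --- uniquely, under the nondegeneracy of the relevant jump-size vectors that is in the spirit of the completeness assumptions used throughout --- by assigning the separated process the fraction $\l_{n-D+\ell-1}/\sum_{j=n-D+\ell-1}^M\l_j$ of the incoming aggregate's intensity and the new aggregate the complementary fraction. Telescoping over $\ell$, the ratio of each running aggregate's assigned intensity to its physical intensity stays constant along the chain, equal to $\ov{\g}^*/\sum_{k=n-D}^M\l_k$; hence $\wt{\l}_j^*=\ov{\g}^*\d_j$ for $j=n-D,\dots,M$, and because this value is symmetric in $j$ it is simultaneously consistent with every ordered reduction, not merely this chain.

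The step I expect to be the main obstacle is the compatibility hidden in that last clause: one must argue, exactly as with the three reductions $A$, $B$, $C$ in the complete-neglect analysis and as in the discrete-time trinomial-to-binomial argument of Grigorian and Jarrow 2023 \cite{key-12,key-13,key-14}, that imposing consistency with the \emph{whole} family of ordered reductions produces a non-empty solution set --- so that, by the chain argument above, it collapses to the single point $\wt{\l}_j^*=\ov{\g}^*\d_j$ --- rather than an over-determined, contradictory one. The remaining ingredients, namely the two-variable solvability at each step of the chain (which rests on completeness and nondegeneracy of the fictitious markets) and the telescoping algebra, are then routine.
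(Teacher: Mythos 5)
Your proof takes essentially the same route as the paper's: both argue sequentially over a nested family of ordered refinements of the single aggregate $\{n-D,\dots,M\}$, peeling off structure one step at a time, and both verify at the outset that the proportional assignment $\wt{\l}_{j}^{*}=\ov{\g}^{*}\d_{j}$ is a legitimate uplift because $\ov{y}_{i,n-D}=\sum_{j=n-D}^{M}\d_{j}y_{i,j}$ by definition. The difference is purely organizational: you build the chain $\mF_{0}\subseteq\mF_{1}\subseteq\cdots\subseteq\mG$ by detaching one Poisson process from the running aggregate at each stage, whereas the paper runs through the bipartitions $\{n-D,\dots,n-D+k\}\cup\{n-D+k+1,\dots,M\}$ with a growing left block, extracting at each stage a ratio constraint from the preserved conditional mean over the left block and a complementary constraint over the right block. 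Both sequences telescope to the same answer, and both rely on the same informal move at the same place --- that imposing consistency across the \emph{whole} family of ordered refinements yields a non-empty, and hence singleton, solution set --- a gap you explicitly flag and the paper silently passes over; neither version gives a fully rigorous, self-contained definition of consistency for these intermediate (incomplete) refinements, so you are not missing anything the paper actually supplies. One small warning: the paper's printed conclusion at its first step reads ``$\wt{\l}_{n-D}=\l_{n-D}$ and $\wt{\l}_{n-D+1}=\l_{n-D+1}$,'' which cannot be literally correct (it contradicts the theorem statement unless $\ov{\g}^{*}=\sum_{j\geq n-D}\l_{j}$); the equation it displays actually pins the \emph{ratio} $\wt{\l}_{n-D}:\wt{\l}_{n-D+1}=\l_{n-D}:\l_{n-D+1}$, which combined with $\sum_{j}\wt{\l}_{j}=\ov{\g}^{*}$ is what produces $\wt{\l}_{j}^{*}=\ov{\g}^{*}\d_{j}$. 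Your version, phrased in terms of the normalized weights $q_{j}=\wt{\l}_{j}/\ov{\g}^{*}$, avoids that slip.
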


\begin{proof}
Using a partition $\{y_{i,n-D},y_{i,n-D+1}\}\cup\{y_{i,n-D+2},...,y_{i,M}\}$,
we get for each $i=1,...,n,$ 
\begin{align*}
\frac{\wt{\l}_{n-D}}{\wt{\l}_{n-D}+\wt{\l}_{n-D+1}}y_{i,n-D}+\frac{\wt{\l}_{n-D+1}}{\wt{\l}_{n-D}+\wt{\l}_{n-D+1}}y_{i,n-D+1} & =\frac{\l_{n-D}}{\l_{n-D}+\l_{n-D+1}}y_{i,n-D}+\frac{\l_{n-D+1}}{\l_{n-D}+\l_{n-D+1}}y_{i,n-D+1},\\
\frac{\wt{\l}_{n-D+2}-\l_{n-D+2}}{\sum_{j=n-D+2}^{M}\wt{\l}_{j}}y_{i,n-D+2} & +...+\frac{\wt{\l}_{M}-\l_{M}}{\sum_{j=n-D+2}^{M}\wt{\l}_{j}}y_{i,M}=0.
\end{align*}
This gives us $\wt{\l}_{n-D}=\l_{n-D}$ and $\wt{\l}_{n-D+1}=\l_{n-D+1}$.
Given these values, consider the partition $\{y_{i,n-D},y_{i,n-D+1},y_{i,n-D+2}\}\cup\{y_{i,n-D+3},...,y_{i,M}\}$,
which produces the equations 
\begin{align*}
\frac{\wt{\l}_{n-D+2}}{\l_{n-D}+\l_{n-D+1}+\l_{n-D+2}}y_{i,n-D+2} & =\frac{\l_{n-D+2}}{\l_{n-D}+\l_{n-D+1}+\l_{n-D+2}}y_{i,n-D+2},\\
\frac{\wt{\l}_{n-D+3}-\l_{n-D+3}}{\sum_{j=n-D+3}^{M}\wt{\l}_{j}}y_{i,n-D+3} & +...+\frac{\wt{\l}_{M}-\l_{M}}{\sum_{j=n-D+3}^{M}\wt{\l}_{j}}y_{i,M}=0,
\end{align*}
from which we get $\wt{\l}_{n-D+2}=\l_{n-D+2}$. The remaining values
are recovered similarly. 
\end{proof}

\subsubsection{A General \textquotedblleft Batching\textquotedblright{} \textcolor{black}{Procedure }}

From the analysis above it is clear that \textcolor{black}{the }choice
of aggregation \textcolor{black}{can be generalized to arbitrary}
batches of Poisson processes that the trader wishes\textcolor{black}{{}
to exactly hedge on average}. \textcolor{black}{Indeed, }in a market
with $n$ stocks, $D$ Brownian motions, and $M$ Poisson processes,
where $D<n<M$, we \textcolor{black}{need} to reduce $M$ to $n-D$.
For that purpose, partition $\{1,...,M\}$ into $n-D$ subsets $\{1...,K_{1}\}$,$\{K_{1}+1,...,K_{1}+K_{2}\}$,...,$\{\sum_{j=1}^{n-D-1}K_{j}+1,...,M\}$
and project the compound Poisson processes onto the filtrations generated
by the $n-D$ aggregated processes $N_{1}+...+N_{K_{1}},...,N_{\sum_{j=1}^{n-D-1}K_{j}+1}+...+N_{M}$.
If this produces an arbitrage-free and complete market, we obtain
a unique EMM $\wt{\mQ}$ \textcolor{black}{in the fictitious market}
and uplift it\textcolor{black}{{} to the unique consistent EMM in the
original market} via the procedure outlined above. Because the notation
gets rather complicated and no additional insights emerge, we omit
the details.

\subsection{Time-Varying Coefficients and Inhomogeneous Poisson Processes}

\textcolor{black}{We now consider the case where the Poisson processes
have deterministic intensities}\textcolor{black}{{} where the}\textcolor{black}{{}
mean rate of return $\a(t)$, the money market rate $r(t)$}\textcolor{black}{,
and the }\textcolor{black}{volatility $\s(t)$ are also deterministic,
with $\s(t)$ being square integrable. }

\subsubsection{The Original Market}

\textcolor{black}{In the original market, we} assume there are $n$
stocks trading, one Brownian motion, and $M$ Poisson processes with
deterministic intensity functions $\l_{1}(t)$,...,$\l_{M}(t)$, where
$n<M$. The restriction to one Brownian motion is \textcolor{black}{for
expositional clarity} and without loss of generality. The filtration
$\mG$ is the natural one generated by the Brownian motion and Poisson
processes. \textcolor{black}{The original market's stocks evolve a}s
\begin{align*}
dS_{i}(t)=\left(\a_{i}(t)+\int_{B}y\l_{t}(dy)\right)S_{i}(t)dt+S_{i}(t)\s_{i}(t)dW(t)+S_{i}(t-)\int_{B}y\wt{\Psi}_{i}(dy,dt),\quad i=1,...,n,
\end{align*}
where $B\coloneqq\{y_{1},...,y_{M}\}\subset(-1,\infty)$ is the mark
space, $\l_{t}(\cdot)$ is the measure on the mark space defined via
$\l_{t}(dy)=\l_{t}(B)\G_{t}(dy)$, with $\G_{t}(dy)$ being the distribution
of the jump size given an event at time $t$, and $\wt{\Psi}_{i}(dy,dt)=\Psi_{i}(dy,dt)-\l_{t}(dy)dt,$ where $\Psi_{i}(dy,dt)$ is the random measure that governs the jumps.
As before, the integrals are in fact sums.

The market price of risk equations are 
\begin{align*}
\a_{1}(t)-r(t) & =\s_{1}(t)\th(t)+(\l_{1}(t)-\wt{\l}_{1}(t))y_{1,1}+...+(\l_{M}(t)-\wt{\l}_{M}(t))y_{1,M},\\
...\\
\a_{n}(t)-r(t) & =\s_{n}(t)\th(t)+(\l_{1}(t)-\wt{\l}_{1}(t))y_{n,1}+...+(\l_{M}(t)-\wt{\l}_{M}(t))y_{n,M},
\end{align*}
and the market is incomplete.

\subsubsection{The Fictitious Market}

\textcolor{black}{To construct the complete fictitious market, we
reduce to the filtration generated by the Brownian motion, the Poisson
processes $N_{1},...,N_{n-2}$ and} the \textcolor{black}{``aggregate''}
$N_{n-1}+...+N_{M}$. The modified jump processes are given by 
\begin{align*}
d\ov{Q}_{i}(t) & =y_{i,1}dN_{1}(t)+...+y_{i,n-2}dN_{n-2}(t)+(y_{i,n-1}\d_{n-1}(t)+...+y_{i,M}\d_{M}(t))d\ov{N}_{n-D}(t)\\
 & =y_{i,1}dN_{1}(t)+...+y_{i,n-2}dN_{n-2}(t)+\ov{y}_{i,n-1}(t)d\ov{N}_{n-1}(t),
\end{align*}
where $\d_{k}(t)=\frac{\l_{k}(t)}{\sum_{j=n-1}^{M}\l_{j}(t)},k=n-1,...,M$,
$\ov{N}_{n-1}(t)=N_{n-1}(t)+...+N_{M}(t)$ and $\ov{y}_{i,n-1}(t)=y_{i,n-1}\d_{n-1}(t)+...+y_{i,M}\d_{M}(t).$

The jumps of the reduced market are controlled by the compensated random measure $\wt{\P}_{i}(dy,dt)=\P_{i}(dy,dt)-\wh{\l}_{t}(dy)dt,$ \textcolor{black}{with} $\int_{\wh{B}}y\P_{i}(dy,dt)=\sum_{T_{n}\leq t}\wt{Y}_{T_{n}}^{i}$\textcolor{black}{,
and} on $\{T_{n}=t\}$ the distribution of $\wt{Y}_{i,t}$ is given
by $\wt{\G}_{t}(dy)=\frac{\wh{\l}_{t}(dy)}{\wh{\l}_{t}(\wh{B}_{t})}$,
where on each set $\{T_{n}=t\}$ the intensity measure $\wh{\l}_{t}(\cdot)$
is distributed on $\wh{B}_{t}=\{y_{i,1},...,y_{i,n},\ov{y}_{i,n-1}(t)\}$
via the assignment $\{\l_{1},...,\l_{n-2},\l_{n-1}+...+\l_{M}\}$.

\textcolor{black}{The market price of risk equations are} 
\begin{align*}
\a_{1}(t)-r(t) & =\s_{1}(t)\th(t)+(\l_{1}(t)-\wt{\l}_{1}(t))y_{1,1}+...+(\g(t)-\ov{\g}(t))\ov{y}_{1,n-1}(t),\\
...\\
\a_{n}(t)-r(t) & =\s_{n}(t)\th(t)+(\l_{1}(t)-\wt{\l}_{1}(t))y_{n,1}+...+(\g(t)-\ov{\g}(t))\ov{y}_{n,n-1}(t).
\end{align*}
\textcolor{black}{There are} $n$ equations in $n$ unknowns $\th(t),\wt{\l}_{1}(t),...,\wt{\l}_{n-2}(t),\ov{\g}(t)$.
Assuming there exists a unique solution, we obtain an arbitrage-free
and complete market, and hence a unique EMM $\wt{\mQ}$ \textcolor{black}{in
the fictitious market }given by $\wt{\th}^{*}(t)$ and $\wt{\l}_{1}^{*}(t),...,\wt{\l}_{n-2}^{*}(t),\ov{\g}^{*}(t)$.

\textcolor{black}{Letting} $\ov{\l}^{*}(t)\coloneqq\wt{\l}_{1}^{*}(t)+...+\wt{\l}_{n-2}^{*}(t)+\ov{\g}^{*}(t)$,
this corresponds to the Radon-Nikodym process in \textcolor{black}{Girsanov's
theorem} given by 
\begin{align*}
dZ_{i}(t)=Z_{i}(t-)\int_{\wh{B}}(\varphi_{t}(y)-1)\left[\P_{i}(dy,dt)-\wh{\l}_{t}(dy)dt\right],\quad i=1,...,n,
\end{align*}
where $\varphi_{t}(y)=\frac{\ov{\l}_{t}^{*}(dy)}{\wh{\l}_{t}(dy)}$.
The solution is $Z_{i}(t)=\prod_{j=1}^{\wt{N}(t)}\varphi_{T_{j}}(\wt{Y}_{ij})e^{\int_{0}^{t}\int_{\wh{B}}(1-\varphi_{s}(y))\wh{\l}_{s}(dy)ds},$ where $\{T_{n}\}$ are the jump times and $\{\wt{Y}_{in}\}$ are the
marks of $\P_{i}$. It can also be written as 
\begin{align*}
Z_{i}(t)=e^{\int_{0}^{t}\int_{\wh{B}}\ln\varphi_{s}(y)\P_{i}(dy,ds)+\int_{0}^{t}\int_{\wh{B}}(1-\varphi_{s}(y))\wh{\l}_{s}(dy)ds}.
\end{align*}
Letting $Z(t)=\prod_{i=1}^{n}Z_{i}(t)$ and assuming that $\mE_{\mP}Z(T)=1$,
we know that under the measure defined by $Z$, $\P_{i}(dy,dt)$ has
the required intensity $\ov{\l}_{t}^{*}(dy)$.

\bigskip{}

{ \centering \textbf{The Uplifted EMM} \\
 } \bigskip{}

\textcolor{black}{Uplifting the EMM in the fictitious market to the
original market in the usual manner, we can prove the following theorem.} 
\begin{thm}
\textcolor{black}{(A Unique Consistent Uplifted EMM) }Assume the \textcolor{black}{fictitious}
market obtained by partial neglect is arbitrage-free and complete.
Then, there exists a unique consistent \textcolor{black}{uplifted
EMM} $\mQ^{*}$ \textcolor{black}{in the original market} \textcolor{black}{from
}the unique EMM $\wt{\mQ}$ in \textcolor{black}{the fictitious market}
given by $\wt{\th}^{*}(t)$, $\wt{\l}_{1}^{*}(t),...,\wt{\l}_{n-2}^{*}(t),$
and $\wt{\l}_{m}^{*}(t)=\ov{\g}^{*}(t)\d_{m}(t),m=n-1,...,M$, where
$\d_{m}(t)=\frac{\l_{m}(t)}{\sum_{j=n-1}^{M}\l_{j}(t)}$. 
\end{thm}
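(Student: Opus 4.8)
The plan is to follow the proof of the constant-coefficient multinomial-to-multinomial theorem of the preceding subsection essentially line for line, with the time variable $t$ carried along as a passive parameter: every intensity is a deterministic function of $t$ and the jump laws $\{\G_t\}$ stay finitely supported on $B=\{y_1,\dots,y_M\}$, so the identities below are identities of deterministic functions, verified pointwise in $t$. By hypothesis the fictitious market is arbitrage-free and complete, so the fictitious solution $\wt{\th}^*(t),\wt{\l}_1^*(t),\dots,\wt{\l}_{n-2}^*(t),\ov{\g}^*(t)$ exists, is unique, and satisfies $\ov{\g}^*(t)>0$, being the $\wt{\mQ}$-intensity of $\ov{N}_{n-1}$.

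First I would identify the candidate. Substituting the fictitious solution and equating the original with the fictitious market price of risk equations leaves, for each $i=1,\dots,n$,
\begin{align*}
\sum_{m=n-1}^{M}\bigl(\l_m(t)-\wt{\l}_m(t)\bigr)y_{i,m}=\bigl(\g(t)-\ov{\g}^*(t)\bigr)\ov{y}_{i,n-1}(t),
\end{align*}
with $\g(t)=\sum_{m=n-1}^{M}\l_m(t)$ and $\ov{y}_{i,n-1}(t)=\sum_{m=n-1}^{M}\d_m(t)y_{i,m}$. Since $\g(t)\ov{y}_{i,n-1}(t)=\sum_{m=n-1}^{M}\l_m(t)y_{i,m}$, this is equivalent to the homogeneous system
\begin{align*}
\sum_{m=n-1}^{M}\bigl(\wt{\l}_m(t)-\ov{\g}^*(t)\d_m(t)\bigr)y_{i,m}=0,\qquad i=1,\dots,n.
\end{align*}
Because $n<M$ this is underdetermined and has infinitely many solutions --- the non-uniqueness that the consistency condition is meant to break --- and $\wt{\l}_m^*(t)\coloneqq\ov{\g}^*(t)\d_m(t)$, $m=n-1,\dots,M$, is the distinguished one.

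Next I would verify that this candidate genuinely defines an EMM $\mQ^*$ on $\cG_T$. It solves the homogeneous system trivially; since that system is equivalent to the original and fictitious market price of risk equations coinciding, and the fictitious equations hold at the fictitious solution, the original equations hold at $(\wt{\th}^*,\wt{\l}_1^*,\dots,\wt{\l}_{n-2}^*,\ov{\g}^*\d_{n-1},\dots,\ov{\g}^*\d_M)$. Because $\ov{\g}^*(t)>0$ and $\d_m(t)>0$, each $\wt{\l}_m^*(t)$ is a strictly positive deterministic function, so $\varphi_t(y)=\wt{\l}^*_t(dy)/\l_t(dy)$ is predictable and nonnegative; the associated Radon-Nikodym process $Z(t)=\prod_{i=1}^{n}Z_i(t)$ from Girsanov's theorem for general jump-diffusions is then a well-defined $\mP$-local martingale, and $\mE_\mP Z(T)=1$ follows as in Lemma~\ref{exp mart}, the post-change intensities being deterministic and finitely supported, so $Z$ is a true martingale and $\mQ^*\sim\mP$. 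Under $\mQ^*$, Girsanov makes $\wt{W}(t)=W(t)-\int_0^t\wt{\th}^*(s)ds$ a Brownian motion and gives each $\Psi_i$ the intensity $\wt{\l}^*_t(dy)$, so the discounted $S_i$ are $\mQ^*$-martingales by the identity just checked. Finally, exactly as in the complete-neglect case ($\mQ^*|_{\cF_T}=\wt{\mQ}$), conditioning the factorized density yields $\mE_\mP[Z(T)\mid\cF_T]=\wt{Z}(T)$, so $\mQ^*$ restricts to $\wt{\mQ}$ on $\cF_T$ and is a genuine uplift.

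It remains to prove uniqueness among \emph{consistent} uplifts, which I would do by recycling the recursive partition argument of the constant-coefficient case at fixed $t$: requiring the $\mG$-uplift to be consistent with the ordered chain of batch reductions of the aggregated block $\{n-1,\dots,M\}$ --- first $\{n-1,n\}\cup\{n+1,\dots,M\}$, then $\{n-1,n,n+1\}\cup\{n+2,\dots,M\}$, and so on --- and applying the trinomial-to-binomial splitting rule established earlier inside the block forces the within-block proportions $\wt{\l}_m(t):\wt{\l}_{m'}(t)$ to equal $\d_m(t):\d_{m'}(t)$; combined with $\sum_{m=n-1}^{M}\wt{\l}_m(t)=\ov{\g}^*(t)$ this yields $\wt{\l}_m^*(t)=\ov{\g}^*(t)\d_m(t)$ for every $m=n-1,\dots,M$. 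I expect the genuine obstacle to be not any calculation --- the linear algebra is identical to the constant-coefficient proof and the martingale check is routine given positivity and Lemma~\ref{exp mart} --- but stating consistency cleanly, i.e. making precise that it is imposed pointwise in $t$ and simultaneously along every refinement of the aggregation, and checking that this recursion terminates at the fictitious (complete) market, where the EMM is unique.
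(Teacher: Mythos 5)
Your proposal matches the paper's intended argument: the paper offers no explicit proof for this theorem, saying only that one uplifts ``in the usual manner,'' which points back to the constant-coefficient multinomial-to-multinomial proof, and your plan of running that sequential-batch-refinement argument pointwise in $t$ is exactly that. You in fact state the recursion more carefully than the paper's own constant-coefficient proof (which loosely writes $\wt{\l}_{n-D}=\l_{n-D}$ where it should only yield the within-block proportionality $\wt{\l}_m:\wt{\l}_{m'}=\l_m:\l_{m'}$, with the absolute levels fixed by the constraint $\sum\wt{\l}_m=\ov{\g}^*$), and you add the EMM/restriction verification that the paper leaves implicit.
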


\textcolor{black}{Letting} $\wt{\l}^{*}(t)\coloneqq\wt{\l}_{1}^{*}(t)+...+\wt{\l}_{M}^{*}(t)$,
this corresponds to the Radon-Nikodym process in Girsanov's theorem
given by 
\begin{align*}
dZ_{i}(t)=Z_{i}(t-)\int_{B}(\varphi_{t}(y)-1)\left[\Psi_{i}(dy,dt)-\l_{t}(dy)dt\right],\quad i=1,...,n,
\end{align*}
where $\varphi_{t}(y)=\frac{\wt{\l}_{t}^{*}(dy)}{\l_{t}(dy)}$. The
solution\textcolor{black}{{} is} $Z_{i}(t)=\prod_{j=1}^{N(t)}\varphi_{T_{j}}(Y_{ij})e^{\int_{0}^{t}\int_{B}(1-\varphi_{s}(y))\l_{s}(dy)ds},$ where $\{T_{n}\}$ are the jump times and $\{Y_{in}\}$ are the marks
of $\Psi_{i}$. Again, it can also be written as 
\begin{align*}
Z_{i}(t)=e^{\int_{0}^{t}\int_{B}\ln\varphi_{s}(y)\Psi(dy,ds)+\int_{0}^{t}\int_{B}(1-\varphi_{s}(y))\l_{s}(dy)ds}.
\end{align*}

\begin{rem}
(\textcolor{black}{\emph{The Financial Meaning of Consistency)}}

\textcolor{black}{Relying again on the interpretation of $\wt{\l}_{j}$
as containing a}\textcolor{black}{{} jump risk premium, the equality}\textcolor{black}{{}
$\wt{\l}_{m}^{*}=\wt{\g}^{*}\d_{m},m=n-D,...,M,$ is equivalent to
saying that the ``averaged'' jump risks receive a reduced risk premium
according to their }\textcolor{black}{weights in the}\textcolor{black}{{}
subset of batched risks. Incidentally, these weights are the conditional
probabilities, as was seen }\textcolor{black}{in the }\textcolor{black}{discrete-time
and Brownian settings }\textcolor{black}{(see Grigorian and Jarrow 2024
\cite{key-12,key-13,key-14}). }
\end{rem}

\subsection{\textcolor{black}{Partial Neglect with }Continuously Distributed
Jump Sizes}

We now extend the \textcolor{black}{model} to the case of continuously
distributed jumps \textcolor{black}{where} the jump sizes $Y_{i}$
have a density $f(y)$ \textcolor{black}{and} the support of the distribution
of $Y_{i}$ is some subset $B$ of $(-1,\infty)$.\textcolor{black}{{}
We assume throughout the }\textcolor{black}{rest of the paper}\textcolor{black}{{}
that in the usual factorization $S_{t}=S_{t}^{\wh{B}}S_{t}^{\wh{B}^{c}}$,
the key integrability condition of lemma \ref{exp mart} holds for
$S_{t}^{\wh{B}^{c}}$, i.e. $\mE_{\mP}[S_{t}^{\wh{B}^{c}}]=1$ for
all $t\in[0,T]$}. Using the notation above, we have that $\b=\mE Y_{i}=\int_{B}yf(y)dy.$ In the simplest case \textcolor{black}{with the original market consisting
of}\textcolor{black}{{} just one}\textcolor{black}{{} stock, }one Brownian
motion, and one Poisson process, the choice of \textcolor{black}{an
EMM }corresponds to choosing $\th,\wt{\l}$ and a density $\wt{f}(y)$
such that the distribution measure is supported on $B$ and the market
price of risk equation for \textcolor{black}{the single stock} 
\begin{align*}
\a-r=\s\th+\b\l-\wt{\b}\wt{\l}
\end{align*}
is satisfied, where $\wt{\b}=\wt{\mE}Y_{i}=\int_{B}y\wt{f}(y)dy.$
\textcolor{black}{This} market is incomplete and \textcolor{black}{there
exists a continuum of EMMs.} \textcolor{black}{The source} of incompleteness
is the arbitrariness in the choice of \textcolor{black}{the density}
$\wt{f}$.

\textcolor{black}{Now, }\textcolor{black}{consider a}\textcolor{black}{{}
market with $n$ stocks trading, $D>n$ Brownian motions, and the
single Poisson process as given above.} This yields the following
set of equations for the market prices of risk: 
\begin{align*}
\a_{1}-r & =\s_{11}\th_{1}+...+\s_{1D}\th_{D}+\b\l-\wt{\b}\wt{\l},\\
 & ...\\
\a_{n}-r & =\s_{n1}\th_{1}+...+\s_{nD}\th_{D}+\b\l-\wt{\b}\wt{\l}.\\
\end{align*}
Applying the machinery of filtration reduction used in the preceding
sections, for each $i=1,...,n,$ \textcolor{black}{we partition $B$
(the support of the distribution of $Y_{i}$) } into $n-D$ subsets
$B_{1}^{i},...,B_{n-D}^{i}$ (different as sets but of equal positive
measure across $i$, i.e. $\l_{t}(B_{j}^{1})=\l_{t}(B_{j}^{2})=...=\l_{t}(B_{j}^{n})$
for all $j=1,...,n-D$). \textcolor{black}{Then, we obtain} partial
averages via 
\begin{align*}
y_{i,1} & =\frac{\int_{B_{1}^{i}}yf(y)dy}{\int_{B_{1}^{i}}f(y)dy},\\
 & ...\\
y_{i,n-D} & =\frac{\int_{B_{n-D}^{i}}yf(y)dy}{\int_{B_{n-D}^{i}}f(y)dy},
\end{align*}
and the corresponding probability assignments via\footnote{We dropped the index $i$ since the measures do not depend on it.}
\begin{align*}
p(y_{1}) & =\int_{B_{1}}f(y)dy,\\
 & ...\\
p(y_{n-D}) & =\int_{B_{n-D}}f(y)dy.
\end{align*}
\textcolor{black}{This gives a} decomposition of the single Poisson
process into $n-D$ thinned Poisson processes via $\l_{m}=\l p(y_{m})$,
$m=1,...,n-D$. \textcolor{black}{This decomposition yields}\textcolor{black}{{}
the market price} of risk equations 
\begin{align*}
\a_{1}-r & =\s_{11}\th_{1}+...+\s_{1D}\th_{D}+(\l_{1}-\wt{\l}_{1})y_{1,1}+...+(\l_{n-D}-\wt{\l}_{n-D})y_{1,n-D},\\
...\\
\a_{n}-r & =\s_{n1}\th_{1}+...+\s_{nD}\th_{D}+(\l_{1}-\wt{\l}_{1})y_{n,1}+...+(\l_{n-D}-\wt{\l}_{n-D})y_{n,n-D}.
\end{align*}
Assuming that this system has a unique solution, denoted by $\th_{1}^{*},...,\th_{D}^{*}$,
$\wt{\l}_{1}^{*},...,\wt{\l}_{n-D}^{*},$ we obtain \textcolor{black}{the
}\textcolor{black}{unique }\textcolor{black}{EMM in the fictitious market. }

\textcolor{black}{To uplift the unique EMM $\wt{\mQ}$ }\textcolor{black}{from
the }\textcolor{black}{fictitious market to the} original market,
we still need to specify\textcolor{black}{{} the density} $\wt{f}$. Note
that we have 
\begin{align*}
\wt{p}^{*}(y_{m}) & =\frac{\wt{\l}_{m}^{*}}{\wt{\l}^{*}},\quad m=1,...,n-D,\\
\wt{\b}_{i}=\sum_{m=1}^{n-D}y_{i,m}\wt{p}^{*}(y_{m}) & =\frac{1}{\wt{\l}^{*}}\sum_{m=1}^{n-D}\wt{\l}_{m}^{*}y_{i,m},\quad i=1,...,n.
\end{align*}
hence the desired $\wt{f}$ must be such that 
\begin{align*}
\wt{f}(y)dy & \sim f(y)dy,\\
\int_{B_{m}}\wt{f}(y)dy & =\wt{p}^{*}(y_{m}),\quad m=1,...,n-D,\\
\frac{\int_{B_{m}}y\wt{f}(y)dy}{\int_{B_{m}}\wt{f}(y)dy} & =y_{m},\quad m=1,...,n-D.
\end{align*}

\textcolor{black}{Although we have} narrowed the collection of candidates
functions for $\wt{f}$, the set is still infinite. The situation
\textcolor{black}{is similar }to the one encountered in the discrete-time
setting with increments having a general distribution on $\mR$. Arguing
in the same vein, the density defined via 
\begin{align*}
\wt{f}^{*}(y)\coloneqq\sum_{m=1}^{n-D}\frac{f(y)\1_{B_{m}}(y)}{\int_{B_{m}}f(y)dy}\wt{p}^{*}(y_{m})
\end{align*}
is the only uplift consistent with all \textcolor{black}{filtration
}reductions. Indeed, while there are many possible functions $\wt{f}$
which satisfy the partial average property above (i.e. agree with
$\wt{p}^{*}(y_{m})$ and $y_{m}$ on $B_{m}$), if we choose a different
partition of $B$, then these functions will be excluded from the
collection. If we insist that the procedure be consistent with all
\textcolor{black}{possible filtration }reductions (i.e. partitions
of $B$), then the above formula becomes the only viable candidate.
\textcolor{black}{This yields the following theorem. } 
\begin{thm}
\textcolor{black}{(A Unique Consistent Uplifted EMM)} Assume the \textcolor{black}{fictitious}
market is arbitrage-free and complete. Then, there exists a unique
consistent uplifted \textcolor{black}{EMM} $\mQ^{*}$ \textcolor{black}{from
the} unique EMM $\wt{\mQ}$ in the\textcolor{black}{{} fictitious market}
given by $\th_{1}^{*},...,\th_{D}^{*}$, $\wt{\l}_{1}^{*},...,\wt{\l}_{n-D}^{*},$
and 
\begin{align*}
\wt{f}^{*}(y)\coloneqq\sum_{m=1}^{n-D}\frac{f(y)\1_{B_{m}}(y)}{\int_{B_{m}}f(y)dy}\wt{p}^{*}(y_{m}).
\end{align*}
\end{thm}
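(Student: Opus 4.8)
The plan is to prove the statement in two stages: first, that the data $\th_{1}^{*},\dots,\th_{D}^{*}$, the aggregate intensity $\wt{\l}^{*}\coloneqq\sum_{m=1}^{n-D}\wt{\l}_{m}^{*}$, and the density $\wt{f}^{*}$ genuinely define an EMM $\mQ^{*}$ in the original market which restricts to $\wt{\mQ}$ on $\cF_{T}$ (existence); and second, that any uplift consistent with every admissible filtration reduction must coincide with it (uniqueness). For existence I would start with the elementary checks on $\wt{f}^{*}$. Nonnegativity is clear since $f\geq0$ and $\wt{p}^{*}(y_{m})=\wt{\l}_{m}^{*}/\wt{\l}^{*}\geq0$; integrating over $B=\bigsqcup_{m}B_{m}$ gives $\int_{B}\wt{f}^{*}\,dy=\sum_{m}\wt{p}^{*}(y_{m})=1$; and since $\wt{\mQ}\sim\mP$ forces $\wt{\l}_{m}^{*}>0$, on each cell $\wt{f}^{*}=(\wt{p}^{*}(y_{m})/p(y_{m}))\,f$ with a strictly positive constant, so $\wt{f}^{*}\,dy\sim f\,dy$. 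The two moment identities $\int_{B_{m}}\wt{f}^{*}\,dy=\wt{p}^{*}(y_{m})$ and $\int_{B_{m}}y\,\wt{f}^{*}\,dy=y_{m}\wt{p}^{*}(y_{m})$ then follow on substituting the definitions of $\wt{f}^{*}$ and of $y_{m}=\int_{B_{m}}yf\,dy/\int_{B_{m}}f\,dy$, so that $\wt{f}^{*}$ belongs to the family of admissible uplifts for the chosen partition.

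Next I would check that this choice satisfies the original market-price-of-risk equations. With $\wt{\b}\coloneqq\int_{B}y\,\wt{f}^{*}\,dy=\sum_{m}y_{m}\wt{p}^{*}(y_{m})$ and $\l_{m}=\l p(y_{m})$ one obtains $\l\b-\wt{\l}^{*}\wt{\b}=\sum_{m}(\l_{m}-\wt{\l}_{m}^{*})y_{m}$, whence each original equation reduces to precisely the corresponding fictitious-market equation, which holds by the definition of $(\th^{*},\wt{\l}_{1}^{*},\dots,\wt{\l}_{n-D}^{*})$. I would then invoke Girsanov's theorem for general jump-diffusions with the constant drift vector $\th^{*}$ and $\varphi_{t}(y)=\wt{\l}^{*}\wt{f}^{*}(y)/(\l f(y))$: the corresponding Radon--Nikodym density factors as a Brownian exponential times $e^{(\l-\wt{\l}^{*})t}\prod_{i=1}^{N(t)}\varphi_{T_{i}}(Y_{i})$, which is exactly the change-of-measure density of the ``continuum of jump sizes'' proposition and a $\mP$-martingale of unit expectation (using that $\wt{f}^{*}$ vanishes wherever $f$ does, together with the integrability built into Lemma~\ref{exp mart}). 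Under the resulting $\mQ^{*}\sim\mP$, $\wt{W}$ is a Brownian motion, $\Psi$ has intensity $\wt{\l}^{*}\wt{f}^{*}(y)\,dy$ independent of $\wt{W}$, and by the preceding computation every discounted stock price is a $\mQ^{*}$-martingale; thus $\mQ^{*}$ is an EMM, and $\mQ^{*}|_{\cF_{T}}=\wt{\mQ}$ follows from the same conditioning argument used for the analogous complete-neglect theorem.

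For uniqueness I would argue as in the discrete trinomial-to-binomial and multinomial reductions already treated. Let $\wt{f}$ be an arbitrary uplift of $\wt{\mQ}$ consistent with all admissible reductions and set $g\coloneqq d\wt{f}/df$, well defined because $\wt{f}\sim f$. Being an uplift relative to the chosen partition already forces $\th=\th^{*}$, $\wt{\l}=\wt{\l}^{*}$, $\int_{B_{m}}g\,df=\wt{p}^{*}(y_{m})$ and $\int_{B_{m}}yg\,df=y_{m}\wt{p}^{*}(y_{m})$ for each $m$. The decisive point is that consistency forces $g$ to be $f$-a.e.\ constant on each cell $B_{m}$: comparing the given reduction with reductions that refine $B_{m}$ into smaller cells while coarsening other cells so the total count remains $n-D$, and requiring that the unique EMM of each such complete reduction, uplifted, again equal $\wt{f}$, squeezes $g|_{B_{m}}$ to a constant as the refinement ranges over all measurable splittings; this is the continuum form of the assertion that the conditional jump law within a batch is frozen, i.e.\ of the identities $\wt{\l}_{m}^{*}=\ov{\g}^{*}\d_{m}$ in the finitely supported case. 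Once $g|_{B_{m}}\equiv c_{m}$, the cell-mass condition gives $c_{m}=\wt{p}^{*}(y_{m})/p(y_{m})$, hence $\wt{f}=\sum_{m}c_{m}f\,\1_{B_{m}}=\wt{f}^{*}$; the same constancy property shows that $\wt{f}^{*}$ is itself consistent.

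The main obstacle is precisely this ``constant on each cell'' step: pinning down the exact meaning of ``consistent with all filtration reductions'' and showing it collapses the infinite-dimensional family of partition-dependent uplifts to the single function $\wt{f}^{*}$; the rest is routine. I expect the cleanest route to be the comparison argument just sketched---ranging over all re-partitions of $B$ into $n-D$ cells and exploiting that an arbitrage-free and complete reduction fixes its uplift uniquely---organised so that it reduces, cell by cell, to the finite combinatorics carried out in the multinomial proof above and in the discrete-time treatments of Grigorian and Jarrow 2023 \cite{key-12,key-13,key-14}, to which one may defer for the bookkeeping.
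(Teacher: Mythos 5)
The paper provides no formal proof of this theorem; it is stated immediately after an informal discussion arguing that requiring consistency across all partitions of $B$ eliminates every candidate density except $\wt{f}^{*}$. Your proposal takes essentially the same route, and in fact supplies considerably more substance than the paper does for the existence half: you verify that $\wt{f}^{*}$ is a bona fide density equivalent to $f$ with the required cell masses and conditional means, you show the original market-price-of-risk system collapses to the fictitious one via $\l\b-\wt{\l}^{*}\wt{\b}=\sum_{m}(\l_{m}-\wt{\l}_{m}^{*})y_{m}$, and you invoke the continuum-jump-size Girsanov proposition with $\varphi_{t}(y)=\wt{\l}^{*}\wt{f}^{*}(y)/(\l f(y))$ — none of which the paper writes out. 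Your uniqueness argument (consistency across all repartitions of $B$ forces $g\coloneqq d\wt{f}/df$ to be cell-wise constant, after which the mass constraint pins down the constants) is precisely the paper's informal claim, made slightly more precise by the refine/coarsen comparison you describe. The obstacle you flag — giving a formal definition of "consistent with all filtration reductions" and actually carrying out the squeeze to constancy over arbitrary measurable refinements — is a real gap, but it is one the paper also leaves open: the paper merely asserts that insisting on consistency with all partitions leaves $\wt{f}^{*}$ as "the only viable candidate," and both you and the paper ultimately defer to the discrete-time bookkeeping in Grigorian and Jarrow 2023. As a reconstruction of the paper's intended argument your proposal is faithful and, on the existence side, more complete than what the paper records.
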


\subsection{The General Theorem }

\textcolor{black}{Finally, we analyze the most general version of
the model. Here, we consider} a general jump-diffusion model with
both time-varying parameters (intensities, volatilities, mean rates
of return, money market rates) and continuously distributed jump sizes.
The fact that $\l_{t}$ is \textcolor{black}{a non-constant deterministic
function of time implies} that the distribution of the jump amplitudes
is also \textcolor{black}{time-varying. Because }we have done most
of the technical work in the prior sections, we only focus on the
key steps of the \textcolor{black}{filtration} reduction procedure.

We assume that we have \textcolor{black}{an original }market with
$n$ stocks, $D$ Brownian motions, and $n$ marked point processes
$\Psi_{i}(dt,dy)$ with the same deterministic intensity function
$\l_{t}(dy)$, where $D<n$. \textcolor{black}{Here, }the incompleteness
comes from the continuously distributed jumps. \textcolor{black}{The
stock dynamics are} 
\begin{align*}
dS_{i}(t)=\left(\a_{i}(t)+\int_{B}y\l_{t}(dy)\right)S_{i}(t)dt+S_{i}(t)\s_{i}(t)dW(t)+S_{i}(t-)\int_{B}y\wt{\Psi}_{i}(dt,dy),\quad i=1,...,n,
\end{align*}
where $B\subset(-1,\infty),i=1,...,n,$ is the mark space, equipped
with the Borel $\s$-algebra $\cB$, $\l_{t}(\cdot)$ is the measure
on the mark space defined via $\l_{t}(dy)=\l_{t}(B)\G_{t}(dy)$, with
$\G_{t}(dy)$ being the distribution of the jump size given an event
at time $t$, and $\wt{\Psi}_{i}(dt,dy)=\Psi_{i}(dt,dy)-\l_{t}(dy)dt,$ where $\Psi_{i}(dt,dy)$ is the random measure that governs the jumps.

\textcolor{black}{To simplify }\textcolor{black}{the notation and to
}\textcolor{black}{focus only on the new ideas,} \textcolor{black}{we
}\textcolor{black}{consider a single} jump process given by $\int_{0}^{t}\int_{B}y(\Psi_{i}(ds,dy)-\l_{s}(dy)ds).$ The \textcolor{black}{filtration reduction} procedure can be summarized
as follows:

(i) For each $i=1,...,n$, partition the mark space $B$ into $n-D$
disjoint $\cB$-measurable subsets $B_{1}^{i},...,B_{n-D}^{i}$ (different
as sets but of equal measure across $i$, i.e. $\l_{t}(B_{j}^{1})=\l_{t}(B_{j}^{2})=...=\l_{t}(B_{j}^{n})$
for all $j=1,...,n-D$ and for all $t$). We will drop the superscript
\textcolor{black}{when unnecessary. This gives} a collection of independent
Poisson processes $N^{B_{1}},...,N^{B_{n-D}}$ with intensities $\l_{t}(B_{1}),...,\l_t(B_{n-D})$,
respectively.

(ii) obtain the partial averages over the sets in the partition via
\begin{align*}
\ov{y}_{i,m}(t)=\frac{\int_{B_{m}^{i}}yf_{t}(y)dy}{\int_{B_{m}^{i}}f_{t}(y)dy},\quad m=1,...,n-D,i=1,...,n,
\end{align*}
where $f_{t}(y)dy=\G_{t}(dy)=\frac{\l_{t}(dy)}{\l_{t}(B)}$. Note
that the partial averages over $B_{m}^{i}$ will in general be different
across $i$. In fact, this must hold for the market price of risk
equations to have a solution.

(iii) Obtain $\wt{\l}_{t}^{*}(B_{m}),m=1,...,n-D,$ from the fictitious
complete market, and hence the probabilities 
\begin{align*}
\wt{\mP}^{*}\left(Y_{it}\in B_{m}^{i}\right)=\wt{\G}_{t}^{*}(B_{m}^{i})=\wt{p}^{*}(\ov{y}_{i,m}(t))=\frac{\wt{\l}_{t}^{*}(B_{m})}{\wt{\l}_{t}^{*}(B)},
\end{align*}
where we have intentionally \textcolor{black}{employed the different
notations }used throughout the text.

(iv) uplift \textcolor{black}{the unique EMM in the fictitious market}
via $\th_{1}^{*}(t),...,\th_{D}^{*}(t),\wt{\l}_{t}^{*}(dy)$ and the
densities 
\begin{align*}
\wt{f}_{t}^{*}(y)\coloneqq\sum_{m=1}^{n-D}\frac{f_{t}(y)\1_{B_{m}}(y)}{\int_{B_{m}}f_{t}(y)dy}\wt{p}^{*}(\ov{y}_{i,m}(t)).
\end{align*}
\textcolor{black}{We finally obtain the} main result. 
\begin{thm}
\textcolor{black}{(A Unique Consistent Uplifted EMM) }Assume a general
jump-diffusion model driven by Brownian motions and a marked point
process with a deterministic intensity function. Let the \textcolor{black}{fictitious
market be} arbitrage-free and complete. Then, there exists a unique
consistent uplifted EMM $\mQ^{*}$ given by (iv) above of the unique
EMM $\wt{\mQ}$ \textcolor{black}{from the}\textcolor{black}{{} fictitious
market}. 
\end{thm}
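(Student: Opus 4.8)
The plan is to establish that the measure $\mQ^*$ built in step~(iv) has all four properties the statement ascribes to it: (a) it is a well-defined probability measure equivalent to $\mP$ on $\cG_T$; (b) under $\mQ^*$ each discounted traded stock is a local martingale, hence --- under the standing integrability hypothesis --- a true martingale, so $\mQ^*$ is an EMM of the original market; (c) its restriction to $\cF_T$ is the unique fictitious-market EMM $\wt\mQ$; and (d) $\mQ^*$ is the only uplift of $\wt\mQ$ with property (c) that is consistent with every ordered filtration reduction. The engines of the argument are Girsanov's theorem for general jump-diffusions, the exponential-martingale Lemma~\ref{exp mart} together with the standing assumption $\mE_\mP[S_t^{\wh B^c}]=1$, and the optional-projection factorization $S=S^{\wh B}S^{\wh B^c}$ established in the earlier sections; all of these transfer to the marked point process with deterministic intensity $\l_t(dy)$ because, by Proposition~\ref{determ is poisson}, its restrictions to the cells $B_m^i$ are independent inhomogeneous Poisson processes.

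For (a) and (b) I would write the Radon--Nikodym process as the product of the Brownian factor driven by $\th^*=(\th_1^*,\dots,\th_D^*)$ and the jump factors $dZ_i(t)=Z_i(t-)\int_B(\varphi_t(y)-1)\,[\Psi_i(dy,dt)-\l_t(dy)\,dt]$, with $\varphi_t(y)=\wt\l_t^*(dy)/\l_t(dy)$. Substituting the explicit density $\wt f_t^*$ shows that $\varphi_t$ is piecewise constant on the partition, equal to $\wt\l_t^*(B_m^i)/\l_t(B_m^i)>0$ on $B_m^i$; hence $\varphi_t>0$, so $\mQ^*\sim\mP$, and $\mE_\mP Z^*(T)=1$ by the integrability hypothesis, so $\mQ^*$ is a probability measure. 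Girsanov's theorem for general jump-diffusions then gives that under $\mQ^*$ each $\Psi_i$ has predictable intensity $\wt\l_t^*(dy)$ and $\wt W^j(t):=W^j(t)-\int_0^t\th_j^*(s)\,ds$ is a Brownian motion. Feeding these dynamics into the stock SDEs, the finite-variation part of each discounted price collapses to an expression that vanishes exactly when the fictitious market's market-price-of-risk equations are solved by $(\th^*,\wt\l_t^*(B_1),\dots,\wt\l_t^*(B_{n-D}))$; the only new input is the identity $\int_B y\,\wt\l_t^*(dy)=\sum_{m}\wt\l_t^*(B_m^i)\,\ov y_{i,m}(t)$ (and likewise $\int_B y\,\l_t(dy)=\sum_m\l_t(B_m^i)\,\ov y_{i,m}(t)$), which holds precisely because $\wt f_t^*$ has $f_t$-conditional mean $\ov y_{i,m}(t)$ on each cell. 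Since the arbitrage-free-and-complete hypothesis guarantees this solution (and the second fundamental theorem of asset pricing gives $\wt\mQ$), the discounted stocks are $\mQ^*$-local martingales, and true martingales under the standing integrability assumption.

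For (c), note that $\varphi_t$ sees a jump only through which cell it falls in; therefore $Z_i^*(t)=\exp\{\sum_m(\ln\tfrac{\wt\l_t^*(B_m^i)}{\l_t(B_m^i)})\,N^{B_m^i}(t)+\int_0^t\sum_m(1-\tfrac{\wt\l_s^*(B_m^i)}{\l_s(B_m^i)})\,\l_s(B_m^i)\,ds\}$ up to the deterministic Brownian factor, which is a functional of the cell-counting processes $N^{B_m^i}$ and of $W$, both $\cF$-adapted; so $Z^*(T)$ is $\cF_T$-measurable and, term by term, equals the Radon--Nikodym density $\wt Z(T)$ of $\wt\mQ$ relative to $\mP$ on $\cF_T$ obtained in the fictitious market. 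Hence $\mQ^*(A)=\int_A Z^*(T)\,d\mP=\int_A\wt Z(T)\,d\mP=\wt\mQ(A)$ for all $A\in\cF_T$, which is exactly the identity $\mQ^*|_{\cF_T}=\wt\mQ$ proved in the constant-coefficient case, now with the optional-projection factorization doing the bookkeeping. For (d), any uplift of $\wt\mQ$ to an EMM of the original market is forced to be of the form $(\th^*,\wt\l_t^*(dy))$ with $\wt\l_t^*(dy)=\wt\l_t^*(B)\wt f_t(y)\,dy$ for some density $\wt f_t\sim f_t$ whose only constraints are $\int_{B_m^i}\wt f_t=\wt p^*(\ov y_{i,m}(t))$ and cell-conditional mean $\ov y_{i,m}(t)$ --- conditions met by a continuum of densities. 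Running the partition-refinement argument of the multinomial and continuously-distributed subsections (as in Grigorian and Jarrow 2023~\cite{key-12,key-13,key-14}): consistency between two partitions sharing a common coarsening forces the cell-conditional means of $\wt f_t$ to coincide with those of $f_t$ on the shared cells, and letting the partitions range over all admissible refinements forces $\wt f_t$ to be a cell-wise rescaling of $f_t$, i.e. $\wt f_t=\wt f_t^*$; conversely $\wt f_t^*$ is consistent with every admissible partition because its conditional law on each cell is the $f_t$-conditional law. This gives existence and uniqueness.

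I expect the main obstacle to lie not in the market-price-of-risk algebra --- which is dictated by the earlier sections --- but in the measure-theoretic control: verifying that $Z^*$ and the optionally projected price are genuine $\mP$-martingales rather than merely local ones, which is exactly what Lemma~\ref{exp mart} and the hypothesis $\mE_\mP[S_t^{\wh B^c}]=1$ are there to buy, and confirming that the optional projection $\mE_\mP[S(t)\mid\cF_t]$ really coincides with the pathwise expression $S^{\wh B}(t)$ when the marks are both time-varying and continuously supported. A secondary point to handle carefully is the $i$-dependence of the partitions $B_m^i$: the uplift density $\wt f_t^*$ must be taken stock by stock, and one has to check that the single jump intensity $\wt\l_t^*(dy)$ entering the common Girsanov density is compatible with all $n$ cell decompositions at once --- which it is, precisely because $\l_t(B_m^i)$ is independent of $i$ by the equal-measure requirement imposed in step~(i).
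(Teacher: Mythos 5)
Your proposal is correct and follows essentially the paper's own route: the paper does not give a self-contained proof here but rather presents the construction in steps (i)--(iv) and points to the consistency arguments already developed for the multinomial and continuously-distributed subsections; you have simply spelled those steps out in full (equivalence and normalization of $Z^*$, Girsanov, the market-price-of-risk algebra, restriction to $\cF_T$, and the partition-refinement uniqueness argument). The only addition worth flagging is that you explicitly surface the two points the paper leaves tacit --- that $\mE_\mP Z^*(T)=1$ comes from the standing integrability hypothesis and Lemma~\ref{exp mart}, and that the $i$-dependence of the cells $B_m^i$ is harmless because of the equal-measure requirement $\l_t(B_m^1)=\dots=\l_t(B_m^n)$ --- both of which are indeed exactly where the gaps would be if one tried to make the paper's sketch airtight.
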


\section{Pricing and Hedging in the Original Market}

\textcolor{black}{Consider now a $\cG_{T}$-measurable random variable
$C$ in the original market. The unique uplifted consistent EMM $\mQ^{*}$
can be used to price any derivative $C$ in the original market and
obtain }\textcolor{black}{its risk}\textcolor{black}{-neutral value
$\mE_{\mQ^{*}}(C)$. In the following, we assume we are given an EMM
$\wt{\mQ}$ in the fictitious market under which }\textcolor{black}{any
derivative's synthetically constructed value processes are}\textcolor{black}{{}
martingales.}

\textcolor{black}{Let $\wt{C}=\mE_{\mQ^{*}}(C|\cF_{T})$ be a derivative
in the fictitious economy corresponding to $C$, but $\cF_{T}$-measurable.
Because this derivative is in the} fictitious market, it can be synthetically
constructed using the\textcolor{black}{{} price process} $\wt{S}$ in
the fictitious market. Indeed, the filtration $\mF$ is generated
by the (potentially reduced vector of) Brownian motions and the reduced
$\wt{\P}$. Hence, we can apply the general martingale representation
theorem for jump-diffusions. Namely, there exists an admissible trading
strategy $\wt{\p}$ given by $\wt{\a}=\{\wt{\a}(t),t\in[0,T]\}$ and
$\wt{h}=\{\wt{h}_{t}(y)\}$ such that 
\begin{align*}
\wt{C}=\wt{V}_{T}(\wt{\p})=\wt{V}_{0}+\int_{0}^{T}\wt{\a}(t)\cdot d\wt{S}(t)+\int_{0}^{T}\int_{B}\wt{h}_{t}(y)\wt{\P}(dt,dy),
\end{align*}
where we have used the notation\textcolor{black}{{} }$\a(t)\cdot dS(t)\coloneqq\sum_{j}\a_{j}(t)dS_{j}(t)$.

\subsection{Cost of construction}

Recall that the value process \textcolor{black}{in the fictitious
market} is given by $\wt{C}_{t}=\wt{V}_{t}(\wt{\p})=\mE_{\wt{\mQ}}(\wt{C}|\cF_{t})$.
The cost of construction of the trading \textcolor{black}{strategy
is} 
\begin{align*}
\mE_{\wt{\mQ}}(\wt{C})=\wt{V}_{0}+\mE_{\wt{\mQ}}\left(\int_{0}^{T}\wt{\a}(t)\cdot d\wt{S}(t)\right)+\mE_{\wt{\mQ}}\left(\int_{0}^{T}\int_{B}\wt{h}_{t}(y)\wt{\P}(dt,dy)\right)=\wt{V}_{0},
\end{align*}
since $\int_{0}^{t}\wt{\a}_{s}\cdot d\wt{S}_{s}$ and $\int_{0}^{t}\int_{B}\wt{h}_{s}(y)\wt{\P}(ds,dy)$
are martingales under $\wt{\mQ}$. \textcolor{black}{Given} $\wt{C}$
is $\cF_{T}$-measurable and $\mQ^{*}|_{\cF_{T}}=\wt{\mQ},$ it follows that $\mE_{\wt{\mQ}}(\wt{C})=\mE_{\mQ^{*}}(\wt{C}).$ But $\mE_{\mQ^{*}}(\wt{C})=\mE_{\mQ^{*}}(\mE_{\mQ^{*}}(C|\cF_{T}))=\mE_{\mQ^{*}}(C)$ by the law of iterated expectations, therefore $\mE_{\wt{\mQ}}(\wt{C})=\mE_{\mQ^{*}}(C).$ This means that the cost of construction of $\wt{C}=\mE_{\mQ^{*}}(C|\cF_{T})$
in the fictitious market is the arbitrage-free price for $C$ in the
original market \textcolor{black}{under the consistent uplifted }EMM
$\mQ^{*}.$

\subsection{Non-priced Hedging Error}

In the fictitious market, we know there is an admissible trading strategy
$\wt{\p}$ that generates $\wt{C}=\wt{V}_{T}(\p)=\wt{V}_{0}+\int_{0}^{T}\wt{\a}(t)\cdot d\wt{S}(t)+\int_{0}^{T}\int_{B}\wt{h}_{t}(y)\wt{\P}(dt,dy)$
with the cost of construction $\mE_{\wt{\mQ}}(\wt{C})=\wt{V}_{0}$.
If $\wt{\p}$ is admissible with respect to $S$ and $\wt{\Psi}$
and we use this trading strategy in the original economy with respect
to $C$, it generates \textcolor{black}{the time $T$ payoff }
\begin{align*}
\wt{V}_{0}+\int_{0}^{T}\wt{\a}(t)\cdot dS(t)+\int_{0}^{T}\int_{B}\wt{h}_{t}(y)\wt{\Psi}(dt,dy).
\end{align*}
The \textit{hedging error} is thus 
\begin{align*}
\e(\w)\coloneqq C(\w)-\wt{V}_{0}-\int_{0}^{T}\wt{\a}(t)\cdot dS(t)-\int_{0}^{T}\int_{B}\wt{h}_{t}(y)\wt{\Psi}(dt,dy).
\end{align*}
Taking expectations, we get 
\begin{align*}
\mE_{\mQ^{*}}(\e) & =\mE_{\mQ^{*}}(C)-\mE_{\mQ^{*}}(\wt{V}_{0})-\mE_{\mQ^{*}}\left(\int_{0}^{T}\wt{\a}(t)\cdot dS(t)\right)-\mE_{\mQ^{*}}\left(\int_{0}^{T}\int_{B}\wt{h}_{t}(y)\wt{\Psi}(dt,dy)\right)\\
 & =\mE_{\wt{\mQ}}(\wt{C})-\wt{V}_{0}-\mE_{\mQ^{*}}\left(\int_{0}^{T}\wt{\a}(t)\cdot dS(t)\right)-\mE_{\mQ^{*}}\left(\int_{0}^{T}\int_{B}\wt{h}_{t}(y)\wt{\Psi}(dt,dy)\right),
\end{align*}
where we have used $\mE_{\mQ^{*}}(C)=\mE_{\wt{\mQ}}(\wt{C})$. But,
$\mE_{\wt{\mQ}}(\wt{C})=\wt{V}_{0}$, hence 
\begin{align*}
\mE_{\mQ^{*}}(\e)=-\mE_{\mQ^{*}}\left(\int_{0}^{T}\wt{\a}(t)\cdot dS(t)\right)-\mE_{\mQ^{*}}\left(\int_{0}^{T}\int_{B}\wt{h}_{t}(y)\wt{\Psi}(dt,dy)\right).
\end{align*}
\textcolor{black}{Thus, }under $\mQ^{*}$, the hedging error is not
priced if and only if \\
 $\mE_{\mQ^{*}}\left(\int_{0}^{T}\wt{\a}(t)\cdot dS(t)+\int_{0}^{T}\int_{B}\wt{h}_{t}(y)\wt{\Psi}(dt,dy)\right)=0$.
Various conditions ensure \textcolor{black}{this equality,} but we
do not\textcolor{black}{{} pursue these conditions here}.

\section{Conclusion}

There exists \textcolor{black}{a vast literature on the} different criteria
for choosing an EMM/ELMM \textcolor{black}{to price derivatives} in
an incomplete market. Our method of choosing a unique EMM relies entirely
on an information-based argument. Specifically, a trader chooses a
subset of the information flow that matters to them \textcolor{black}{-
a filtration reduction -} and under these information constraints
obtains a unique EMM. While there are potentially infinitely many
possible \textcolor{black}{filtration }reductions, two distinctly
different approaches\textcolor{black}{{} are studied in this paper}. One
consists in ignoring a particular \textcolor{black}{subset of risks
to obtain a measure} that does \textcolor{black}{not hedge these} risks.
The other consists in\textcolor{black}{{} aggregating} certain sources
\textcolor{black}{of }\textcolor{black}{risks to be hedged on average}\textcolor{black}{{}
to obtain a measure that prices the individual risks in the batch
according to their weight }in the \textcolor{black}{aggregated composite
structure of} risks. These \textcolor{black}{weights are }the conditional
probabilities, which matches with the results obtained in the discrete-time
and Brownian settings.

The key idea of our methodology is that \textbf{information reduction
uniquely determines the equivalent martingale measure} \textcolor{black}{for
pricing} derivatives in an incomplete market. It should be noted,
however, that while in principle every reduction produces a corresponding
measure, it may not be always possible to obtain closed-form expressions
for projected stock prices, density processes, uplifts and other,
sometimes auxiliary, but still important objects. We have focused
our attention on those cases that are most amenable \textcolor{black}{to
these transformations and produce explicit formulas that can be applied
directly in practice.}

\end{document}